\documentclass[10pt,journal]{IEEEtran}
\usepackage{cite}
\usepackage{amsmath,amssymb,amsfonts}
\usepackage{amsthm}
\usepackage{graphicx}
\usepackage[justification=centering]{caption}
\usepackage{textcomp}
\usepackage{graphicx}   
\usepackage{mathtools} % Bonus

\newtheorem{lemma}{Lemma}
\usepackage{amsmath}
\usepackage{algpseudocode}
\newtheorem{theorem}{Theorem}
\usepackage{algorithm}  
\usepackage{setspace}
\usepackage{nomencl}

\usepackage{color}
\usepackage{bm}
\usepackage{enumerate}
\usepackage{titlesec}
\usepackage{algpseudocode}
\makenomenclature
\usepackage{multirow}
\usepackage{subfigure}
\usepackage{bm} 
\usepackage{makecell}

\usepackage[utf8]{inputenc}
\usepackage{multirow}
\usepackage[T1]{fontenc}
\usepackage{stfloats}
\usepackage{booktabs}
%\captionsetup{belowskip=-10pt}
\usepackage{mathrsfs}
\usepackage{xcolor}

\usepackage{times}
\usepackage{textcomp}

\usepackage[justification=justified]{caption}
\allowdisplaybreaks[4]

% *** GRAPHICS RELATED PACKAGES ***
%
\ifCLASSINFOpdf
  % \usepackage[pdftex]{graphicx}
  % declare the path(s) where your graphic files are
  % \graphicspath{{../pdf/}{../jpeg/}}
  % and their extensions so you won't have to specify these with
  % every instance of \includegraphics
  % \DeclareGraphicsExtensions{.pdf,.jpeg,.png}
\else
  % or other class option (dvipsone, dvipdf, if not using dvips). graphicx
  % will default to the driver specified in the system graphics.cfg if no
  % driver is specified.
  % \usepackage[dvips]{graphicx}
  % declare the path(s) where your graphic files are
  % \graphicspath{{../eps/}}
  % and their extensions so you won't have to specify these with
  % every instance of \includegraphics
  % \DeclareGraphicsExtensions{.eps}
\fi
% graphicx was written by David Carlisle and Sebastian Rahtz. It is
% required if you want graphics, photos, etc. graphicx.sty is already
% installed on most LaTeX systems. The latest version and documentation
% can be obtained at: 
% http://www.ctan.org/pkg/graphicx
% Another good source of documentation is "Using Imported Graphics in
% LaTeX2e" by Keith Reckdahl which can be found at:
% http://www.ctan.org/pkg/epslatex
%
% latex, and pdflatex in dvi mode, support graphics in encapsulated
% postscript (.eps) format. pdflatex in pdf mode supports graphics
% in .pdf, .jpeg, .png and .mps (metapost) formats. Users should ensure
% that all non-photo figures use a vector format (.eps, .pdf, .mps) and
% not a bitmapped formats (.jpeg, .png). The IEEE frowns on bitmapped formats
% which can result in "jaggedy"/blurry rendering of lines and letters as
% well as large increases in file sizes.
%
% You can find documentation about the pdfTeX application at:
% http://www.tug.org/applications/pdftex

% correct bad hyphenation here
\hyphenation{op-tical net-works semi-conduc-tor}

\begin{document}

\title{Enhancing Federated Learning with spectrum allocation optimization and device selection
\thanks{T. Zhang, K. Lam, and J. Zhao are with the Strategic Centre for Research in Privacy-Preserving Technologies and Systems, and the School of Computer Science and Engineering at Nanyang Technological University, Singapore. (Emails: tinghao001@e.ntu.edu.sg; kwokyan.lam@ntu.edu.sg; junzhao@ntu.edu.sg). 

F. Li is with the Strategic Centre for Research in Privacy-Preserving Technologies and Systems at Nanyang Technological University, Singapore. (Email: fengli2002@yeah.net).

Huimei Han is with College of Information Engineering, Zhejiang University of Technology, Hangzhou,
Zhejiang, 310032, P.R. China (Email: hmhan1215@zjut.edu.cn).

Norziana Jamil is with College of Computing \& Informatics, University Tenaga Nasional, Putrajaya Campus,
Selangor, Malaysia (Email: Norziana@uniten.edu.my).

Corresponding author: Jun Zhao

This research/project is supported by the National Research Foundation,
Singapore under its Strategic Capability Research Centres Funding Initiative.
Any opinions, findings and conclusions or recommendations expressed in this
material are those of the author(s) and do not reflect the views of National
Research Foundation, Singapore.
}
}

\author{\IEEEauthorblockN{Tinghao Zhang, Kwok-Yan Lam, \emph{Senior Member}, IEEE, Jun Zhao, \emph{Member}, IEEE, Feng Li, \emph{Member}, IEEE, Huimei Han, Norziana Jamil}}%\vspace{-1em} 

% The paper headers
\markboth{IEEE/ACM Transactions on Networking}%
{Shell \MakeLowercase{\textit{et al.}}: Bare Demo of IEEEtran.cls for Computer Society Journals}
% The only time the second header will appear is for the odd numbered pages
% after the title page when using the twoside option.
% 
% *** Note that you probably will NOT want to include the author's ***
% *** name in the headers of peer review papers.                   ***
% You can use \ifCLASSOPTIONpeerreview for conditional compilation here if
% you desire.

\IEEEtitleabstractindextext{%
\begin{abstract}

Machine learning (ML) is a widely accepted means for supporting customized services for mobile devices and applications. Federated Learning (FL), which is a promising approach to implement machine learning while addressing data privacy concerns, typically involves a large number of wireless mobile devices to collect model training data. Under such circumstances, FL is expected to meet stringent training latency requirements in the face of limited resources such as demand for wireless bandwidth, power consumption, and computation constraints of participating devices. Due to practical considerations, FL selects a portion of devices to participate in the model training process at each iteration. Therefore, the tasks of efficient resource management and device selection will have a significant impact on the practical uses of FL. In this paper, we propose a spectrum allocation optimization mechanism for enhancing FL over a wireless mobile network. Specifically, the proposed spectrum allocation optimization mechanism minimizes the time delay of FL while considering the energy consumption of individual participating devices; thus ensuring that all the participating devices have sufficient resources to train their local models. In this connection, to ensure fast convergence of FL, a robust device selection is also proposed to help FL reach convergence swiftly, especially when the local datasets of the devices are not independent and identically distributed (non-iid). Experimental results show that (1) the proposed spectrum allocation optimization method optimizes time delay while satisfying the individual energy constraints; (2) the proposed device selection method enables FL to achieve the fastest convergence on non-iid datasets.

\end{abstract}

% Note that keywords are not normally used for peerreview papers.
\begin{IEEEkeywords}
Federated Learning, Spectrum Allocation Optimization, Device Selection, Wireless Mobile Networks
\end{IEEEkeywords}}

% make the title area
\maketitle

\IEEEdisplaynontitleabstractindextext
% \IEEEdisplaynontitleabstractindextext has no effect when using
% compsoc or transmag under a non-conference mode.

% For peer review papers, you can put extra information on the cover
% page as needed:
% \ifCLASSOPTIONpeerreview
% \begin{center} \bfseries EDICS Category: 3-BBND \end{center}
% \fi
%
% For peerreview papers, this IEEEtran command inserts a page break and
% creates the second title. It will be ignored for other modes.
% \IEEEpeerreviewmaketitle

% \IEEEraisesectionheading{\section{Introduction}\label{sec:introduction}}
\section{Introduction}~\label{sec:introduction}
Machine learning (ML) has been widely used to help automate decision-making and improve business efficiency. However, traditional ML relies on centralized training approaches, hence leading to data privacy concerns~\cite{Drainakis20,Liu20}. As mobile application services have achieved remarkable successes, there has been an emerging trend of employing Federated Learning (FL) over wireless mobile network~\cite{Helin20,Li21,Zhao21}. As a large-scale data-intensive distributed system, FL involves a large number of wireless mobile devices to train a shared model given a training time budget and a resource budget.

Being a decentralized form of ML, FL performs ML without aggregating user data to a centralized server, thus making it easier to comply with the relevant data privacy regulations~\cite{Viraaji21}. At each global iteration in FL, the wireless devices first download initialized parameters of a global model from a remote server. Then, the devices train local models using their local available data in parallel. After several local iterations, the devices send their parameters to the server, and the global model is updated using the uploaded local parameters. The learning process is repeated until the global model achieves a preset accuracy. Fig.~\ref{FL_Framework} depicts the mechanism of a typical FL paradigm. 
\begin{figure}[t]
  \centering
  \includegraphics[width=8.5cm]{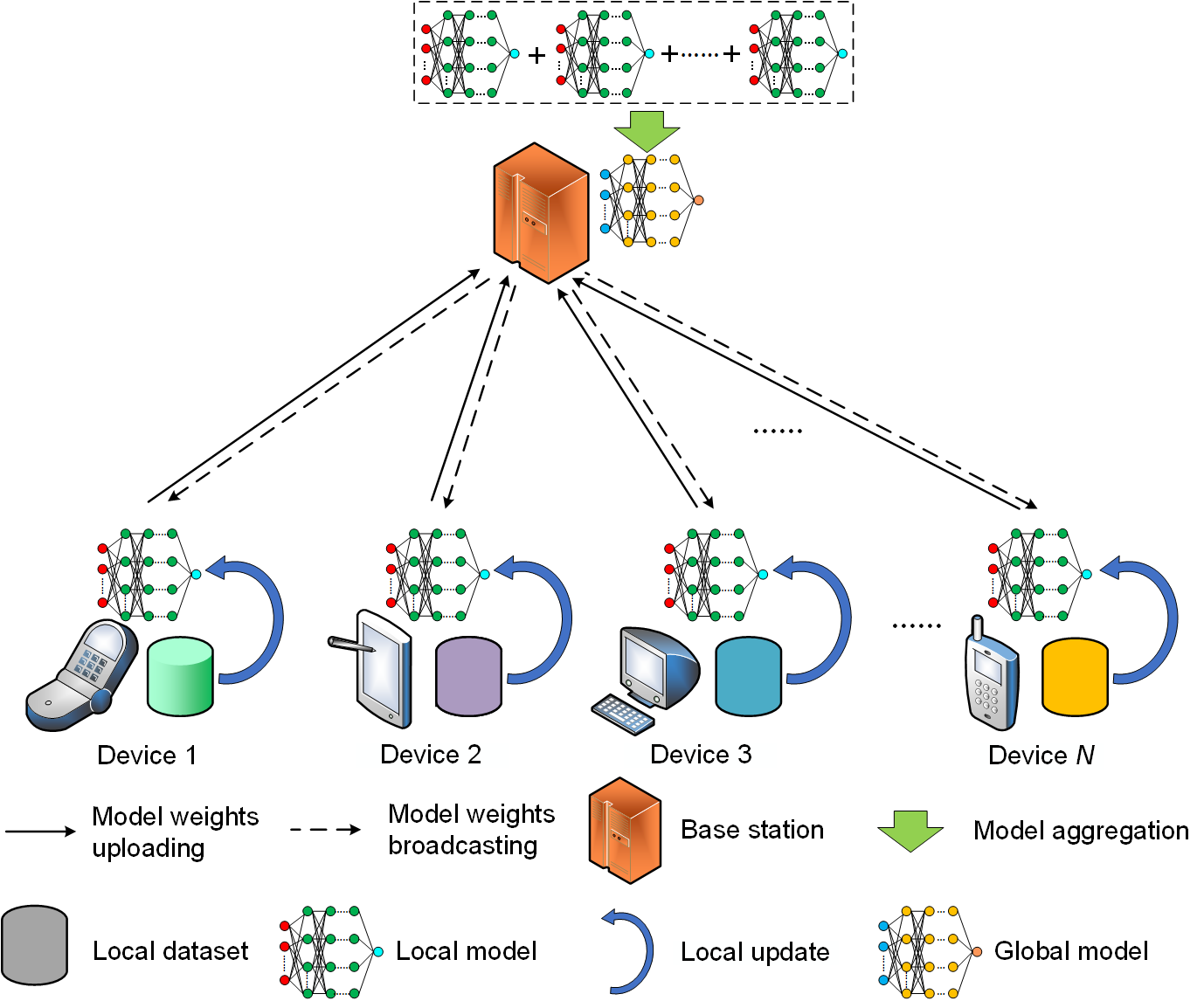}
\caption{Federated learning paradigm. At each iteration, the devices perform the local update on the local datasets. Then, the model weights are transmitted to a base station. Finally, the base station conducts model aggregation and broadcasts the aggregated model weights to the devices.}
\label{FL_Framework}
\end{figure}

Despite the benefits, there are key challenges that limit the applications of FL over wireless mobile networks~\cite{Li20,Aledhari20,Wei20,Brik20}. To date, FL has been applied in a wide variety of fields~\cite{Lim20,Kim20,Theodora18,Meng20}. For example, mobile service providers such as Apple and Google investigate FL for enabling enhanced user experiences through learning from users' behavior~\cite{McMahan20,Hao20}. However, in such important application scenarios, FL needs to deal with practical challenges~\cite{Khan21}. Specifically, FL usually involves a large number of mobile devices to ensure sufficient training data. In this case, it is challenging to deploy FL due to the limited bandwidth resources. In addition, the energy constraints of the individual devices also increase the difficulty of implementing FL. Although most existing spectrum allocation optimization methods mainly focus on total energy consumption, the energy cost of individual devices should be considered. Therefore, an effective spectrum allocation optimization method is critical to the adoption of FL in important applications.

In many scenarios, however, bandwidth resources are still insufficient even though the sophisticated spectrum allocation optimization methods are adopted, hence necessitating device selection methods~\cite{Chen20,Howard20,Amiri20}. As device selection allows FL to choose a subset of wireless devices at each global iteration, the problem incurred by bandwidth limitation is significantly alleviated. Unlike traditional distributed ML which assumes that the datasets are independent and identically distributed (iid)~\cite{Nilsson18}, FL needs to consider the statistical properties of the local datasets and deals with non-iid datasets~\cite{Niknam20}. To handle non-iid datasets in FL, \cite{Wang20} mentioned a benchmark which performs device clustering based on the model weights of the local devices and randomly selects the same number of devices from each cluster. However, this method cannot guarantee that the selected devices are the optimal ones to help FL achieve convergence swiftly. Besides, this method suffers from high training latency as the model weights for training the clustering algorithm are usually high-dimensional.

In this paper, we enhance the performance of FL by proposing an energy-efficient spectrum allocation optimization method and a weight divergence based device selection method. Specifically, the energy-efficient spectrum allocation optimization method aims to optimize the time delay of FL given the energy constraint of each local device. For the device selection problem, we train the K-means to perform device clustering while ensuring low training latency and high clustering performance. Then, we carry out device selection based on the weight divergence between the local model and the global model. The main contributions of our works include:

\begin{enumerate}
    \item We formulate an optimization problem to minimize the time delay of training the entire FL algorithm with the consideration of the energy constraints of the local device. We decompose the optimization problem into two subproblems: spectrum allocation optimization under one global iteration and device selection for each global iteration.

    \item We propose an energy-efficient spectrum allocation optimization method for FL. The proposed method minimizes the sum of the computation and transmission delay under one global iteration while making sure that the energy consumption of each device meets the energy constraint. To this end, we prove that the formulated problem is convex. Then, the problem is solved by KKT conditions and a bisection method, and the optimal solutions of computation capacity and bandwidth are obtained.

    \item For device clustering, we use the model weights of a certain layer in CNNs as the feature vectors to train the K-means algorithm. Compared with training K-means by all the model weights, the proposed training method enables K-means to show better clustering performance. Moreover, the training time is considerably reduced to ensure low latency in FL. 

    \item To overcome the problem incurred by non-iid datasets, we propose a weight divergence based device selection method. At each global iteration, the weight divergence between the local model and the global model is measured. According to the experiments, the global model will gain the highest accuracy on the cluster if the device with the largest weight divergence in this cluster is selected. In this case, for each cluster, the device with the largest weight divergence is chosen to join the local update.

    \item Sufficient numerical experiments show that the proposed spectrum allocation optimization method is capable of minimizing the delay within certain energy budgets and outperforms other baseline approaches. The proposed device selection method is evaluated on multiple datasets. The FL achieves the fastest convergence with the help of the proposed method compared with other device selection methods. The proposed FL framework minimizes the time delay of training the entire FL algorithm.
\end{enumerate}

The rest of the paper is organized as follows. The related work is reviewed in Section~\ref{section2}. System model and problem formulation are defined in Section~\ref{section3}. The weight divergence based device selection method is described in Section~\ref{section5}. The proposed spectrum allocation optimization method is provided in Section~\ref{section4}. Experimental results are provided in Section~\ref{section6} followed by the conclusion in Section~\ref{section7}.

\section{Related Work}~\label{section2}
This paper focuses on spectrum allocation optimization and device selection for FL. While existing works are dealing with spectrum allocation optimization and device selection for FL, our proposed method improves from existing schemes by addressing a number of practical considerations. Specifically, the proposed spectrum resource optimization scheme optimizes the bandwidth and CPU frequency of each local device, and the individual energy constraints are considered in the optimization problem; the proposed device selection method aims to speed up the convergence on non-iid datasets.

For spectrum allocation optimization to enhance FL, \cite{Zhu19} proposed a low-latency multi-access scheme for edge learning to minimize the communication latency of FL. \cite{Yang20} focused on minimizing the FL completion time by determining the tradeoff between computation delay and transmission delay. These works mainly considered minimizing the delay without considering energy consumption, while our proposed method considers not only time delay but also individual energy costs. \cite{Luo20} proposed a hierarchical framework for federated edge learning to jointly optimize computation and communication resources. \cite{Dinh21} illustrated two tradeoffs in FL and proposed FEDL to jointly optimize the FL completion time and energy consumption. These works considered both delay and energy by introducing an importance weight, though did not investigate the means for obtaining the optimal weight given the resources and time constraints. Our proposed method balances the tradeoff between delay and energy adaptively given the individual energy constraint. \cite{Vu20} proposed a novel optimization algorithm to minimize the training time for cell-free massive multiple-input multiple-output systems. \cite{Wang19} studied the convergence bound for FL and propose a control algorithm to optimize the frequency of global aggregation under a resource budget constraint. \cite{Yang21} proposed an iterative algorithm to minimize the total energy consumption of FL and designed a bisection-based algorithm to optimize the time delay. Compared with our proposed method, these works did not consider the limited battery power of the individual local device.

For device selection, \cite{McMahan17} proposed FedAvg that randomly selects a portion of local devices. \cite{Abdulrahman21} proposed a multicriteria-based approach for client selection to reduce total number of global iterations and optimize the network traffic. \cite{Nishio19} proposed a new FL protocol to select devices based on resource conditions to accelerate performance improvement. \cite{shi20} designed a greedy algorithm for device selection to balance the trade-off between the total number of global iterations and the latency. Compared with our proposed method, these works did not consider the data distribution of local datasets or assumed the local datasets to be iid. Recently, \cite{Wang20} proposed Favor that adopts double deep Q-learning~\cite{Hado15} to perform active device selection on non-iid local datasets. However, a new policy network has to be trained from scratch when the datasets are changed, while our approach aims to remain effective when directly applied to other FL tasks with different datasets.

There are some works that investigate both spectrum allocation optimization and device selection~\cite{Chen21,Chen2021,Shi21,Zeng20,Tengchan20,Xu21}. However, the impact of CPU frequency were not considered by \cite{Shi21,Zeng20,Tengchan20,Xu21}; whereas, individual energy constraints were not addressed in \cite{Chen21,Shi21,Zeng20}; the data distribution of the local datasets were not addressed by \cite{Zeng20,Tengchan20,Chen2021}.

\section{System Model and Problem Formulation}~\label{section3} 
We consider a wireless mobile network that consists of one server and $N$ users that are indexed by $\mathcal{N} = \left\{1,2,...,N\right\}$. Each user $n$ contains its local dataset $\mathcal{D}_n = \left\{(\bm{x}_i \in \mathbb{R}^d,y_i \in \mathbb{R}) \right\}^{D_n}_{i=1}$ with $D_n$ data samples, where $\bm{x}_i$ denotes the $i$-th $d$-dimensional input vector, and $y_i$ is its corresponding label. The models trained on the local devices are called the local models, while the model trained on the server is called the global model.

\subsection{FL Training}
We define the model weights as $\bm{w}$. The loss function of $i$-th sample is defined as $f_i(\bm{w})$. For each device $n$ with $D_n$, the loss function for updating the model is
\begin{equation}
    F_n(\bm{w}) = \frac{1}{D_n} \sum_{i=1}^{D_n}f_i(\bm{w}).
\end{equation}
The goal of FL is to minimize the following global loss function $F(\bm{w})$ on the whole dataset~\cite{McMahan17}
\begin{equation}
    \mathop{\rm min}\limits_{\bm{w}}F(\bm{w}) = \sum^{N}_{n}\frac{D_n}{D}F_n(\bm{w}),
    \label{globalFunc}
\end{equation}
where $D = \sum_{n=1}^N D_n$. Algorithm~\ref{FL} provides a typical FL paradigm to solve the problem~\eqref{globalFunc}~\cite{McMahan17,Zeng20,Ren20}. At $k$-th global iteration, a subset of local devices $\mathcal{S}_k \subseteq \mathcal{N}$ with $S_k$ devices is formulated based on a device selection method. According to line 7 in Algorithm~\ref{FL}, the local models adopt the gradient descent (GD) algorithm for local training. As GD requires to span over all the samples in the local dataset, it causes a long training time if the size of the local dataset is large. To solve this problem, stochastic gradient descent (SGD) is an effective method since it only uses part of samples to train the local model~\cite{Bottou12}.

\begin{algorithm}[t]
\textsl{}\setstretch{1.1}
\caption{Federated learning framework}
\begin{algorithmic}[1] 
\Require{Set of local devices $\mathcal{N} = \{1,2,..,N\}$, maximum local iteration number $L$, learning rate $\zeta$, target accuracy $A$. }
\Ensure{Global model $w$}
\State Initialize the global model $\bm{w}^k$, where $k = 1$;
\Repeat
    \State A subset of local devices $\mathcal{S}_k \subset \mathcal{N}$ is formulated;
    \State The server broadcasts the global model $\bm{w}^k$ to the
    \Statex \quad\ \ local devices in $\mathcal{S}_k$;
    \For{each local device $n \in \mathcal{S}_k\  \textbf{in parallel}$}
        \State Initialize $i = 0$;
        \While{$i < L$}
            \State The local device performs local update  \begin{equation}
                    \bm{w}_{n,k}(i+1) = \bm{w}_{n,k}(i) - \zeta \nabla F_n(\bm{w}_{n,k}(i));
                    \label{localtrain}
                    \end{equation}
            \State $i = i+1$;
        \EndWhile
        \State The local device sends $\bm{w}_{n,k}$ to the server;
    \EndFor
    \State The server performs global aggregation    \begin{equation}
        \bm{w}^k = \frac{\sum_{n \in \mathcal{S}_k}D_n\bm{w}_{n,k}}{\sum_{n \in \mathcal{S}_k}D_n}
        \label{aggregation}
    \end{equation}
\State $k = k+1$
\Until{The accuracy of the global model achieves $A$}
\end{algorithmic}  
\label{FL}
\end{algorithm}

\subsection{Computation and Communication Model}
We formulate the computation and communication model to calculate the energy cost and the FL completion time for FL. Let $C_n$ be the number of CPU cycles for device $n$ to compute one sample data. We denote CPU frequency of device $n$ by $f_n$. We assume all sample data have the same size, the total CPU cycles for each global aggregation at device $n$ is $LC_nD_n$, where $L$ denotes the maximum number of local iterations. For local device $n$, the computation time of $L$ local iterations is:
\begin{equation}
    t^{\text{cmp}}_n = \frac{LC_nD_n}{f_n}.
\end{equation}
The energy consumption of $L$ local iterations at local device $n$ is formulated as~\cite{Dinh21}:
\begin{equation}
    e^{\text{cmp}}_n = \frac{\alpha}{2}LC_nD_nf^2_n,
\end{equation}
where $\frac{\alpha}{2}$ denotes the effective capacitance coefficient of device $n$’s computing chipset. 
When finishing local update, the model weights of the local models are transmitted to the server, hence resulting in the energy consumption and the delay. In this paper, a frequency-division multiple access (FDMA) protocol is used for data transmission. The achievable transmission rate of local device $n$ is:
\begin{equation}
    r_n = b_{n}\text{log}_2(1+\frac{h_np_n}{N_0b_{n}}),% $\sum_{n=1}^N b_{n,m} \leq B_m, \text{ for } 1\leq m \leq M$
\end{equation}
where $b_n$ is the bandwidth allocated to device $n$, $h_n$ is the static channel gain of device $n$ during model sharing duration, $N_0$ is background noise, and $p_n$ is the transmission power. Based on this, the communication time of local device $n$ is defined as:
\begin{equation}
    t^{\text{com}}_n = \frac{z_n}{r_n},
\end{equation}
where $z_n$ denotes the size of the model weights of local device $n$. As a result, the energy consumption for transmitting the model weights at local device $n$ is
\begin{equation}
    e^{\text{com}}_n = t^{\text{com}}_{n}p_{n} = \frac{p_{n}z_{n}}{b_n\text{log}_2(1+\frac{h_{n}p_{n}}{N_0b_n})}.
\end{equation}
It can be seen that a larger $f_n$ leads to a shorter time delay while causing more energy consumption. On the other hand, a larger $b_n$ helps the FL to reduce energy consumption and time delay. However, $b_n$ is limited by the total bandwidth $B$. Therefore, the resource allocation in FL necessitates an effective spectrum allocation optimization method. 

As the downlink bandwidth is much higher than that of uplink and the transmit power of BS is much larger than the those of local devices, the latency of broadcasting the global model is not considered~\cite{Dinh21}. Besides, the delays incurred in branch-and-bound-based solution accounted for the latency are not considered in this work, since we assume that the server has powerful computation capability. Thus, the time delay and the energy consumption of FL are derived as follows:

\begin{equation}
    E_k = \sum_{n \in \mathcal{S}_k} \left( e^{\text{com}}_n + e^{\text{cmp}}_n \right),\ E = \sum_{k=1}^{K}E_k,
    \label{E_equa}
\end{equation}
\begin{equation}
    T_k = \max_{n \in \mathcal{S}_k} \left( t^{\text{com}}_n + t^{\text{cmp}}_n \right),\ T = \sum_{k=1}^{K}T_k,
    \label{T_equa}
\end{equation}
where $K$ is the total number of global iterations, $E_k$ and $T_k$ are the energy consumption and the time delay at the $k$-th iteration, respectively. $E$ and $T$ are the total energy consumption and the time delay of training the entire FL algorithm, respectively. 

\begin{figure}[t]
  \centering
  \includegraphics[width=8.5cm]{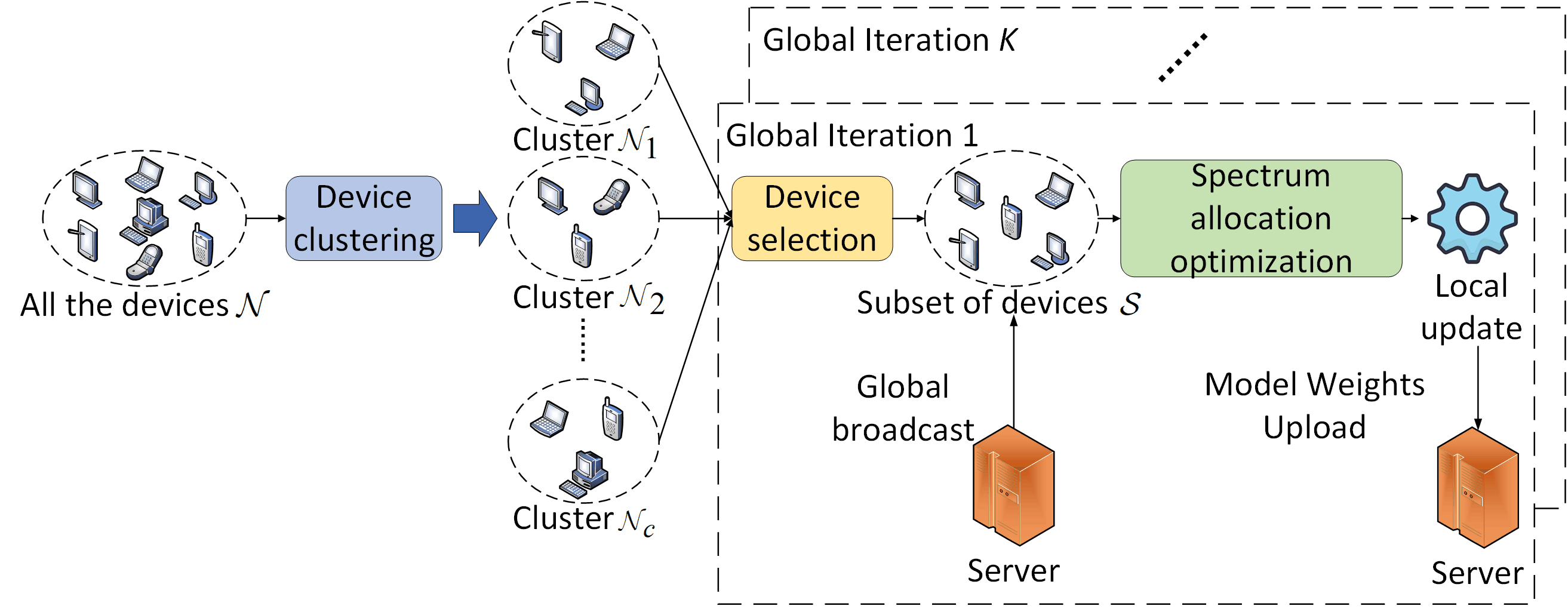}
\caption{Workflow of the proposed FL framework}
\label{WholeFramework}
\end{figure}

\subsection{Problem Formulation}
In this paper, we formulate the optimization problem~\eqref{A} to minimize the time delay of FL while considering the energy constraints of the local devices as follows
\begin{align}
&\underset{\bm{\mathcal{B}}, \bm{\mathcal{F}}, \bm{\mathcal{S}}}{\text{min}}\ \sum_{k=1}^{K}T_k \label{A}\\
&\textrm{s.t.} \quad e^{\text{com}}_n + e^{\text{cmp}}_n \leq  e^{\text{cons}}_n,\ \forall n \in \mathcal{S}_k,\ k = 1,..,K,  \tag{\ref{A}{a}} \label{Aa}\\
&\quad\, \quad t^{\text{com}}_n + t^{\text{cmp}}_n \leq  T_k,\ \forall n \in \mathcal{S}_k,\ k = 1,..,K,  \tag{\ref{A}{b}} \label{Ab}\\
&\quad\ \; \, \sum_{n \in \mathcal{S}_k}b_n\leq B,\ k = 1,..,K, \tag{\ref{A}{c}} \label{Ac}\\
&\quad\ \;\; \,\,  f_n^{\text{min}} \leq f_n \leq f_n^{\text{max}},\ \forall n \in \mathcal{S}_k,\ k = 1,..,K. \tag{\ref{A}{d}} \label{Ad}\\
&\quad\ \;\; \,\,  A_k < A,\ k = 1,..., K-1 \tag{\ref{A}{e}} \label{Ae}\\
&\quad\ \;\; \,\,  A_K \geq A,\tag{\ref{A}{f}} \label{Af}
\end{align}
where $B$ represents total bandwidth, $e^{\text{cons}}_n$ is the energy constraint of local device $n$, $f^{\text{max}}_n$ and $f^{\text{min}}_n$ represent the maximum and minimum CPU frequencies of local device $n$, respectively. $\bm{\mathcal{B}} = \{\bm{b}_k| k = 1,...,K\}$, $\bm{b}_k = \{b_n| n \in \mathcal{S}_k\}$, $\bm{\mathcal{F}} = \{\bm{f}_k| k = 1,...,K\}$, $\bm{f}_k = \{f_n| n \in \mathcal{S}_k\}$, $\bm{\mathcal{S}} = \{\mathcal{S}_k|k = 1,...,K \}$, $A_k$ is the accuracy of the global model at the $k$-th iteration, and $A$ is the target accuracy. \eqref{Aa} ensures that all the devices satisfy the energy requirements. \eqref{Ab} reflects the time delay of FL at each global iteration. \eqref{Ac} and \eqref{Ad} denote the constraints of total bandwidth and the CPU frequency, respectively. \eqref{Ae} and \eqref{Af} denote that $K$ is a design parameter that ensures the global model to achieve the target accuracy. In this paper, the number of the selected devices is fixed at each iteration, which means $S_1 = ... = S_K = S$.

In this work, the local models of FL are convolutional neural networks (CNNs) with the cross-entropy loss function, which indicates that the loss function is non-convex. Therefore, it is challenging to theoretically obtain the upper bound of the total number of global iterations $K$ given an FL task. Moreover, the clustering in the set $\bm{\mathcal{S}}$ makes the problem (12) a combinatorial problem, which further increases the difficulty of solving this problem. To solve the problem~\eqref{A}, we design an FL framework as shown in Fig.~\ref{WholeFramework}. First of all, the device clustering method divides local devices into several clusters based on the data distribution of the model weights. Next, at each global iteration: the device selection method chooses devices from each cluster; the selected devices download the global model from the server, and the spectrum allocation optimization method allocates the bandwidth and CPU frequency for each device; the subset of devices carries out the local update and upload their model weights to the server. The device clustering and selection methods aim to select the optimal devices that enable FL to swiftly achieve the target accuracy. The spectrum allocation optimization method aims to minimize $T_k$ while satisfying the energy constraints. The proposed FL framework is capable of finding a solution candidate ($\bm{\mathcal{B}}, \bm{\mathcal{F}}, \bm{\mathcal{S}}$, $K$) for the problem~\eqref{A} that significantly reduces the time delay $T$.

\section{Weight divergence based Device Selection}~\label{section5}
In this section, we first present the motivation, the implementation details, and the challenges of adopting the K-means algorithm to perform device clustering and selection. Then, a weight divergence based device selection method is proposed to help FL achieve convergence swiftly by counterbalancing the bias introduced by non-iid local datasets.

\subsection{K-means based device clustering and selection}~\label{Kmeanscluster}
In practice, the local devices are usually trained on highly skewed non-iid datasets. In a non-iid local dataset, most of the data samples come from a majority class (also called a dominant class~\cite{Wang20}), while the remaining data samples belong to other classes. It is difficult for the local models to correctly identify the samples unless the samples belong to the majority class. Without a feasible device selection method, the global model is very likely to inherit such detrimental property through model aggregation. Take an image classification task on CIFAR-10~\cite{Krizhevsky09} as an example. If the device selection method rarely selects the devices whose majority class is "dog", the global model will achieve low accuracy in recognizing the samples that belong to the "dog" class.

\begin{algorithm}[t]
\textsl{}\setstretch{1}
\caption{K-means based device clustering}
\begin{algorithmic}[1] 
\Require{A set of local devices $\mathcal{N}$, number of clusters $c$}
\Ensure{$c$ clusters $\{ \mathcal{N}_1,...,\mathcal{N}_c \}$}
\State Create $c$ empty sets $\{\mathcal{N}_1,...,\mathcal{N}_c\}$
\State Initialize global model $\bm{w}^0$ and broadcast $\bm{w}^0$ to the devices in $\mathcal{N}$
\For{each local device $n \in \mathcal{N}$ \textbf{in parallel}}
    \State Perform local update based on~\eqref{localtrain} for $L$ iterations
    \State Transmit model weights $\bm{w}_{n,0}$ to the server
\EndFor
\State The server trains a K-means model with $c$ clusters using $\{\bm{w}_{1,0},...,\bm{w}_{N,0}\}$ based on~\eqref{Kmeans1} and~\eqref{Kmeans2}
\For{each local device $n \in \mathcal{N}$}
    \State The K-means model predicts the cluster label $c_n \in $
    \Statex \quad\  $\{1,...,c\}$ for local device $n$
    \State $\mathcal{N}_{c_n} \leftarrow n$
\EndFor
\State \Return $c$ clusters $\{ \mathcal{N}_1,...,\mathcal{N}_c \}$
\end{algorithmic}  
\label{kmeans}
\end{algorithm}

\begin{algorithm}[t]
\textsl{}\setstretch{1}
\caption{K-means based device selection at the $k$-th global iteration}
\begin{algorithmic}[1] 
\Require{Set of local models $\{\bm{w}_{1,k},\bm{w}_{2,k},...,\bm{w}_{N,k}\}$, $c$ clusters $\{ \mathcal{N}_1,\mathcal{N}_2,...,\mathcal{N}_c \}$, number of devices selected from each cluster $s$}
\State Initialize an empty set $\mathcal{S}_k$;
\For{each cluster $\mathcal{N}_i$ ($i=1,2,...,c$)}
    \State Randomly select $s$ devices from $\mathcal{N}_i$ to $\mathcal{S}_k$;
\EndFor
\State \Return Set of selected devices $\mathcal{S}_k$.
\end{algorithmic}  
\label{KmeansSelect}
\end{algorithm}

To overcome the harmful effect of non-iid datasets, an intuitive wisdom is to make sure that the majority classes of the local datasets in $\mathcal{S}_k$ cover all the classes, in which case the bias incurred by non-iid datasets is significantly balanced. However, the server cannot know the majority class of the local dataset as the datasets are not transmitted to the server. In this case, unsupervised learning is a sensible choice to deal with this problem~\cite{Zhu21}. In this paper, the K-means algorithm~\cite{MacQueen67} is used to perform device clustering based on the data distribution of the model weights. Specifically, given the model weights $\{\bm{w}_n|n = 1,...,N\}$ and the number of clusters $c$, the K-means algorithm iteratively performs two steps as follows
\begin{align}
&\mathcal{N}^{(j)}_i  = \big\{\bm{w}_n \big| \Vert \bm{w}_n - m_i^{(j)} \Vert^2 \leq \Vert \bm{w}_n - m_q^{(j)} \Vert^2, 1\leq q\leq c \big\} \label{Kmeans1}\\
&m_i^{(j+1)}  = \frac{1}{\big|\mathcal{N}^{(j)}_i\big|}\sum_{\bm{w}_n \in \mathcal{N}^{(j)}_i}\bm{w}_n \label{Kmeans2}
\end{align}
where $m_i^{(j)}$ is the centroid of $\mathcal{N}_i$ at the $j$-th iteration. The K-means algorithm converges when the assignments no longer change. Algorithm~\ref{kmeans} and Algorithm~\ref{KmeansSelect} present the details of K-means based device clustering and selection, respectively. In Algorithm~\ref{kmeans}, a K-means model is formulated by the model weights of the local model. The well-trained K-means model is used to predict the cluster label for each local device. As a result, $c$ clusters $\{\mathcal{N}_1\text{\textasciitilde}\mathcal{N}_c\}$ are obtained, where $c$ is the number of clusters. Note that device clustering only performs at the initial global iteration. In the following global iterations, $s$ devices are randomly selected from each cluster to form $\mathcal{S}_k$ according to Algorithm~\ref{KmeansSelect}. As a result, the class imbalance issue is mitigated as model aggregation enables the global model to achieve similar performance on all the classes.

However, K-means based device clustering is bottlenecked by long training latency especially when heavyweight CNN models are used as the local model. This is because the feature vectors for training the K-means model are the model weights which usually contain millions of parameters. In addition, randomly choosing devices from each cluster does not guarantee that the selected devices are the optimal ones to help FL reach fast convergence. There is still room to further enhance K-means based device clustering and selection.

\begin{figure}[t]
  \centering
  \includegraphics[width=8.5cm]{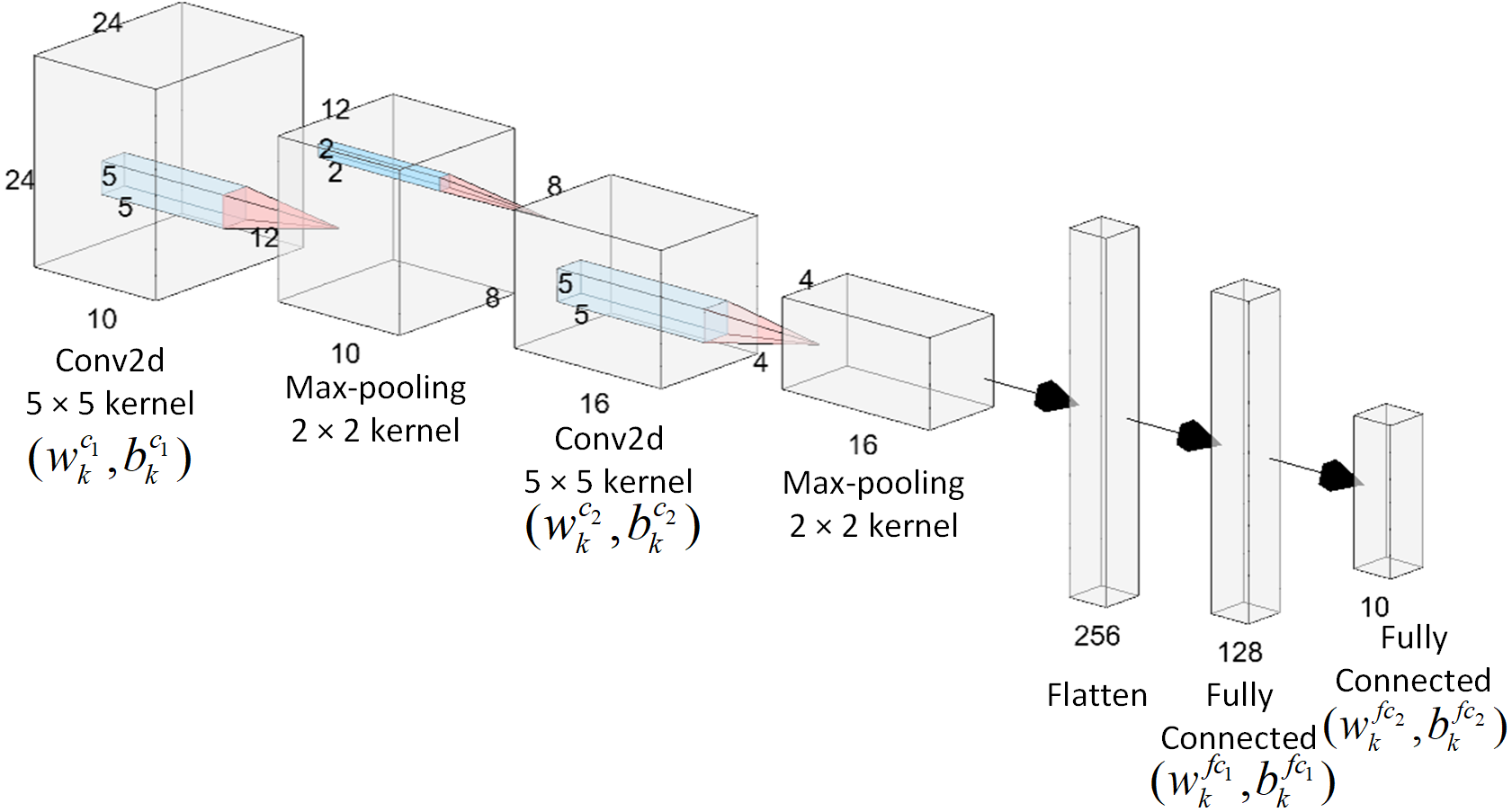}
\caption{CNN architecture of the local model. $\emph{w}^{c_i}$ and $\emph{b}^{c_i}$ $(i = 1$ or $2)$ are the weights and the biases of the $i$-th convolutional layer, respectively. $\emph{w}^{fc_i}$ and $\emph{b}^{fc_i}$ are the weights and the biases of the $i$-th linear layer, respectively.}
\label{CNN}
\end{figure}

\begin{figure*}[htbp]
\centering
\subfigure[$\bm{w}^{c_1}$]{
\begin{minipage}[t]{0.25\linewidth}
\centering
\includegraphics[width=4.2cm]{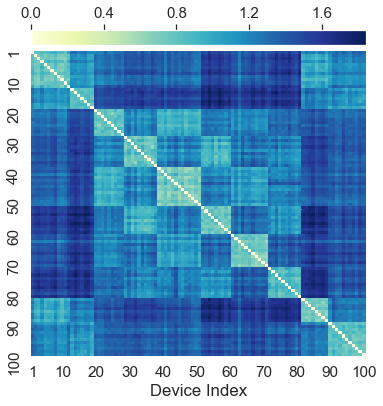}
%\caption{fig1}
\label{confusion_1}
\end{minipage}%
}%
\subfigure[$\bm{b}^{c_1}$]{
\begin{minipage}[t]{0.25\linewidth}
\centering
\includegraphics[width=4.2cm]{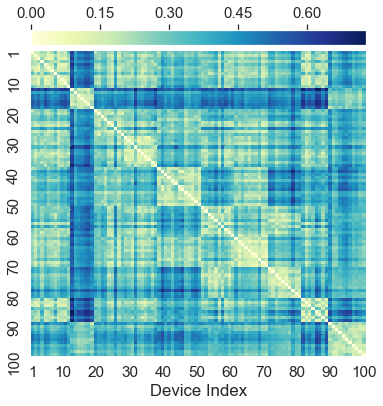}
\label{confusion_22}
%\caption{fig2}
\end{minipage}%
}%
\subfigure[$\bm{w}^{c_2}$]{
\begin{minipage}[t]{0.25\linewidth}
\centering
\includegraphics[width=4.2cm]{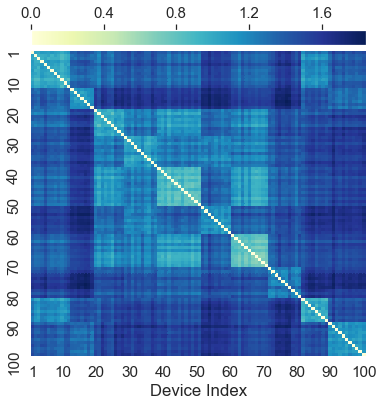}
\label{confusion_3}
%\caption{fig2}
\end{minipage}%
}%
\subfigure[$\bm{b}^{c_2}$]{
\begin{minipage}[t]{0.25\linewidth}
\centering
\includegraphics[width=4.2cm]{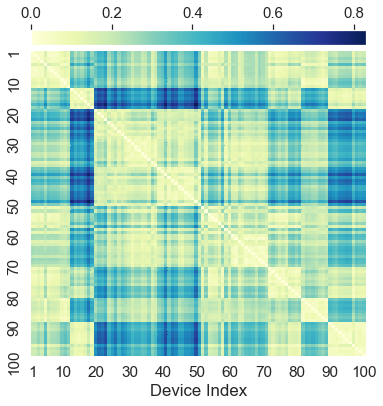}
\label{confusion_4}
%\caption{fig2}
\end{minipage}%
}%

\subfigure[$\bm{w}^{fc_1}$]{
\begin{minipage}[t]{0.25\linewidth}
\centering
\includegraphics[width=4.2cm]{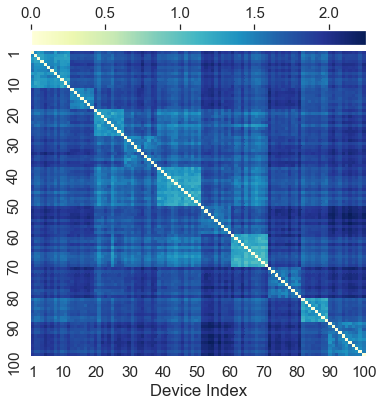}
\label{confusion_55}
%\caption{fig1}
\end{minipage}%
}%
\subfigure[$\bm{b}^{fc_1}$]{
\begin{minipage}[t]{0.25\linewidth}
\centering
\includegraphics[width=4.2cm]{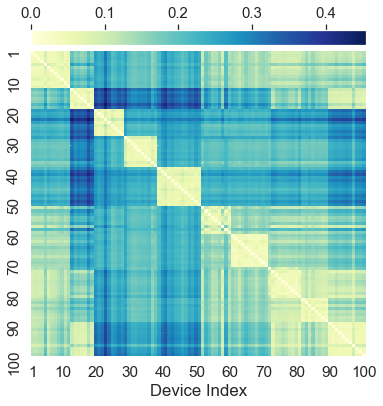}
\label{confusion_6}
%\caption{fig2}
\end{minipage}%
}%
\subfigure[$\bm{w}^{fc_2}$]{
\begin{minipage}[t]{0.25\linewidth}
\centering
\includegraphics[width=4.2cm]{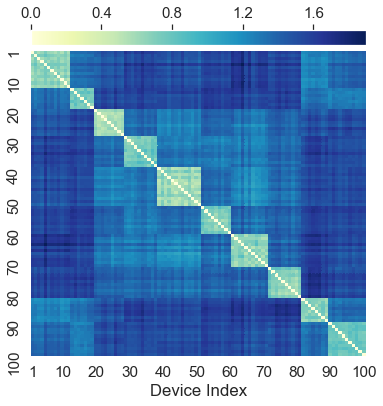}
\label{confusion_7}
%\caption{fig2}
\end{minipage}%
}%
\subfigure[$\bm{b}^{fc_2}$]{
\begin{minipage}[t]{0.25\linewidth}
\centering
\includegraphics[width=4.2cm]{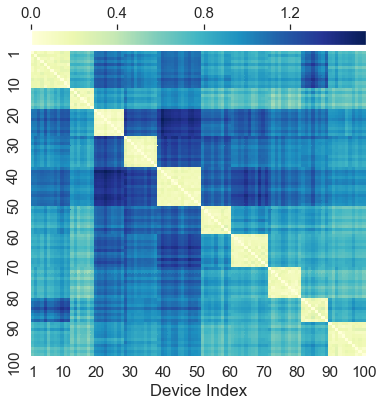}
\label{confusion_8}
%\caption{fig2}
\end{minipage}%
}%

\caption{Euclidean distance between two local devices. Model weights and biases of different layers are used to calculate the Euclidean distance.}
\label{Distance}
\end{figure*}

\subsection{Reducing the computational running time of K-means clustering}

We improve K-means based device clustering by reducing the computational running time without sacrificing the clustering performance. To this end, the size of the feature vector for training the K-means model should be shrunk. Specifically, the K-means model is trained using the model weights of only one layer in CNN instead of all the weights. Since K-means clustering adopts the Euclidean distance to measure the similarities between two samples, we calculate the Euclidean distance between two local models using the model weights of each layer. The clustering performance can be estimated by visualizing the Euclidean distance matrix.

We train a FL model with 100 local devices based on line 2 \textasciitilde\  line 6 in Algorithm~\ref{kmeans}. The CNN architecture of the local models is shown in Fig.~\ref{CNN}. The FL model is trained on CIFAR-10, and non-iid local datasets are generated with a bias $\sigma \in [0,1]$. A local dataset $\mathcal{D}_n$ contains $D_n \times \sigma$ data that belongs to a majority class, while the rest of the data are evenly sampled from other classes. In this section, we set $\sigma = 0.8$.

Fig.~\ref{Distance} depicts the Euclidean distance matrix where the model weights and model biases of different layers are used as the feature vector, respectively. The device indices are successively assigned based on the majority class. For example, the majority class of local device 1\textasciitilde 12 is "airplane", the majority class of local device 13\textasciitilde 19 is "ship", the majority class of local device 20\textasciitilde 28 is "dog", and so on. In Fig.~\ref{Distance}, there are 10 blocks with lighter colors among the distance matrices. It indicates that the Euclidean distance will be much smaller if the two local datasets have the same majority class. Therefore, $c$ is usually set to the number of classes so that each cluster only includes the local models of which datasets belong to the same majority class. Note that this phenomenon is more visible when certain model weights or biases are used to calculate the Euclidean distance (e.g. $\emph{w}^{fc_2}$ and $\emph{b}^{fc_2}$). Furthermore, adding the model weights or biases of some layers into the feature vector may not enhance clustering performance but increasing the training time. Take Fig.~\ref{confusion_22} as an example, the values of the Euclidean distance are very close. Therefore, it is difficult to determine whether two local devices have the same majority class only based on the Euclidean distance calculated by $\emph{b}^{c_1}$. As a result, it is reasonable to use the model weight vector of a certain layer as the feature vector for training the K-means algorithm. In this paper, we find that the K-means algorithm achieves short training time and good performance when using the model weights of the last fully connected layer (i.e., $\emph{w}^{fc_2}$) as the feature vector, and the experimental results are provided in Section~\ref{section6}. 

\subsection{Weight Divergence based Device Selection}
We propose a device selection method to choose devices from each cluster based on the divergence between the local model and the global model. Given the model weights of the local models $\{\bm{w}_{1,k},...,\bm{w}_{N,k}\}$ and the global model $\bm{w}^k$ at the $k$-th global iteration, the weight divergence between each local model and the global model is measured by the Euclidean distance. For each cluster, the device with the largest weight divergence is selected, and the selected devices download the global model from the server. The selected devices perform local updates and send the model weights to the server for global aggregation. According to Section~\ref{Kmeanscluster}, the Euclidean distance of the model weights of some layers will be very small if two local models belong to the same cluster. Therefore, we consider the model weights of all the layers during calculating the weight divergence. Algorithm~\ref{WD} describes the workflow of the proposed device selection method.

\begin{algorithm}[t]
\textsl{}\setstretch{1}
\caption{Weight divergence based device selection at the $k$-th global iteration}  
\begin{algorithmic}[1] 
\Require{Set of local models $\{\bm{w}_{1,k},\bm{w}_{2,k},...,\bm{w}_{N,k}\}$, global model $\bm{w}^k$, $c$ clusters $\{ \mathcal{N}_1,\mathcal{N}_2,...,\mathcal{N}_c \}$, number of devices selected from each cluster $s$}
\State Initialize an empty set $\mathcal{S}_k$;
\For{each cluster $\mathcal{N}_i$ ($i=1,2,...,c$)}
    \State Initialize an empty set $\Omega$;
    \For{each local device $n \in \mathcal{N}_i$};
        \State Calculate the weight divergence $d_n$ between $\bm{w}_{n,k}$
        \Statex \quad\quad\quad and $\bm{w}^k$;
        \State $\Omega \leftarrow d_n$;
    \EndFor
    \State The devices with the top-$s$ values of $d_n$ are selected
    \Statex \quad\; from $\Omega$ to $\mathcal{S}_k$;
\EndFor
\State \Return Set of selected devices $\mathcal{S}_k$.
\end{algorithmic}  
\label{WD}
\end{algorithm}

\begin{table*}[h]
\centering
\captionsetup{justification=centering}
\caption{Weight divergence at the $k$-th iteration and accuracy of the global model on the "dog" cluster at the $(k+1)$-th iteration when selecting different devices from the "dog" cluster to join the $(k+1)$-th iteration. Note that the devices selected from other clusters are fixed.}
\renewcommand\arraystretch{1}
\setlength{\tabcolsep}{1.5mm}{
\begin{tabular}{cccccccccc}\toprule
Index of the selected device    & 20    & 21     & 22     & 23    & 24    & 25    & 26    & 27    & 28    \\\midrule
Weight divergence  & 5.09 & 5.33  & \textbf{16.92} & 6.90 & 5.40 & 5.32 & 7.39 & 6.56 & 5.87 \\
Accuracy  & 52.17\% & 50.27\% & \textbf{53.39\%}  & 51.39\% & 52.21\% & 51.34\% & 52.34\% & 50.90\% & 51.81\% \\ \bottomrule 
\end{tabular}}
\label{increment}
\end{table*}

An experiment is conducted to explain the reason for selecting the device with the highest weight divergence. The local datasets and the local models in Section~\ref{Kmeanscluster} are adopted in this experiment to perform FL. For each global iteration (except the initial global iteration), one device is randomly selected from each cluster for the local update. After $k$ global iterations, the weight divergence between the local models and the global model are calculated. The results of the cluster with the majority class "dog" are listed in Table~\ref{increment}. It can be seen that local device 22 has the largest weight divergence. Next, we investigate which device is optimal for FL training. To this end, the global model $\bm{w}^k$ and the selected devices in other clusters are fixed, and local devices $20\text{\textasciitilde}28$ are selected respectively to participate in the $(k+1)$-th global iteration. At the $(k+1)$-th iteration, the accuracy of the global model on the "dog" cluster is listed in Table~\ref{increment}. It can be observed that the global model achieves the highest accuracy on the "dog" cluster when local device 22 is selected. This is because the weight divergence can reflect the performance gap between the local model and the global model on the corresponding local dataset. Thus, the dataset of local device 22 is most informative for the global model. Selecting this device allows the global model to be improved significantly on this cluster. Note that the above experiments are conducted and evaluated on the training set. To further evaluate the proposed device selection method, Section~\ref{section6} provides the performance of the method on the testing set, which demonstrates its effectiveness.

\section{Spectrum allocation optimization method}~\label{section4}

 We first introduce some notations to simplify the presentation. 
\begin{equation}
    J_n = \frac{h_n p_n}{N_0},  \label{W}
\end{equation}  
\begin{equation}
    U_n =  LC_{n}D_n,  \label{WW}
\end{equation}
\begin{equation}
    G_n = \frac{\alpha}{2} L C_{n} D_n,  \label{WWW}
\end{equation}  
\begin{equation}
    H_n = z_{n}p_{n}. \label{WWWW}
\end{equation}  
% \begin{align}
%     \textcolor{red}{J_n} &= \frac{h_n p_n}{N_0},  \label{W}\\
%     U_n &=  LC_{n}D_n,  \label{WW} \\
%     G_n &= \frac{\alpha}{2} L C_{n} D_n,  \label{WWW}\\
%     H_n &= z_{n}p_{n}. \label{WWWW}
% \end{align}  
The goal of the spectrum allocation optimization method is to minimize the time delay at the $k$-th global iteration (i.e., $T_k$) given a subset of devices $\mathcal{S}_k$ and energy constraints $e^\text{cons}_n$. The optimization problem is formulated as follows:
\begin{align}
&\underset{\bm{b}_k, \bm{f}_k}{\text{min}}\quad  T_k \label{X}\\
&\textrm{s.t.} \quad G_n f^2_n + \frac{H_n}{b_n\text{log}_2(1+\frac{J_n}{b_n})} \leq  e^{\text{cons}}_n,\ \forall n \in \mathcal{S}_k,  \tag{\ref{X}{a}} \label{Xa}\\
&\quad \quad \frac{z_n}{b_n \text{log}_2(1+\frac{J_n}{b_n})} + \frac{U_n}{f_n} \leq  T_k,\ \forall n \in \mathcal{S}_k,  \tag{\ref{X}{b}} \label{Xb}\\
&\quad\ \; \, \sum_{n \in \mathcal{S}_k}b_n\leq B,\tag{\ref{X}{c}} \label{Xc}\\
&\quad\ \;\; \,\,  f_n^{\text{min}} \leq f_n \leq f_n^{\text{max}},\ \forall n \in \mathcal{S}_k. \tag{\ref{X}{d}} \label{Xd}
\end{align}

To solve the optimization problem~\eqref{X}, we first provide Lemma~\ref{convex_proof}. 
\begin{lemma}
The optimization problem~\eqref{X} is a convex problem. 
\label{convex_proof}
\end{lemma}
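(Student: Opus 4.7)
My plan is to treat $T_k$ as an additional decision variable so the objective is the linear functional $T_k$, and then show that every constraint of \eqref{X} defines a convex set in $(\bm{b}_k, \bm{f}_k, T_k)$. Constraints \eqref{Xc} and \eqref{Xd} are affine, hence trivially convex. For \eqref{Xa} and \eqref{Xb}, I will verify that each left-hand side is a sum of convex functions of $(b_n, f_n)$, after which subtracting the linear $T_k$ in \eqref{Xb} preserves convexity of the $\le 0$ inequality.

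The separable pieces $G_n f_n^2$ and $U_n/f_n$ are convex in $f_n$ on $(0,\infty)$ and constant in $b_n$, so they are jointly convex. Hence the entire argument reduces to proving convexity in $b_n>0$ of the transmission-time kernel
\begin{equation*}
\varphi_n(b_n) \;=\; \frac{1}{b_n \log_2\!\left(1 + J_n/b_n\right)};
\end{equation*}
multiplication by the positive constants $H_n$ and $z_n$ then handles the remaining terms in \eqref{Xa} and \eqref{Xb}.

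The key step is a two-stage analysis of $g_n(b_n) := b_n \log_2(1 + J_n/b_n)$. First, $g_n(b_n) > 0$ on $b_n>0$ because $J_n>0$. Second, $g_n$ is concave in $b_n$: the quickest route is to recognize it as the perspective transformation $b_n\, h(J_n/b_n)$ of the concave function $h(u)=\log_2(1+u)$, which is jointly concave in $(J,b_n)$ and therefore concave along the slice $J=J_n$; alternatively, a direct computation of $g_n''(b_n)$ yields a closed-form negative expression proportional to $-J_n^2 / [b_n (b_n+J_n)^2 \ln 2]$. Once $g_n$ is positive and concave, the reciprocal $\varphi_n = 1/g_n$ is convex by the standard argument chaining Jensen's inequality for $g_n$ with Jensen's inequality for the convex map $t \mapsto 1/t$ on $(0,\infty)$.

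The step I expect to be the main obstacle is the concavity of $g_n$. The perspective argument is the cleanest, but relies on the fact that slicing a jointly concave function by fixing a coordinate preserves concavity; the direct second-derivative route is mechanical yet requires careful algebra to keep the signs right when combining the logarithmic term with its derivative. Once convexity of $\varphi_n$ is established, every constraint function in \eqref{X} is convex, the feasible set is an intersection of convex sets, and the objective is linear, so \eqref{X} is a convex optimization problem.
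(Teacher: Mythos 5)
Your proposal is correct and follows essentially the same route as the paper's own proof: both establish concavity of $b_n\log_2(1+J_n/b_n)$ via the perspective transformation of the concave function $\log_2(1+u)$, then invoke the fact that the reciprocal of a positive concave function is convex, with the remaining terms $G_n f_n^2$ and $U_n/f_n$ handled as standard separable convex pieces. No substantive differences to report.
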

\begin{proof}
See Appendix A.
\end{proof}

With the help of Lemma~\ref{convex_proof}, the following theorem can be derived.
\begin{theorem}
The optimal solutions of the problem \eqref{X} satisfy:
\begin{align}
\frac{z_n}{b^\ast_n\text{log}_2(1+\frac{J_{n}}{b^\ast_n})} + \frac{U_n}{f^\ast_n} - T_k^\ast = 0, \quad n \in \mathcal{S}_k, \label{T_opt} \\
G_n (f^\ast_n)^2 + \frac{H_n}{b^\ast_n\text{log}_2(1+\frac{J_{n}}{b^\ast_n})} - e^{\text{cons}}_n = 0, \quad n \in \mathcal{S}_k, \label{E_opt} \\
\sum_{n \in \mathcal{S}_k}b^\ast_n - B = 0. \quad\quad\quad\  \label{B_opt}
\end{align}
\label{opti_theo}
\end{theorem}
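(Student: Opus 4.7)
The plan is to exploit Lemma~\ref{convex_proof}, which guarantees convexity of (\ref{X}), so that any KKT point is globally optimal. Assuming Slater's condition (which holds for any strictly feasible interior point), the KKT conditions are necessary and sufficient, and it suffices to show that the energy constraint (\ref{Xa}), the delay constraint (\ref{Xb}), and the bandwidth constraint (\ref{Xc}) are all tight at $(\bm{b}_k^\ast, \bm{f}_k^\ast, T_k^\ast)$. I would proceed by contradiction for each equality, leveraging two monotonicity facts: $b \mapsto b \log_2(1+J_n/b)$ is strictly increasing in $b_n > 0$ (a standard rate--bandwidth identity already used in the paper), and $f \mapsto U_n/f$ is strictly decreasing in $f_n$.

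I would establish (\ref{B_opt}) first. Suppose $\sum_{n \in \mathcal{S}_k} b^\ast_n < B$, and let $m$ be any device attaining the max in (\ref{T_opt}). Reassigning the unused bandwidth $\delta = B - \sum_n b^\ast_n$ to $m$ strictly reduces both $t^{\text{com}}_m$ and $e^{\text{com}}_m$, leaving all other quantities unchanged and preserving feasibility of (\ref{Xa})--(\ref{Xd}); but this yields a strictly smaller $T_k$, contradicting optimality. For (\ref{T_opt}), suppose some $m_0 \in \mathcal{S}_k$ has $t^{\text{com}}_{m_0} + t^{\text{cmp}}_{m_0} < T_k^\ast$. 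I would first decrease $f^\ast_{m_0}$ by a small $\epsilon_1$, which creates strict energy slack at $m_0$ while keeping $t_{m_0} < T_k^\ast$ by continuity; then decrease $b^\ast_{m_0}$ by a small $\epsilon_2$ (the resulting growth of $e^{\text{com}}_{m_0}$ absorbed by the freshly created energy slack), and reallocate the freed bandwidth to a bottleneck device $m'$. This strictly decreases $t^{\text{com}}_{m'}$, hence $T_k^\ast$, again a contradiction.

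Finally, (\ref{E_opt}) follows by a coupled perturbation once (\ref{T_opt}) and (\ref{B_opt}) are in hand. If $G_{m_1}(f^\ast_{m_1})^2 + H_{m_1}/(b^\ast_{m_1} \log_2(1+J_{m_1}/b^\ast_{m_1})) < e^{\text{cons}}_{m_1}$ for some $m_1$, then (\ref{T_opt}) forces $m_1$ to be a bottleneck and (\ref{B_opt}) forces bandwidth to be fully used. Jointly nudging $f^\ast_{m_1}$ upward and $b^\ast_{m_1}$ downward along the level set $\{t_{m_1} = T_k^\ast - \epsilon\}$ frees a strictly positive bandwidth increment (the two partial derivatives of $t_{m_1}$ are linearly independent), which can then be spread across all other devices --- each of which is a bottleneck by (\ref{T_opt}) --- to strictly reduce the global max. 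The main obstacle will be managing the coupled perturbation in the proof of (\ref{E_opt}): I must verify that the energy slack at $m_1$ is large enough to accommodate the simultaneous change in $e^{\text{cmp}}_{m_1}$ and $e^{\text{com}}_{m_1}$, and handle the boundary case $f^\ast_{m_1} = f^{\text{max}}_{m_1}$ by instead decreasing $b^\ast_{m_1}$ alone. The strict convexity of the time and energy terms in $(b_n, f_n)$ established in Lemma~\ref{convex_proof} is precisely what guarantees these perturbations deliver strict decreases.
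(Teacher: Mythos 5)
Your route is genuinely different from the paper's. The paper proves Theorem~\ref{opti_theo} by writing out the full KKT system with multipliers $\lambda^\ast_n,\mu^\ast_n,\gamma^\ast$, showing via the stationarity conditions that the hypothesis $\lambda^\ast_{n^\dag}=0$ for some device forces $\mu^\ast_{n^\dag}=0$, then $\gamma^\ast=0$, then $\lambda^\ast_n=\mu^\ast_n=0$ for \emph{all} $n$, contradicting $\sum_{n\in\mathcal{S}_k}\lambda^\ast_n=1$; positivity of all multipliers plus complementary slackness then yields \eqref{T_opt}--\eqref{B_opt}. You instead argue each equality directly by a feasibility-preserving perturbation. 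Your approach is more elementary (no stationarity equations, only monotonicity of $b\mapsto b\log_2(1+J_n/b)$ and $f\mapsto U_n/f$) and arguably more illuminating about \emph{why} the constraints are tight; the paper's approach is shorter once the Lagrangian is written and also delivers the multiplier values needed elsewhere. Both proofs quietly assume the box constraints \eqref{Xd} are inactive at the optimum (the paper's Lagrangian carries no multipliers for them; your perturbations of $f^\ast_n$ break down at $f^{\min}_n$/$f^{\max}_n$), so you are not worse off than the paper on that score, though you should state the assumption.

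There is one concrete step that fails as written. In your arguments for \eqref{B_opt} and \eqref{T_opt} you reassign the freed bandwidth to ``a bottleneck device $m$'' (respectively $m'$) and conclude that $T_k$ strictly decreases. But the objective is $\max_{n\in\mathcal{S}_k}(t^{\text{com}}_n+t^{\text{cmp}}_n)$, and --- precisely because \eqref{T_opt} is what you are trying to prove --- the argmax at the optimum is generically \emph{all} of $\mathcal{S}_k$, not a singleton. Reducing the delay of one maximizer leaves the max unchanged whenever another device also attains $T_k^\ast$, so no contradiction is obtained. The fix is the one you already use in the \eqref{E_opt} step: split the freed increment $\delta$ into strictly positive shares over \emph{every} device attaining the maximum (monotonicity of $Q_n(b)=b\log_2(1+J_n/b)$ guarantees each share strictly reduces that device's delay, and continuity keeps the non-maximizers below the new max for $\delta$ small). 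With that repair, and with the boundary cases of \eqref{Xd} either excluded by assumption or handled by the fallback you sketch for $f^\ast_{m_1}=f^{\max}_{m_1}$, your argument goes through.
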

\begin{proof}
See Appendix B.
\end{proof}

\begin{algorithm}[t]
\textsl{}\setstretch{1}
\caption{Energy-efficient spectrum allocation optimization}  
\begin{algorithmic}[1] 
\Require{$B$, $J_n$, $U_n$, $G_n$, $H_n$, $e^{\text{cons}}_n$, $f^\text{min}_n$, $f^\text{max}_n$, $\varepsilon_0$, $b^\text{max}$, $\mathcal{S}_k$}
\Ensure{$T_k^\ast, \bm{b}_k^\ast$, $\bm{f}_k^\ast$}
\State Let $T_{\text{min}} = \text{max}\{\ln2\frac{z_n}{J_n}+\frac{U_n}{f^\text{max}_n}\}$, and give $T_\text{max}$ a big enough value
\State $ratio = 0$, $T_k = \frac{T_{\text{min}}+T_{\text{max}}}{2}$
\While {Not $1-\varepsilon_0 \leq ratio \leq 1$}
    \State $\bm{b}_k^\ast \gets \{\}$
    \For{$n \in \mathcal{S}_k$}
        \State Calculate $f_n$ from~\eqref{cal_f} using bisection method
        \State Clip $f_n$ to the range $[f^\text{min}_n,f^\text{max}_n]$
        \State Calculate $b_n$ from~\eqref{E_opt} using bisection method
        \State $b_n = \min\{b_n, b^\text{max}\}$
        \State $\bm{b}_k^\ast \gets b_n$
    \EndFor
    \State $ratio = \frac{\sum_{n \in \mathcal{S}_k}b^\ast_n}{B}$
    \If{$ratio > 1$}
        \State $T_\text{min} = T_k$
        \State $T_k = \frac{T_\text{max}+T_k}{2}$
    \ElsIf{$ratio < 1 - \varepsilon_0$} 
        \State $T_\text{max} = T_k$
        \State $T_k = \frac{T_\text{min}+T_k}{2}$
    \EndIf  
\EndWhile
\State  Recalculate $\bm{f}_k^\ast = \{f^\ast_1,f^\ast_2,...,f^\ast_{S_k}\}$ using $\bm{b}_k^\ast = \{b^\ast_1,b^\ast_2,...,b^\ast_{S_k}\}$ based on~\eqref{E_opt}
\State  Recalculate $T_k^\ast = \underset{ n \in \mathcal{S}_k}{\text{max}}{\{\frac{z_n}{b^\ast_n\text{log}_2(1+\frac{J_n}{b^\ast_n})} + \frac{U_n}{f^\ast_n}\}}$
\State \Return $T_k^\ast, \bm{b}_k^\ast$, $\bm{f}_k^\ast$
\end{algorithmic} 
\label{weidaima}
\end{algorithm}

The theoretical results provided by Theorem~\ref{opti_theo} are intuitive. In terms of~\eqref{T_opt}, all the devices are expected to have the same time delay, since the bandwidth can always be assigned from the devices with lower latency to other devices that have a higher delay. In terms of~\eqref{E_opt} and~\eqref{B_opt}, the devices should make full use of the bandwidth and the battery power to minimize the time delay. However, it is difficult to directly obtain the explicit results of Theorem~\ref{opti_theo} due to the high dimensional space of $\mathcal{S}_k$. It can be noted that there is no feasible solution when $T_k < T_k^\ast$, where $T_k^\ast$ is the optimal solution of the optimization problem~\eqref{X}. In contrast, a feasible solution always exists when $T_k \geq T_k^\ast$. Therefore, we design a binary search method to search for $T_k^\ast$.

% \begin{lemma}
% Provided $T_k^\ast$ is the optimal solution of the optimization problem~\eqref{X}, there is no feasible solution when $T_k < T_k^\ast$, while a feasible solution always exists when $T_k \geq T_k^\ast$.
% \label{T_proof}
% \end{lemma}
% \begin{proof}
% See Appendix C.
% \end{proof}

\begin{lemma}
Given $x>0$, $Q_n(x) = x\log_2(1+\frac{J_n}{x})$ is a monotonically increasing function with an upper bound $\frac{J_n}{\ln2}$. 
\label{b_proof}
\end{lemma}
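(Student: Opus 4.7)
The plan is to treat the claim as a routine one-variable calculus exercise: prove monotonicity by differentiation and establish the upper bound by combining an elementary logarithmic inequality with a limit as $x\to\infty$. To keep the notation clean I would set $t = J_n/x$, so that $t>0$ whenever $x>0$ and $t\to 0^+$ as $x\to\infty$, and rewrite $Q_n(x) = \frac{x}{\ln 2}\ln(1+t)$.

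First I would compute the derivative directly. Differentiating $Q_n(x) = \tfrac{1}{\ln 2}\,x\ln(1+J_n/x)$ with respect to $x$ yields
\begin{equation*}
Q_n'(x) = \frac{1}{\ln 2}\left[\ln\!\left(1+\tfrac{J_n}{x}\right) - \frac{J_n/x}{1 + J_n/x}\right] = \frac{1}{\ln 2}\Bigl[\ln(1+t) - \tfrac{t}{1+t}\Bigr].
\end{equation*}
Monotonicity then reduces to the elementary inequality $\ln(1+t) > \tfrac{t}{1+t}$ for all $t>0$, which I would justify in one line by noting that the auxiliary function $g(t) = \ln(1+t) - t/(1+t)$ satisfies $g(0)=0$ and $g'(t) = t/(1+t)^2 > 0$ on $(0,\infty)$. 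This gives $Q_n'(x) > 0$ for every $x>0$, so $Q_n$ is strictly increasing on $(0,\infty)$.

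For the upper bound I would invoke the even simpler inequality $\ln(1+t) < t$ for $t > 0$, which immediately yields
\begin{equation*}
Q_n(x) = \frac{x}{\ln 2}\,\ln\!\left(1+\tfrac{J_n}{x}\right) < \frac{x}{\ln 2}\cdot\frac{J_n}{x} = \frac{J_n}{\ln 2}.
\end{equation*}
Combined with monotonicity, this shows that $J_n/\ln 2$ is an upper bound. To confirm it is the \emph{least} such bound I would check that the supremum is attained in the limit: as $x\to\infty$, $t = J_n/x \to 0^+$, and using $\ln(1+t) = t - t^2/2 + O(t^3)$ gives $Q_n(x) = \tfrac{J_n}{\ln 2} - \tfrac{J_n^2}{2x\ln 2} + O(1/x^2) \to \tfrac{J_n}{\ln 2}$.

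I do not foresee any real obstacle here; the main thing to get right is the sign bookkeeping in the derivative (the $-J/x^2$ from the chain rule combines with the leading $x$ to produce the compact form $-t/(1+t)$), and remembering to justify that $J_n/\ln 2$ is not merely \emph{an} upper bound but the tight one, so that later calls to this lemma can safely use it as the supremum of $Q_n$ over $x>0$.
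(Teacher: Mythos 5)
Your proposal is correct and follows essentially the same route as the paper's Appendix C: both compute $Q_n'(x)$, establish its positivity by showing the bracketed expression increases from $0$ (your auxiliary function $g(t)$ with $g(0)=0$, $g'>0$ is exactly the paper's ``$Q_n''<0$ and $Q_n'\to 0$ as $x\to\infty$'' argument under the substitution $t=J_n/x$), and both obtain the bound from $\ln(1+t)<t$. Your additional check that $J_n/\ln 2$ is the supremum attained in the limit $x\to\infty$ is a harmless extra not present in the paper.
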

\begin{proof}
See Appendix C.
\end{proof}

\begin{lemma}
$M(f_n) = f_n^3 + (\frac{H_n T_k}{z_n G_n}-\frac{e^{\text{cons}}_n}{G_n})f_n - \frac{H_n U_n}{z_n G_n}$ has only one root in $(0, +\infty)$.
\label{f_proof}
\end{lemma}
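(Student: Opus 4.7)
The plan is to analyze $M$ as a cubic with positive leading coefficient and use the sign of its coefficients together with a critical-point analysis. Set $A := \frac{H_n T_k}{z_n G_n}-\frac{e^{\text{cons}}_n}{G_n}$ and $C := \frac{H_n U_n}{z_n G_n}$, so that $M(f_n)=f_n^3+Af_n-C$. First I would record that all the physical quantities $H_n, U_n, z_n, G_n$ are strictly positive by their definitions in~\eqref{W}--\eqref{WWWW}, hence $C>0$, while $A$ may have either sign. The two anchor observations are $M(0)=-C<0$ and $M(f_n)\to+\infty$ as $f_n\to+\infty$, which by the intermediate value theorem already guarantees the existence of at least one root in $(0,+\infty)$.

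Uniqueness is then handled by splitting on the sign of $A$ via the derivative $M'(f_n)=3f_n^2+A$. If $A\geq 0$, then $M'(f_n)>0$ for every $f_n>0$, so $M$ is strictly increasing on $(0,+\infty)$ and can cross zero at most once. If $A<0$, the only positive critical point is $f_n^\circ=\sqrt{-A/3}$, and $M$ is strictly decreasing on $(0,f_n^\circ)$ and strictly increasing on $(f_n^\circ,+\infty)$. Since $M(0)=-C<0$ and $M$ is decreasing on $(0,f_n^\circ]$, we have $M(f_n)<0$ throughout this interval, so no root lies in $(0,f_n^\circ]$. On $(f_n^\circ,+\infty)$ the function is strictly increasing from $M(f_n^\circ)<0$ to $+\infty$, hence crosses zero exactly once.

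Combining the two cases, $M$ has exactly one root in $(0,+\infty)$, which is the statement of the lemma. I do not anticipate a genuine obstacle here; the only mild care-point is justifying that in the $A<0$ regime $M$ stays strictly negative on the initial decreasing branch, which is immediate from $M(0)<0$ and monotonicity, so no additional estimate on the local minimum value is needed.
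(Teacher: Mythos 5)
Your proof is correct and follows essentially the same route as the paper's Appendix D: the same substitution (your $A,C$ are the paper's $X_n,Y_n$), the same sign split on the linear coefficient via $M'(f_n)=3f_n^2+A$, and the same monotonicity argument anchored at $M(0)<0$. Your version is slightly more explicit about existence (via the intermediate value theorem) and about why no root can lie on the decreasing branch, but there is no substantive difference.
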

\begin{proof}
See Appendix D.
\end{proof}

First of all, Equation~\eqref{cal_f} is derived from Equation~\eqref{T_opt} and Equation~\eqref{E_opt} by removing $b^\ast_n\text{log}_2(1+\frac{J_{n}}{b^\ast_n})$:
\begin{equation}
    (f^\ast_n)^3 + (\frac{H_n T^\ast_k}{z_n G_n}-\frac{e^{\text{cons}}_n}{G_n})f^\ast_n - \frac{H_n U_n}{z_n G_n} = 0.
\label{cal_f}
\end{equation}
According to Lemma~\ref{f_proof}, $f_n$ can be obtained from Equation~\eqref{cal_f} using a bisection method with a known $T_k$, and a clipping function is employed to ensure $f_n \in [f_n^{\text{min}},f_n^{\text{max}}]$. Next, since $Q_n(b_n)$ is a monotonically increasing function of $b_n$ according to Lemma~\ref{b_proof}, $b_n$ can be calculated from Equation~\eqref{E_opt} using a bisection method. Note that $b_n$ will become extremely large or even not exist when $T_k$ is too small due to the upper bound of $Q_n(b_n)$. To solve this problem, we use a clipping method for $b_n$ with a clipping threshold $b^{\text{max}}$. After value clipping, $f_n$ and $T_k$ are recalculated using~\eqref{E_opt} and~\eqref{T_opt}, respectively. The details of the proposed method is shown in Algorithm~\ref{weidaima}. 

The complexity of Algorithm~\ref{weidaima} is $\mathcal{O}\left( (S_k)^2\log_2{\left( \frac{1}{\varepsilon_0}\right)}\log_2{\left( \frac{1}{\varepsilon_1}\right)}\log_2{\left( \frac{1}{\varepsilon_2}\right)} \right)$, where $\varepsilon_0$ is the tolerance of searching for $T_k^\ast$, $\varepsilon_1$ and $\varepsilon_2$ are the accuracies of the bisection methods for calculating $b_n$ and $f_n$, respectively. 

In this paper, we adopt centralized optimization for solving the problem~\eqref{X}. Before FL training, all the devices send local information, such as $f_n^{\text{min}},\ f_n^{\text{max}},\ \text{and}\ D_n$, to the server. At each global iteration, the server selects a subset of devices using the proposed device selection method. Then, the server carries out Algorithm~\ref{weidaima} within the selected devices. Finally, the server broadcasts the model weights and spectrum allocation results (i.e., $f_n$ \text{and} $b_n$) to the selected devices. Due to the sufficient downlink bandwidth and the high transmit power of BS, the latency of transmitting allocation results is negligible.

\begin{table*}[t]
\centering
\caption{Number of model weights of different layers in CNNs and model sizes of different CNN architectures}
\renewcommand\arraystretch{1}
\setlength{\tabcolsep}{2mm}{
\begin{tabular}{ccccccccccc}\toprule
\multirow{2}{*}{Dataset} & \multicolumn{9}{c}{Number of model weights}                                                                                                                                & \multirow{2}{*}{Model size $z_n$} \\ \cline{2-10}
                         & $\emph{w}^{c_1}$ & $\emph{b}^{c_1}$ & $\emph{w}^{c_2}$ & $\emph{b}^{c_2}$ & $\emph{w}^{fc_1}$ & $\emph{b}^{fc_1}$ & $\emph{w}^{fc_2}$ & $\emph{b}^{fc_2}$ & All parameters &       \\\midrule
MNIST                    & 375              & 15               & 10500            & 28               & 100352            & 224               & 2240              & 10                & 113744         & 448 KB                            \\
CIFAR-10                 & 1125             & 15               & 10500            & 28               & 210000            & 300               & 3000              & 10                & 224978         & 882 KB                            \\
FashionMNIST             & 250              & 10               & 3000             & 12               & 15360             & 80                & 800               & 10                & 19522          & 79 KB              \\ \bottomrule               
\end{tabular}}
\label{numberweight}
\end{table*}

\renewcommand{\thefigure}{5}
\begin{figure}[t]
\centering
\subfigure[Energy cost of each device]{
\begin{minipage}[t]{1\linewidth}
\centering
\includegraphics[width=8.5cm]{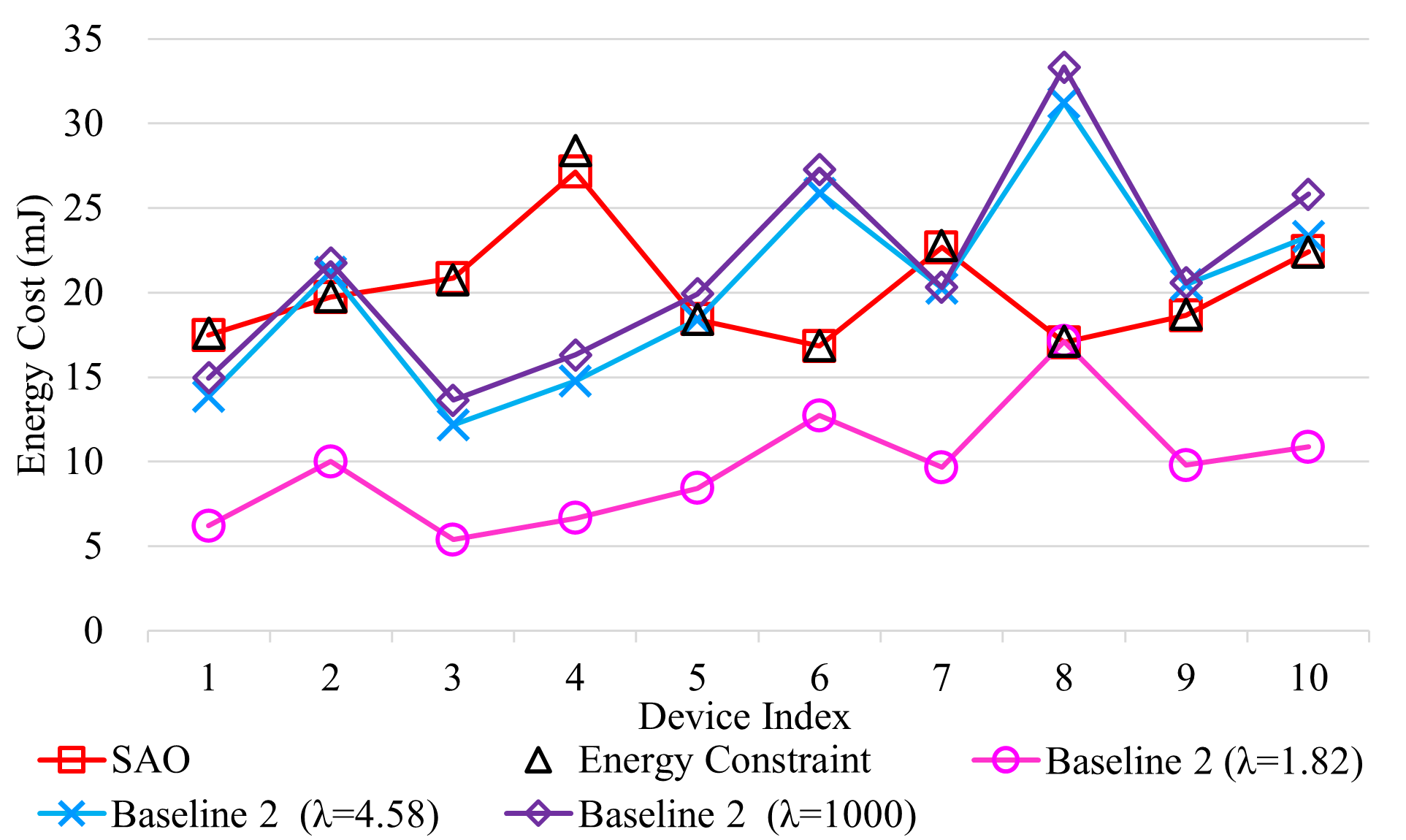}
\label{lmd}
\end{minipage}%
}%

\subfigure[Total energy cost and completion time of FL at one global iteration]{
\begin{minipage}[t]{1\linewidth}
\centering
\includegraphics[width=8.5cm]{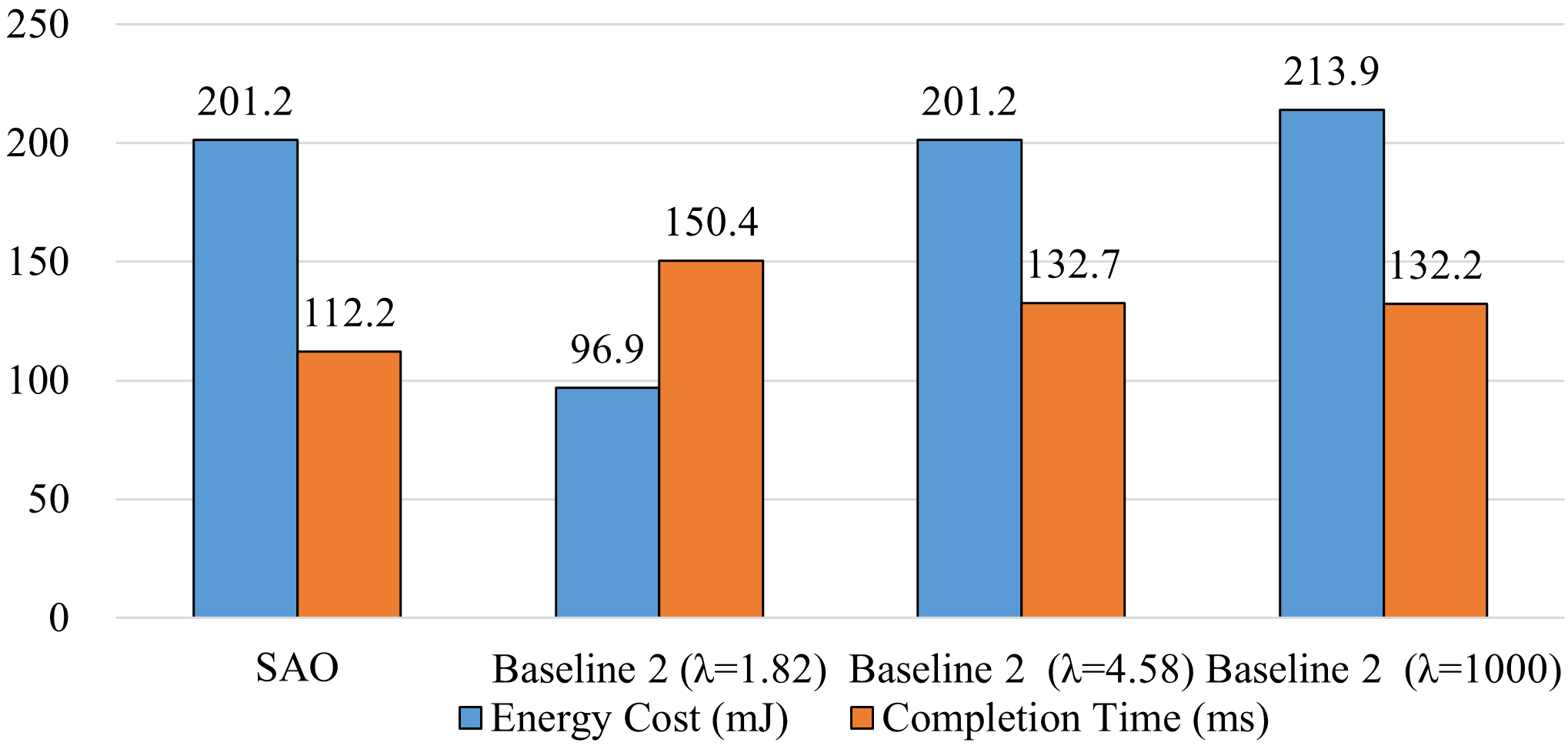}
\label{ET}
\end{minipage}%
}%
\caption{Time delay and energy consumption of FL under one global iteration with $S_k = 10$ devices. When $\lambda = 1.82$, Baseline 2 ensures all the devices to satisfy the energy constraints. When $\lambda = 4.58$, Baseline 2 has the same total energy consumption as that of SAO. When $\lambda = 1000$, Baseline 2 mainly focuses on minimizing the time delay.}
\label{eachdevice}
\end{figure}

\renewcommand{\thefigure}{6}
\begin{figure}[t]
  \centering
  \includegraphics[width=8.5cm]{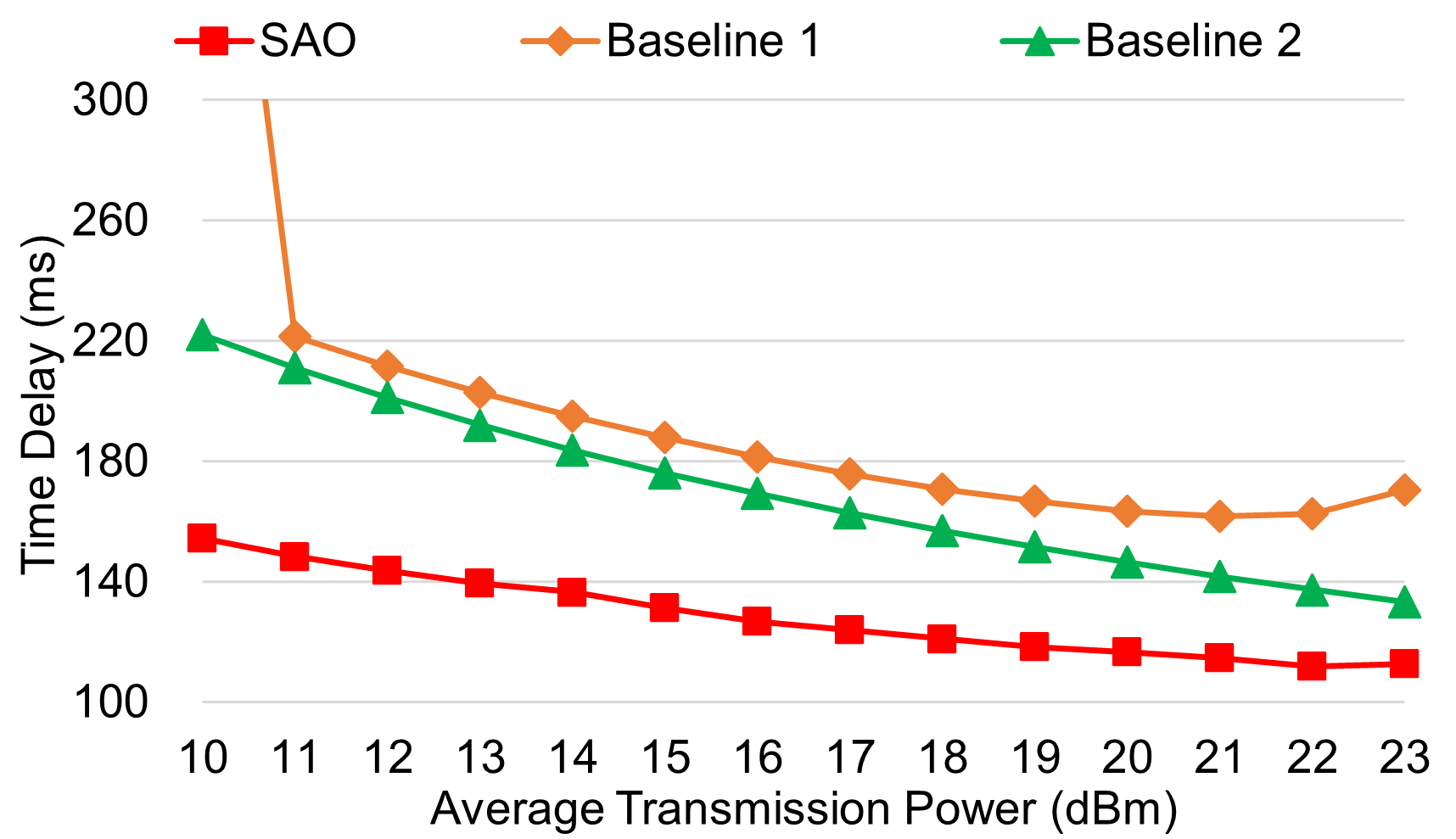}
\caption{Time delay versus average transmission power ($e^{\text{cons}} = 30\ \text{mJ}$).}
\label{P}
\end{figure}
\renewcommand{\thefigure}{7}
\begin{figure}[t]
  \centering
  \includegraphics[width=8.5cm]{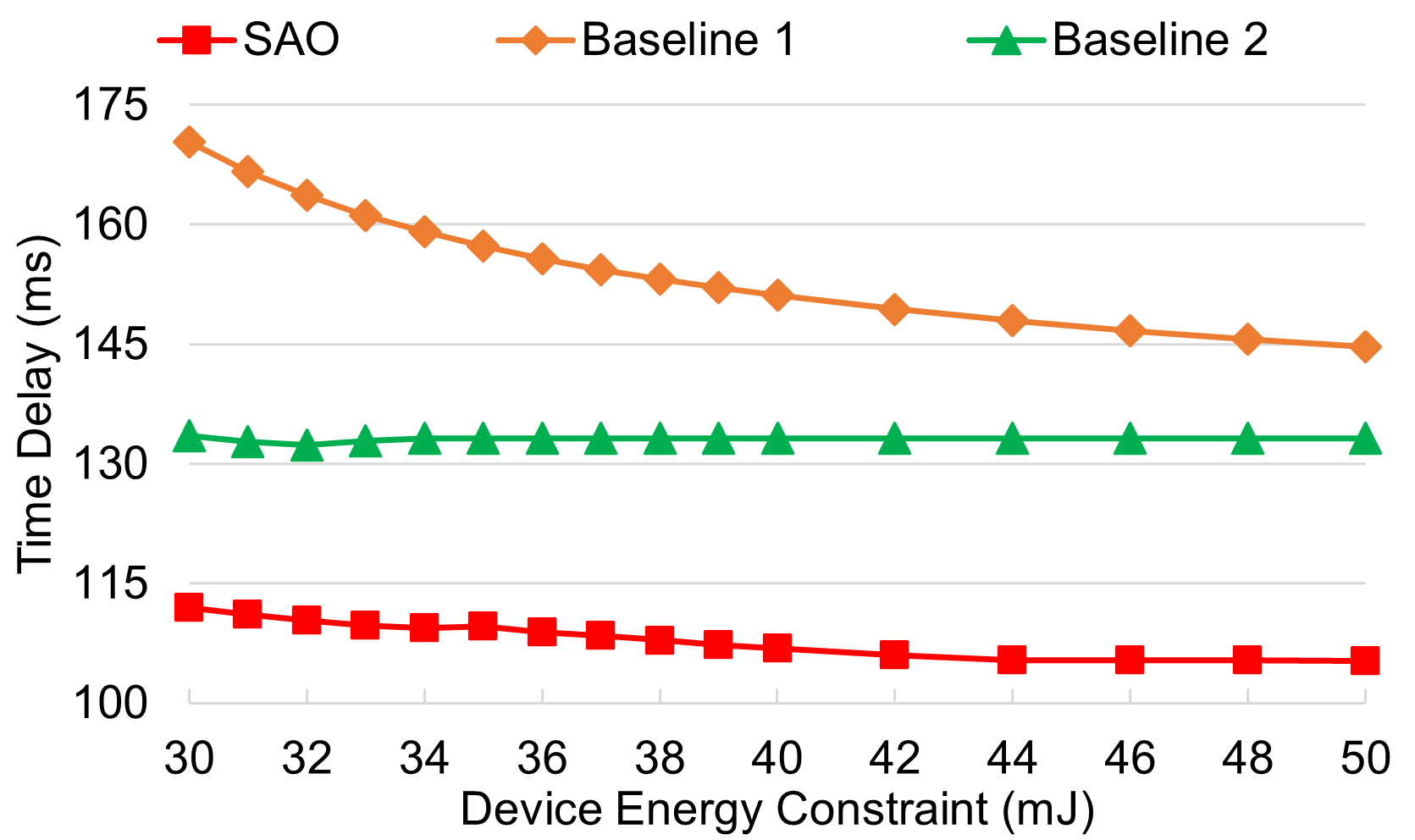}
\caption{Time delay versus energy constraint ($p = 23\ \text{dBm}$).}
\label{E}
\end{figure}

\section{PERFORMANCE EVALUATION}~\label{section6}
We consider $N = 100$ local devices randomly distributed in a cell of radius $R = 300$m, and a server is located at the center of the area. The path loss model is $128.1 + 37.6\log_{10}d$(km), and the standard deviation of shadow fading is $8$dB~\cite{Yang2021}. The power spectrum density of the additive Gaussian noise is $N_0 = -174$ dBm/Hz. 

Section~\ref{SAO_eval} evaluates spectrum allocation optimization with $S = 10$ devices. We have an equal maximum CPU frequency $f^\text{max}_1 = f^\text{max}_2 = ... = f^\text{max}_{N} = 2\ \text{GHz}$, an equal minimum CPU frequency $f^\text{min}_1 = f^\text{min}_2 = ... = f^\text{min}_{N} = 0.2\ \text{GHz}$, and an equal model size $z_1 = z_2 = ... = z_n = 448\ \text{KB}$.

In terms of device selection, the FL model is trained with 100 local devices on three datasets: MNIST~\cite{Cun90}, CIFAR-10, and FashionMNIST~\cite{xiao2017}. For MNIST and CIFAR-10, the model consists of two $5\times5$ convolution layers, and the output dimension of the first layer and the second layer are 15 and 28, respectively. Each layer is followed by $2\times2$ max pooling. The output of the pooling layer is flattened and then fed into the two linear layers. For FashionMNIST, the model consists of two $5\times5$ convolution layers, and the output dimension of the first layer and the second layer are 10 and 12, respectively. Each layer is followed by $2\times2$ max pooling. The output of the pooling layer is flattened and then fed into the two linear layers. The number of model weights of different layers in CNNs is provided in Table~\ref{numberweight}. FL training is conducted under different levels of non-iid data (i.e., $\sigma = 0.5, 0.8,$ and $H$) with a learning rate of 0.05. The meanings of $\sigma = 0.8$ and $\sigma = 0.5$ are illustrated in Section~\ref{Kmeanscluster}. $\sigma = H$ means that the data in each local dataset only cover two labels. 80\% of data belong to a majority class, and others belong to a secondary class. When $\sigma = 0.5$ or 0.8, the FL model is considered to achieve convergence if the accuracies of the global model on MNIST, CIFAR-10, and FashionMNIST are 99\%, 55\%, and 87\%, respectively. For $\sigma = H$, the target accuracies of the global model on MNIST, CIFAR-10, and FashionMNIST are 98.5\%, 52\%, and 85\%, respectively.

\renewcommand{\thefigure}{8}
\begin{figure*}[h]
\centering
\subfigure[MNIST]{
\begin{minipage}[t]{0.33\linewidth}
\centering
\includegraphics[width=6cm]{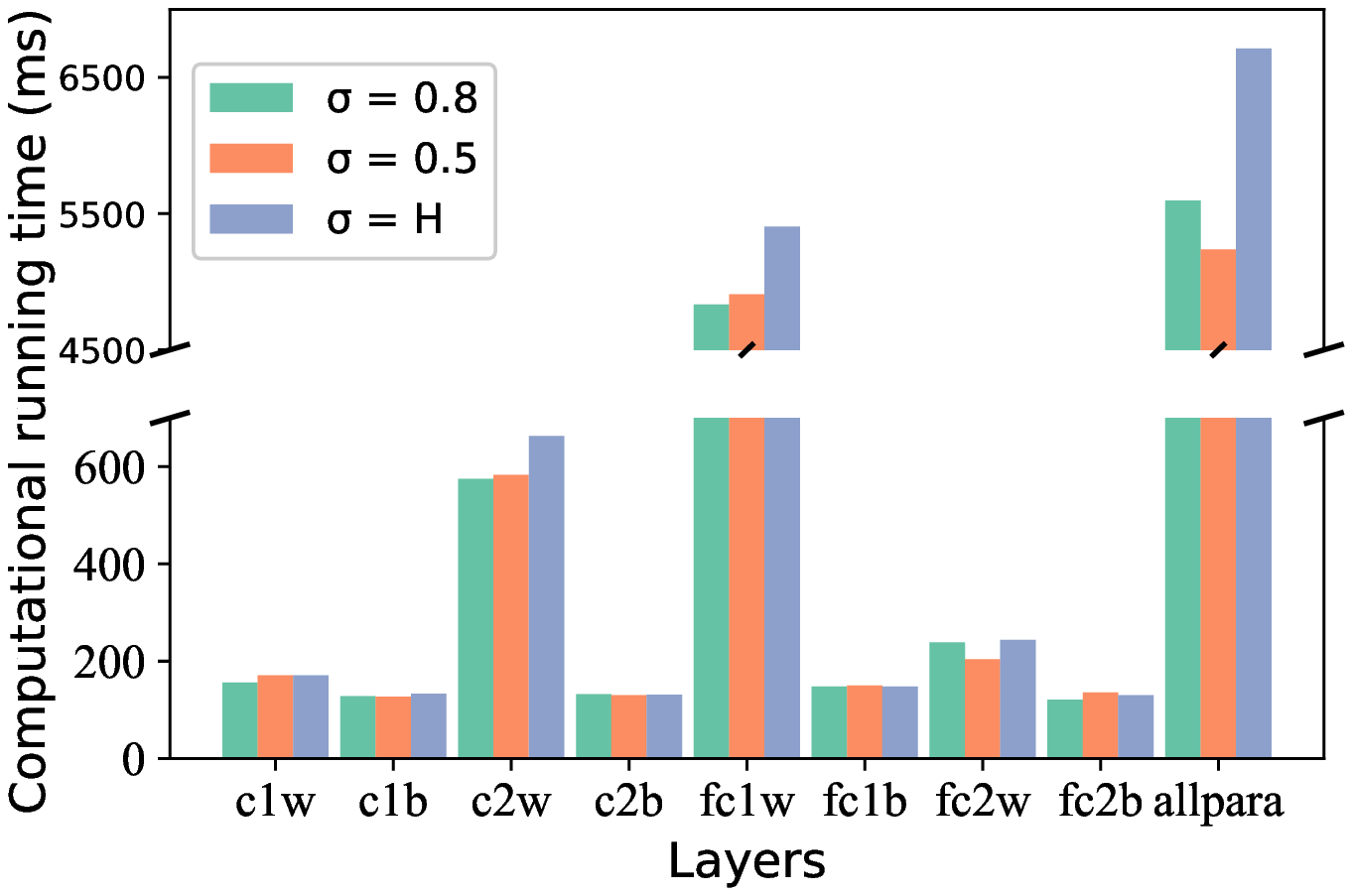}
%\caption{fig1}
\label{confusion_1}
\end{minipage}%
}%
\subfigure[CIFAR-10]{
\begin{minipage}[t]{0.33\linewidth}
\centering
\includegraphics[width=6cm]{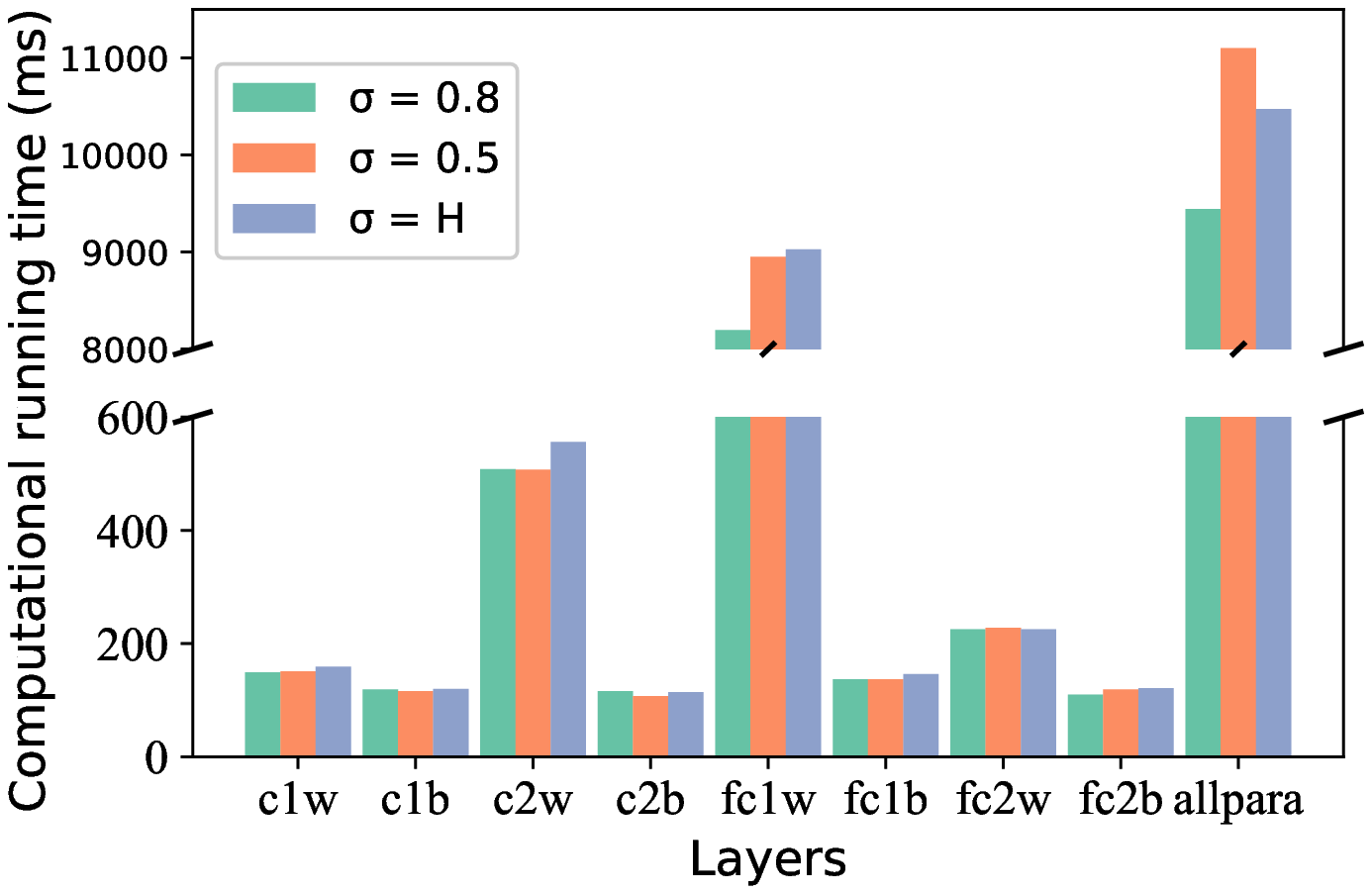}
\label{confusion_2}
%\caption{fig2}
\end{minipage}%
}%
\subfigure[FashionMNIST]{
\begin{minipage}[t]{0.33\linewidth}
\centering
\includegraphics[width=6cm]{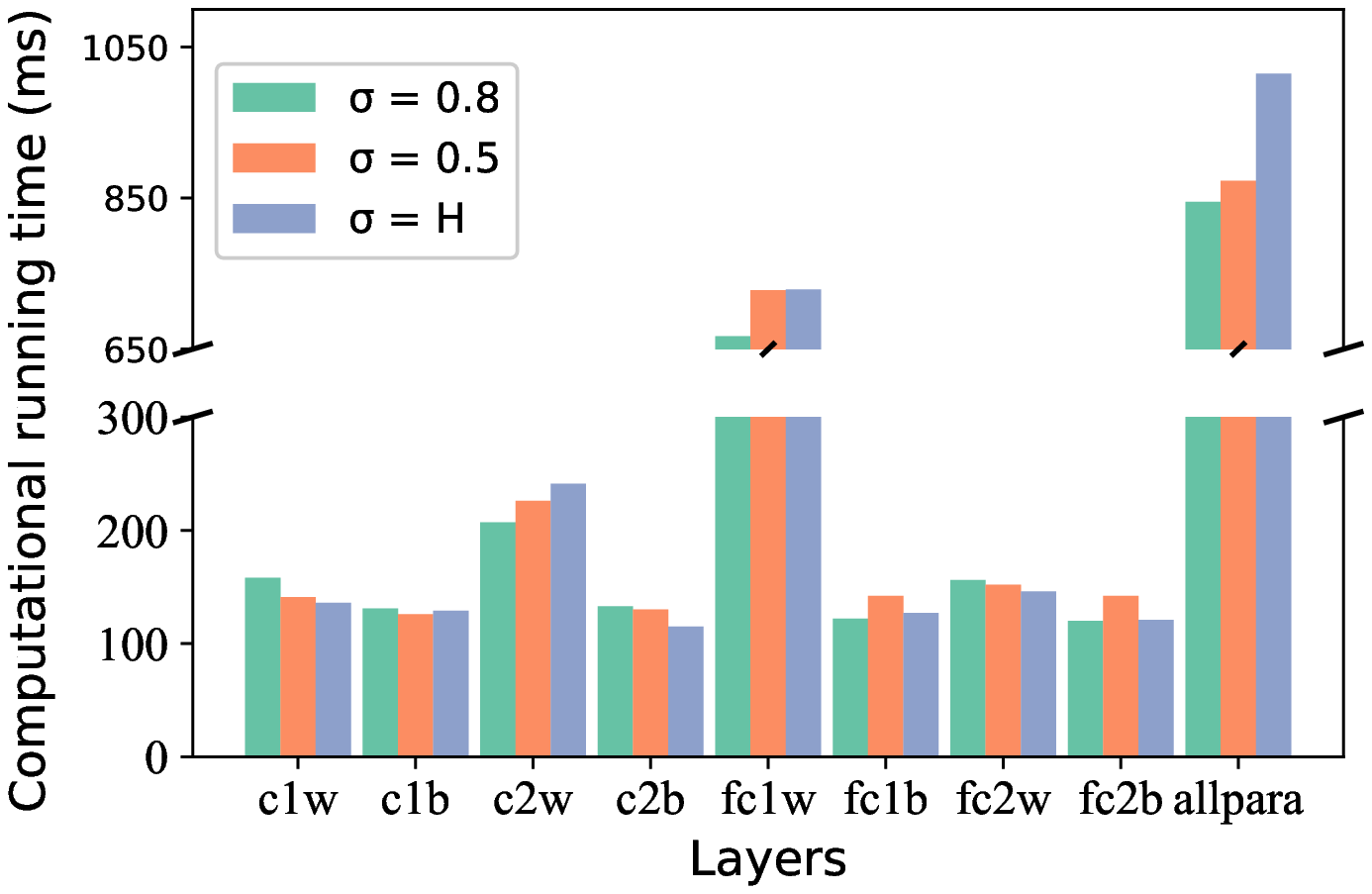}
\label{confusion_3}
%\caption{fig2}
\end{minipage}%
}%
\caption{Computational running time for training the K-means algorithm.}
\label{T_Kmeans}
\end{figure*}
\renewcommand{\thefigure}{9}
\begin{figure*}[h]
\centering
\subfigure[MNIST]{
\begin{minipage}[t]{0.33\linewidth}
\centering
\includegraphics[width=6cm]{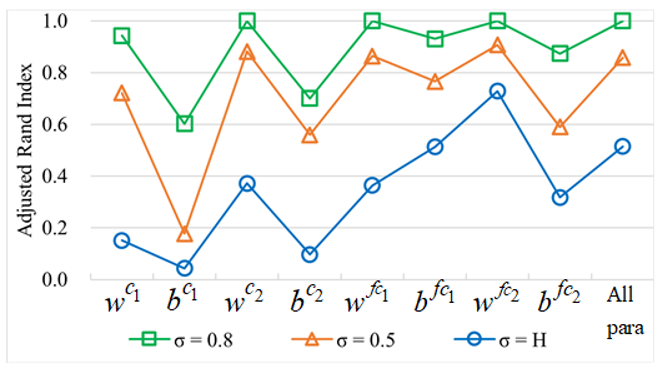}
%\caption{fig1}
\label{confusion_1}
\end{minipage}%
}%
\subfigure[CIFAR-10]{
\begin{minipage}[t]{0.33\linewidth}
\centering
\includegraphics[width=6cm]{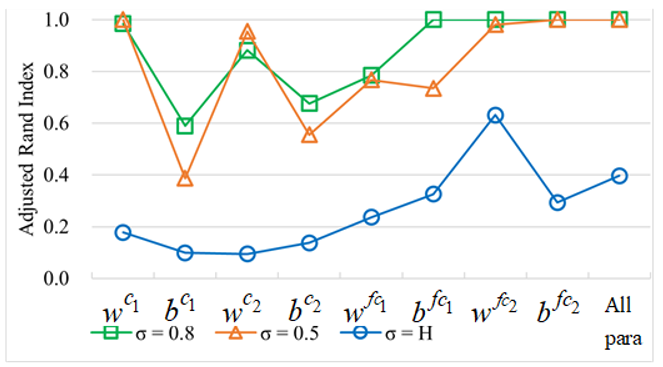}
\label{confusion_2}
%\caption{fig2}
\end{minipage}%
}%
\subfigure[FashionMNIST]{
\begin{minipage}[t]{0.33\linewidth}
\centering
\includegraphics[width=6cm]{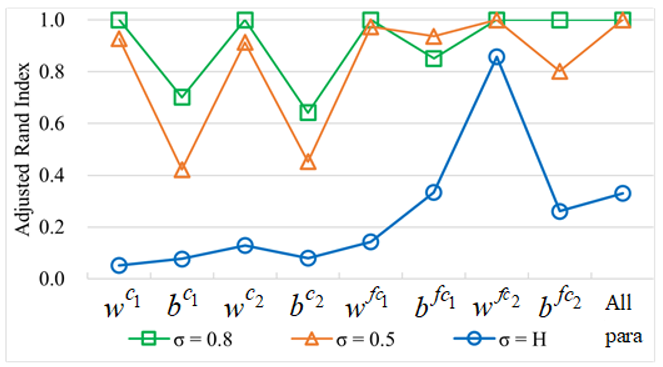}
\label{confusion_3}
%\caption{fig2}
\end{minipage}%
}%
\caption{Adjusted Rand index (ARI) of the K-means algorithm.}
\label{ARI}
\end{figure*}

\subsection{Evaluation of spectrum allocation optimization}~\label{SAO_eval}
The proposed spectrum allocation optimization method (labeled as SAO) is compared with two baselines. Baseline 1 sets all the devices with equal bandwidth, which means $b_1 = b_2 = ... = b_{N} = \frac{B}{N}$. Baseline 2, so called FEDL~\cite{Dinh21}, jointly optimizes the total energy consumption and the time delay $E + \lambda $, where $\lambda$ is an importance weight.

First, SAO is compared with Baseline 2 with different $\lambda$. We set $p_1 = p_2 = ... = p_{N} = 23\ \text{dBm}$ and $B = 20\ \text{MHz}$. The energy constraints are randomly distributed between $15\ \text{mJ}$ to $30\ \text{mJ}$. The total energy consumption, the energy consumption of each device, and the time delay are shown in Fig.~\ref{eachdevice}. When $\lambda = 1.82$, Baseline 2 makes sure that all the local devices meet their corresponding energy constraints. However, the time delay of Baseline 2 is much higher than that of SAO. Baseline 2 with $\lambda = 4.58$ achieves the same total energy consumption while a higher time delay compared with SAO. Whereas, four local devices exceed the energy constraints. The time delay of Baseline 2 with $\lambda = 1000$ is higher than that of SAO, and the energy constraints are not met. To sum up, Baseline 2 will not provide feasible solutions for the problem~\eqref{X} if $\lambda > 1.82$. Although Baseline 2 can attain feasible solutions when $\lambda \leq 1.82$, these solutions achieve a higher time delay compared with SAO. In contrast, SAO enables FL to achieve the minimum time delay while meeting the energy constraints.

Furthermore, we investigate the completion time under different average transmit power and energy constraints, respectively. In the following experiments, we choose an equal energy constraint $e^{\text{cons}}_1 = ... = e^{\text{cons}}_N = e^{\text{cons}}$. $\lambda$ in Baseline 2 is tuned to make the device with the highest energy cost just meet the energy constraint. In this case, Baseline 2 allows all individual energy costs to be less than or equal to the constraints. Given the energy constraint $e^{\text{cons}}= 30 \ \text{mJ}$, Fig.~\ref{P} depicts how the time delay changes with the increasing of average transmission power. As shown in Fig.~\ref{P}, SAO has the lowest time delay compared with the other two baselines. 

Finally, we set average transmit power of all devices to be a fixed value $23\ \text{dBm}$, and the energy constraints are changed from $30\ \text{mJ}$ to $50\ \text{mJ}$. Fig.~\ref{E} shows the time delay versus different energy constraints. As the energy constraints are relaxed, SAO always achieves the lowest time delay compared with Baseline 1 and Baseline 2.

\renewcommand{\thefigure}{10}
\begin{figure*}[htbp]
\centering
\subfigure[MNIST ($\sigma = 0.5$)]{
\begin{minipage}[t]{0.33\linewidth}
\centering
\includegraphics[width=6cm]{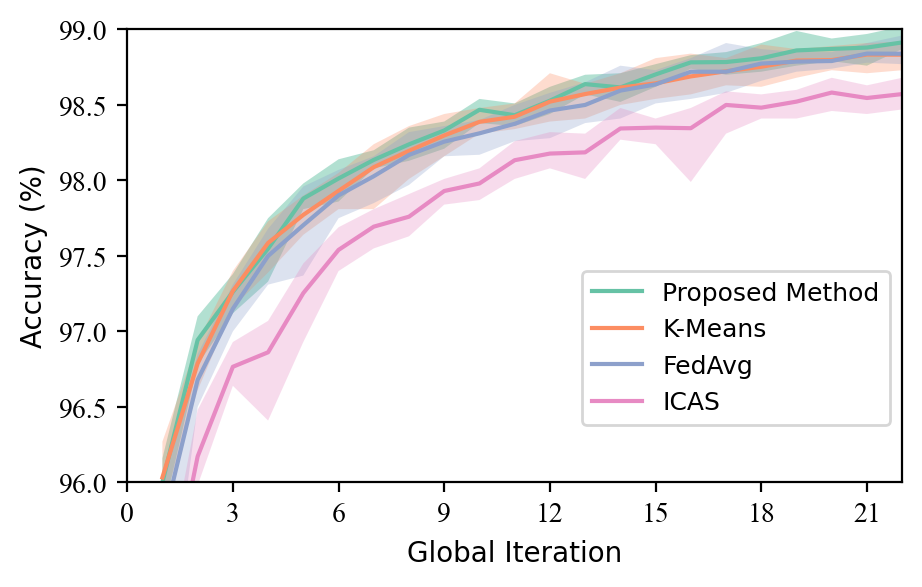}
%\caption{fig1}
\label{Accuracy_11}
\end{minipage}%
}%
\subfigure[MNIST ($\sigma = 0.8$)]{
\begin{minipage}[t]{0.33\linewidth}
\centering
\includegraphics[width=6cm]{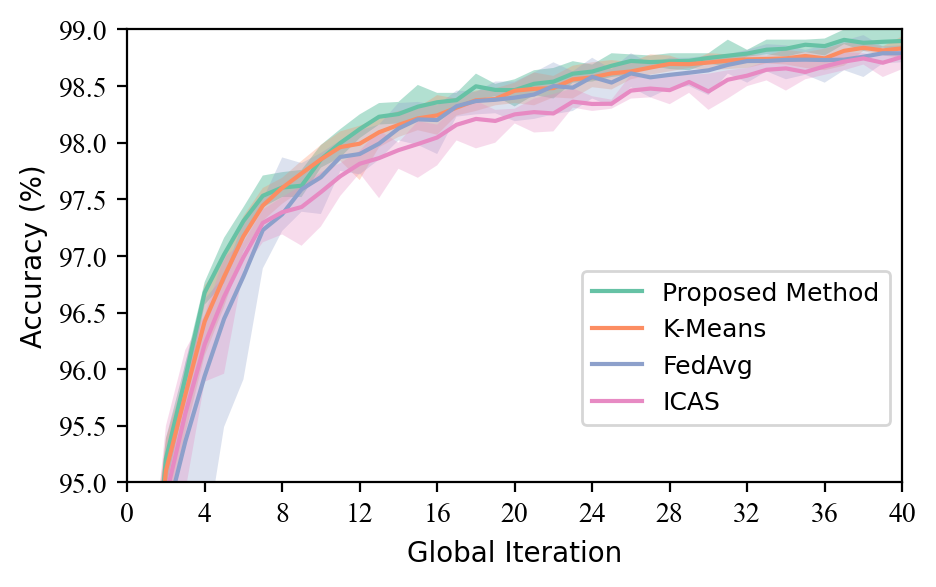}
\label{Accuracy_22}
%\caption{fig2}
\end{minipage}%
}%
\subfigure[MNIST ($\sigma = H$)]{
\begin{minipage}[t]{0.33\linewidth}
\centering
\includegraphics[width=6cm]{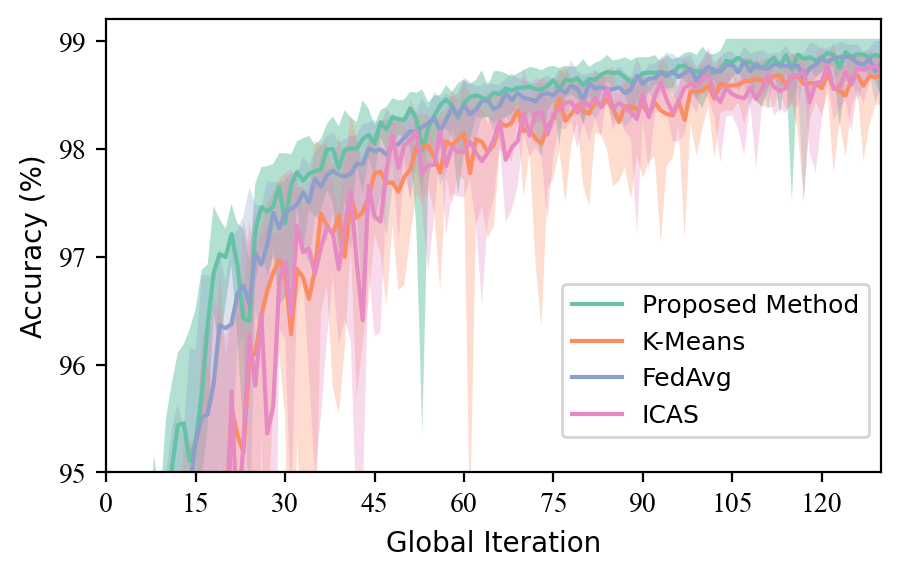}
\label{Accuracy_33}
%\caption{fig2}
\end{minipage}%
}%

\subfigure[FashionMNIST ($\sigma = 0.5$)]{
\begin{minipage}[t]{0.33\linewidth}
\centering
\includegraphics[width=6cm]{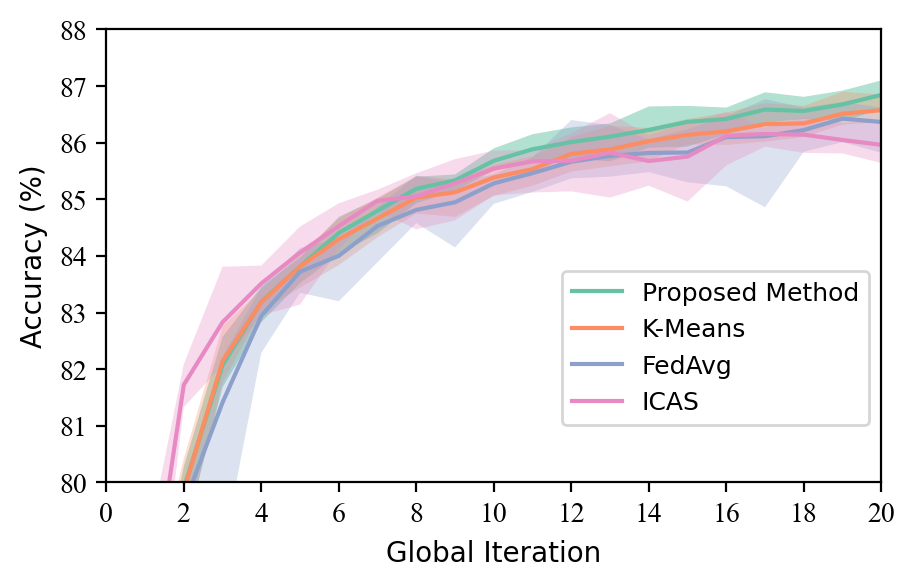}
\label{Accuracy_1}
%\caption{fig1}
\end{minipage}%
}%
\subfigure[FashionMNIST ($\sigma = 0.8$)]{
\begin{minipage}[t]{0.33\linewidth}
\centering
\includegraphics[width=6cm]{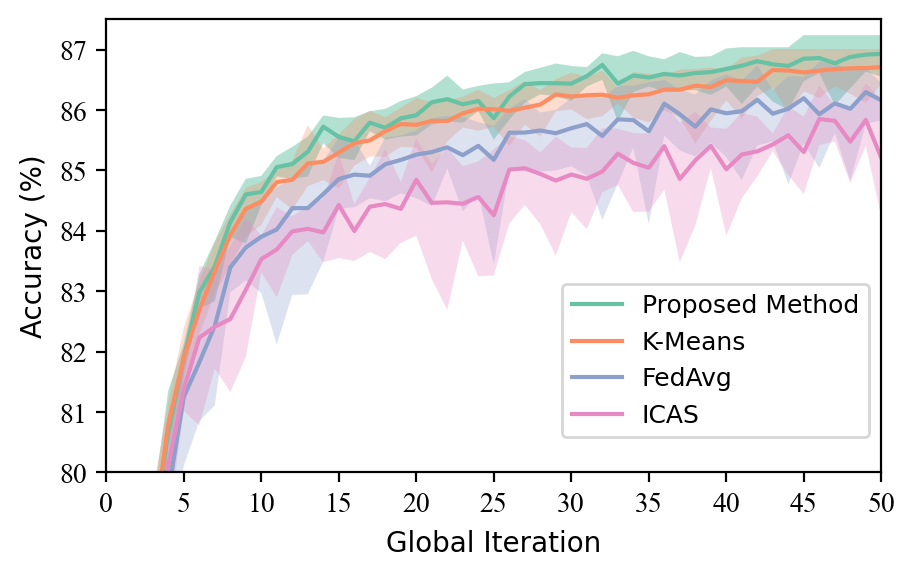}
\label{Accuracy_2}
%\caption{fig2}
\end{minipage}%
}%
\subfigure[FashionMNIST ($\sigma = H$)]{
\begin{minipage}[t]{0.33\linewidth}
\centering
\includegraphics[width=6cm]{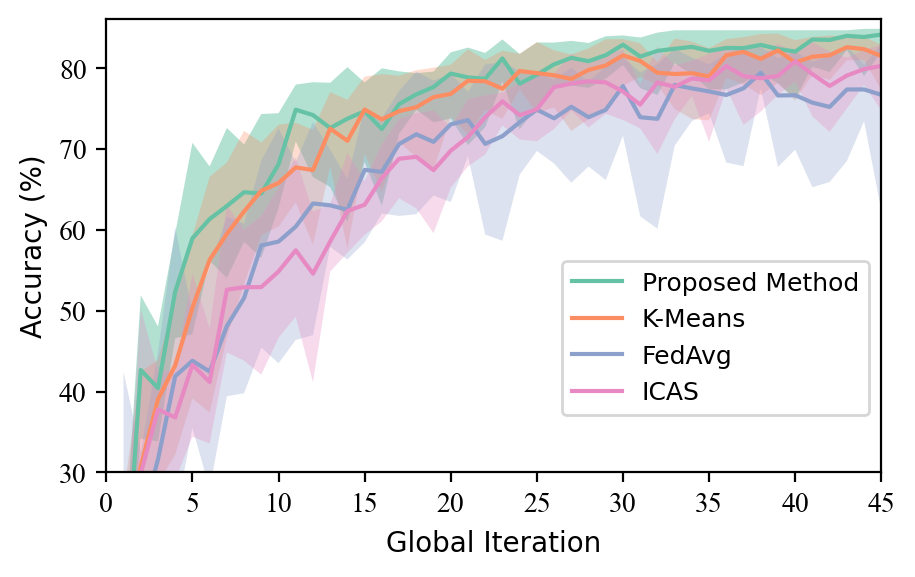}
\label{Accuracy_3}
%\caption{fig2}
\end{minipage}%
}%

\subfigure[CIFAR-10 ($\sigma = 0.5$)]{
\begin{minipage}[t]{0.33\linewidth}
\centering
\includegraphics[width=6cm]{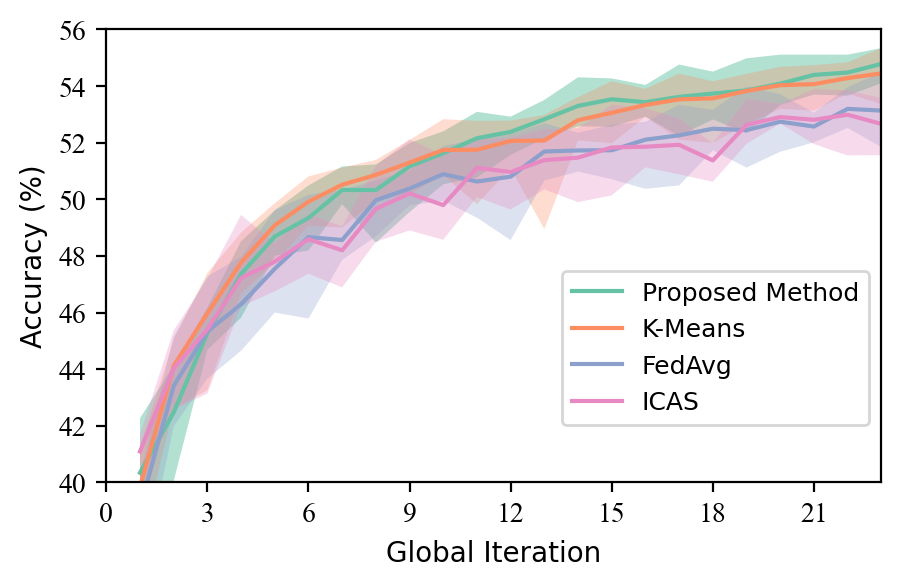}
\label{Accuracy_4}
%\caption{fig1}
\end{minipage}%
}%
\subfigure[CIFAR-10 ($\sigma = 0.8$)]{
\begin{minipage}[t]{0.33\linewidth}
\centering
\includegraphics[width=6cm]{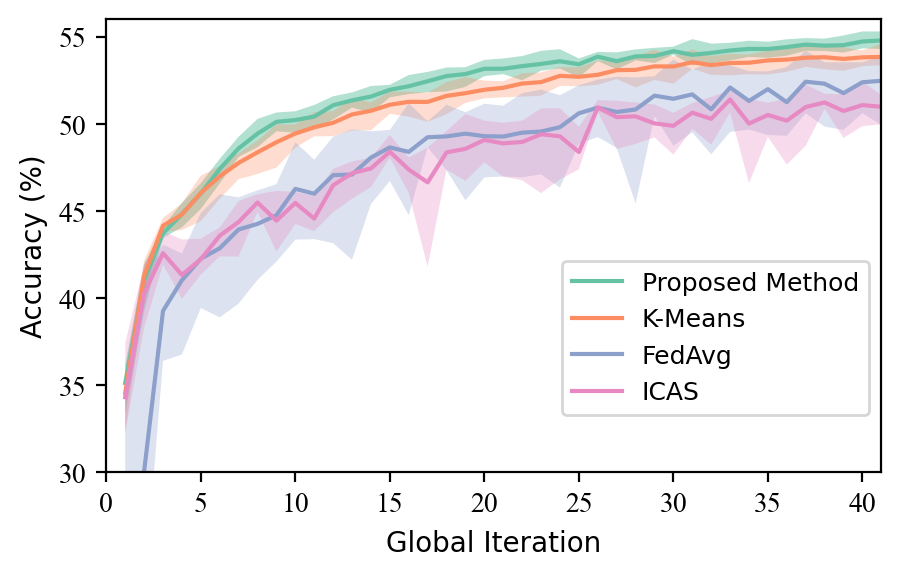}
\label{Accuracy_5}
%\caption{fig2}
\end{minipage}%
}%
\subfigure[CIFAR-10 ($\sigma = H$)]{
\begin{minipage}[t]{0.33\linewidth}
\centering
\includegraphics[width=6cm]{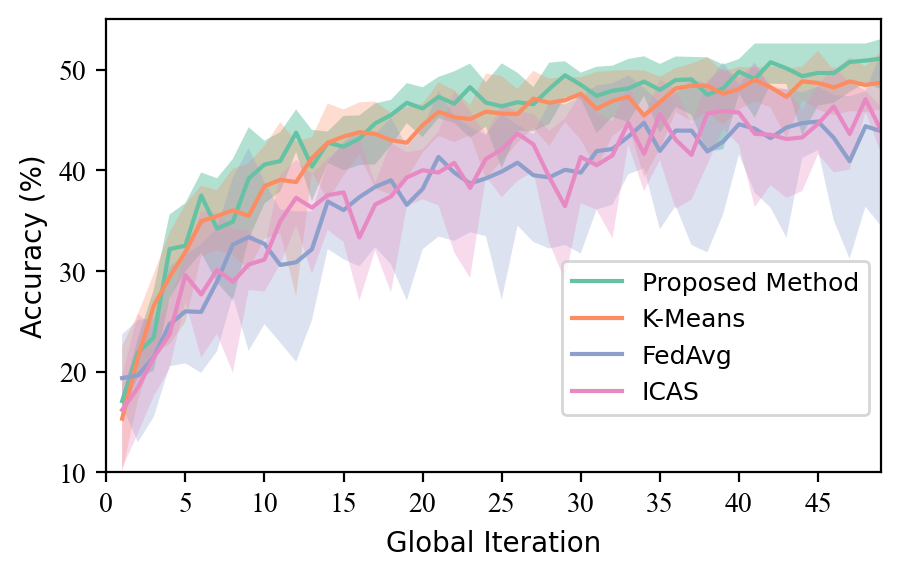}
\label{Accuracy_6}
%\caption{fig2}
\end{minipage}%
}%
\caption{Accuracy on the testing set under different $\sigma\ (S = 10)$.}
\label{Accuracy}
\end{figure*}

\renewcommand{\thefigure}{11}
\begin{figure*}[htbp]
\centering
\subfigure[MNIST ($\sigma = 0.5$)]{
\begin{minipage}[t]{0.33\linewidth}
\centering
\includegraphics[width=6cm]{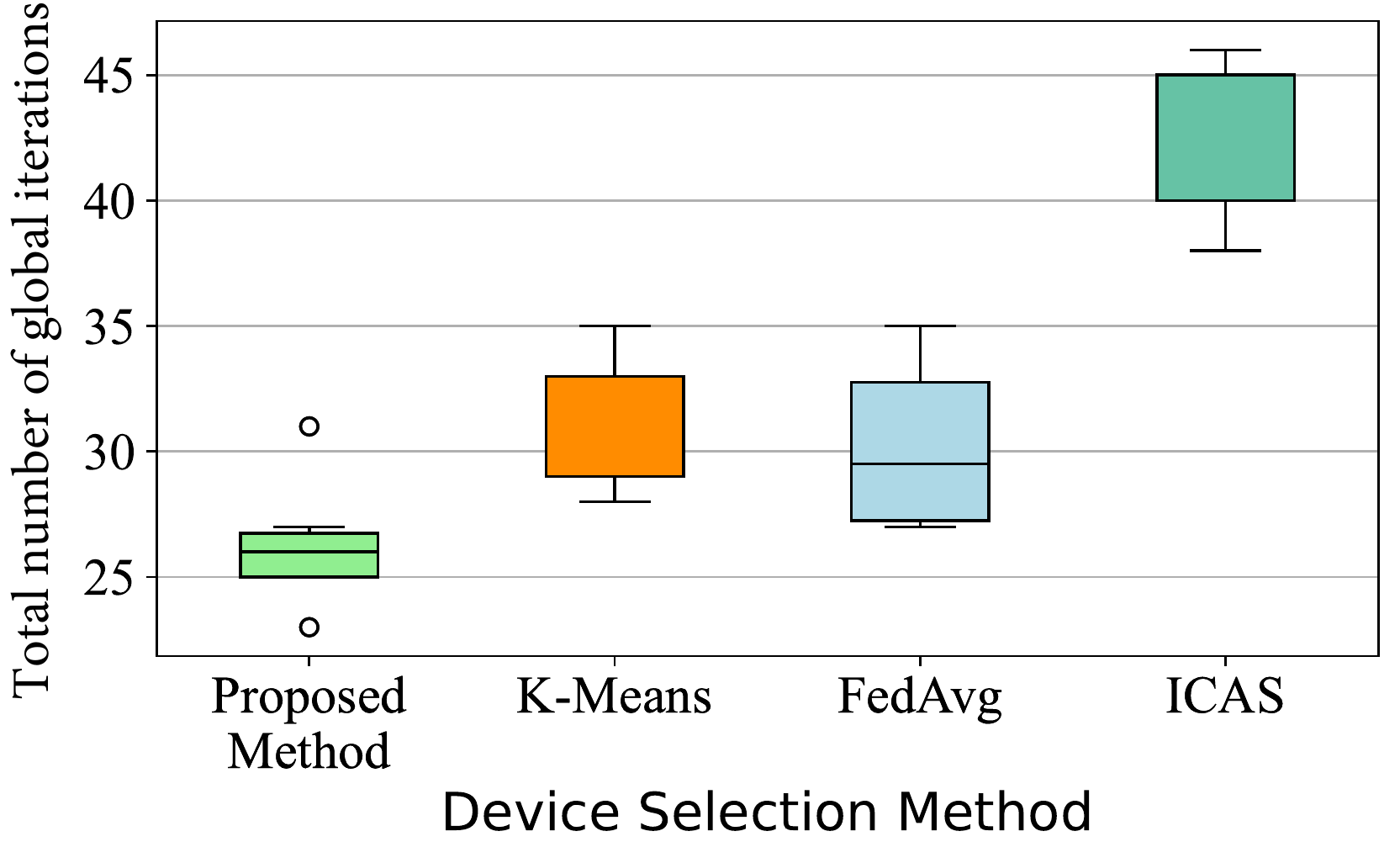}
%\caption{fig1}
\label{confusion_1}
\end{minipage}%
}%
\subfigure[MNIST ($\sigma = 0.8$)]{
\begin{minipage}[t]{0.33\linewidth}
\centering
\includegraphics[width=6cm]{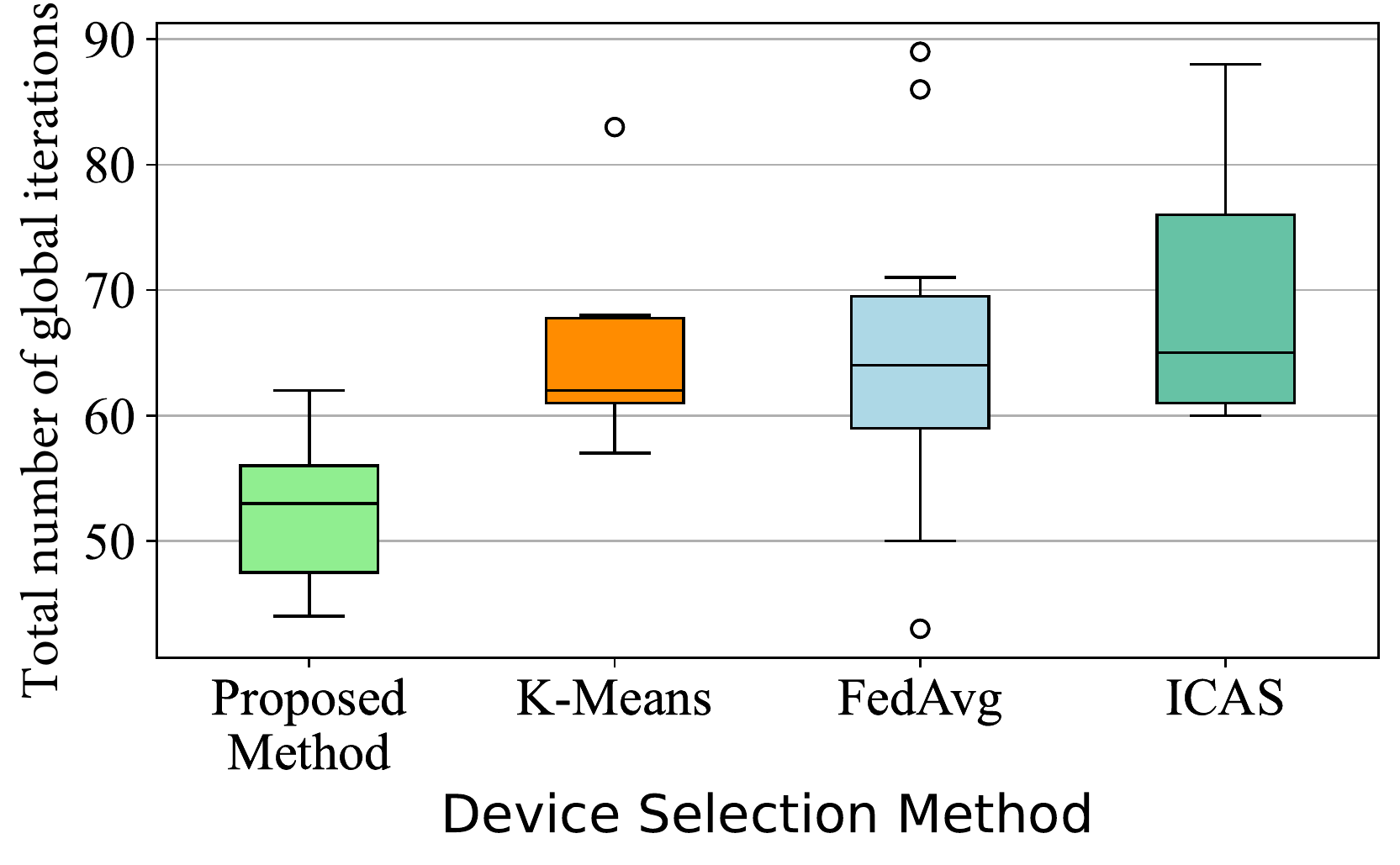}
\label{confusion_2}
%\caption{fig2}
\end{minipage}%
}%
\subfigure[MNIST ($\sigma = H$)]{
\begin{minipage}[t]{0.33\linewidth}
\centering
\includegraphics[width=6cm]{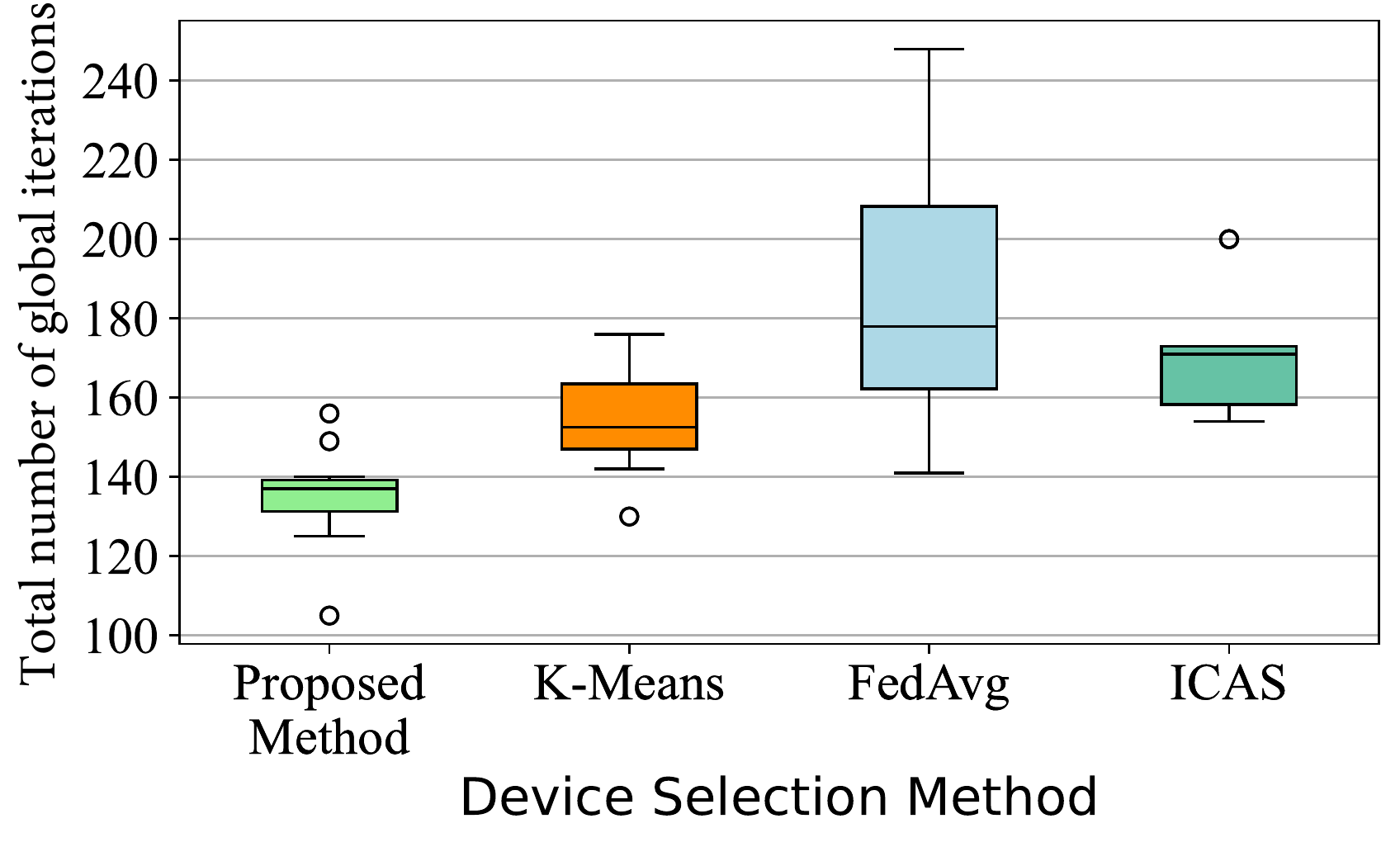}
\label{confusion_3}
%\caption{fig2}
\end{minipage}%
}%

\subfigure[FashionMNIST ($\sigma = 0.5$)]{
\begin{minipage}[t]{0.33\linewidth}
\centering
\includegraphics[width=6cm]{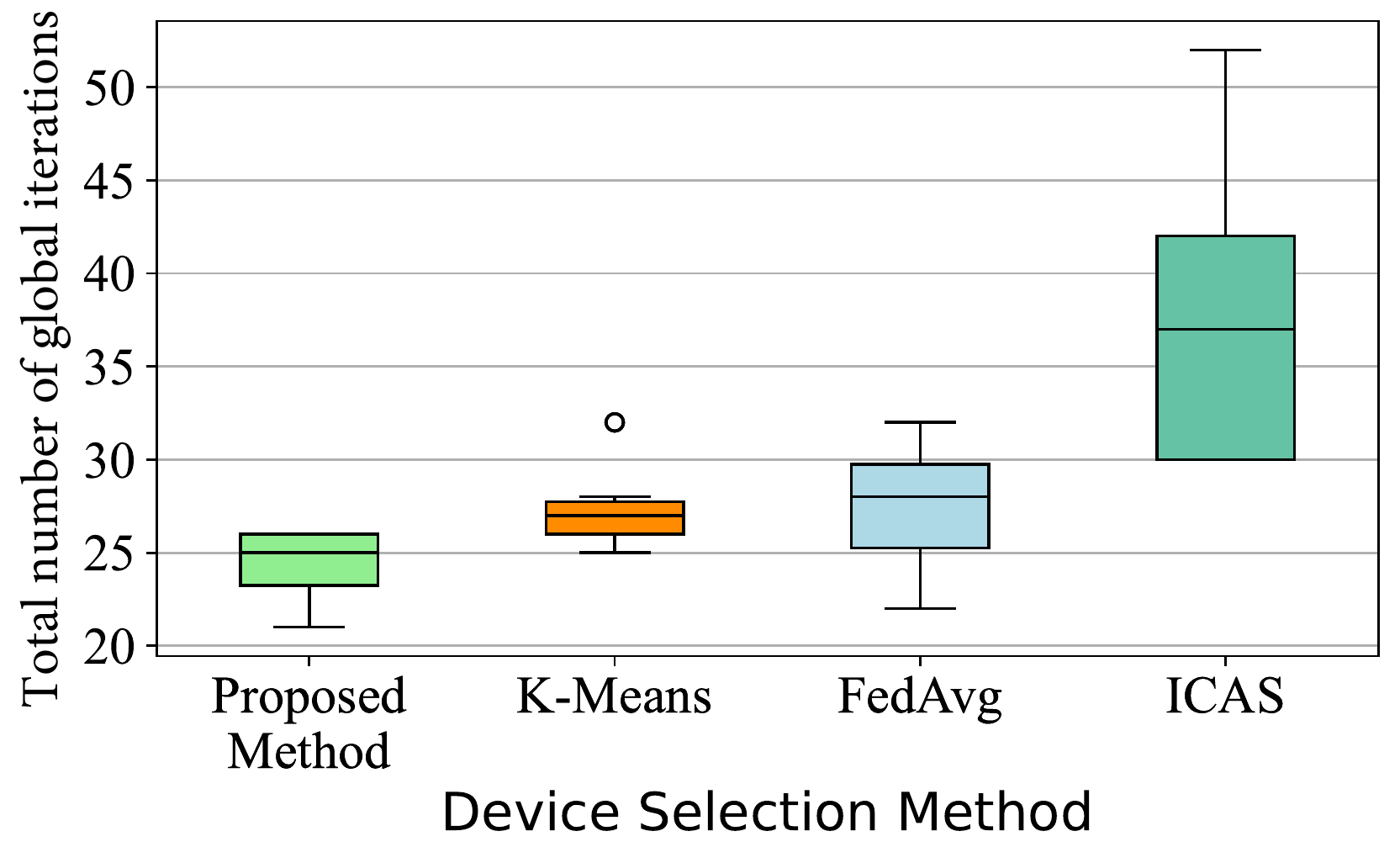}
\label{confusion_4}
%\caption{fig1}
\end{minipage}%
}%
\subfigure[FashionMNIST ($\sigma = 0.8$)]{
\begin{minipage}[t]{0.33\linewidth}
\centering
\includegraphics[width=6cm]{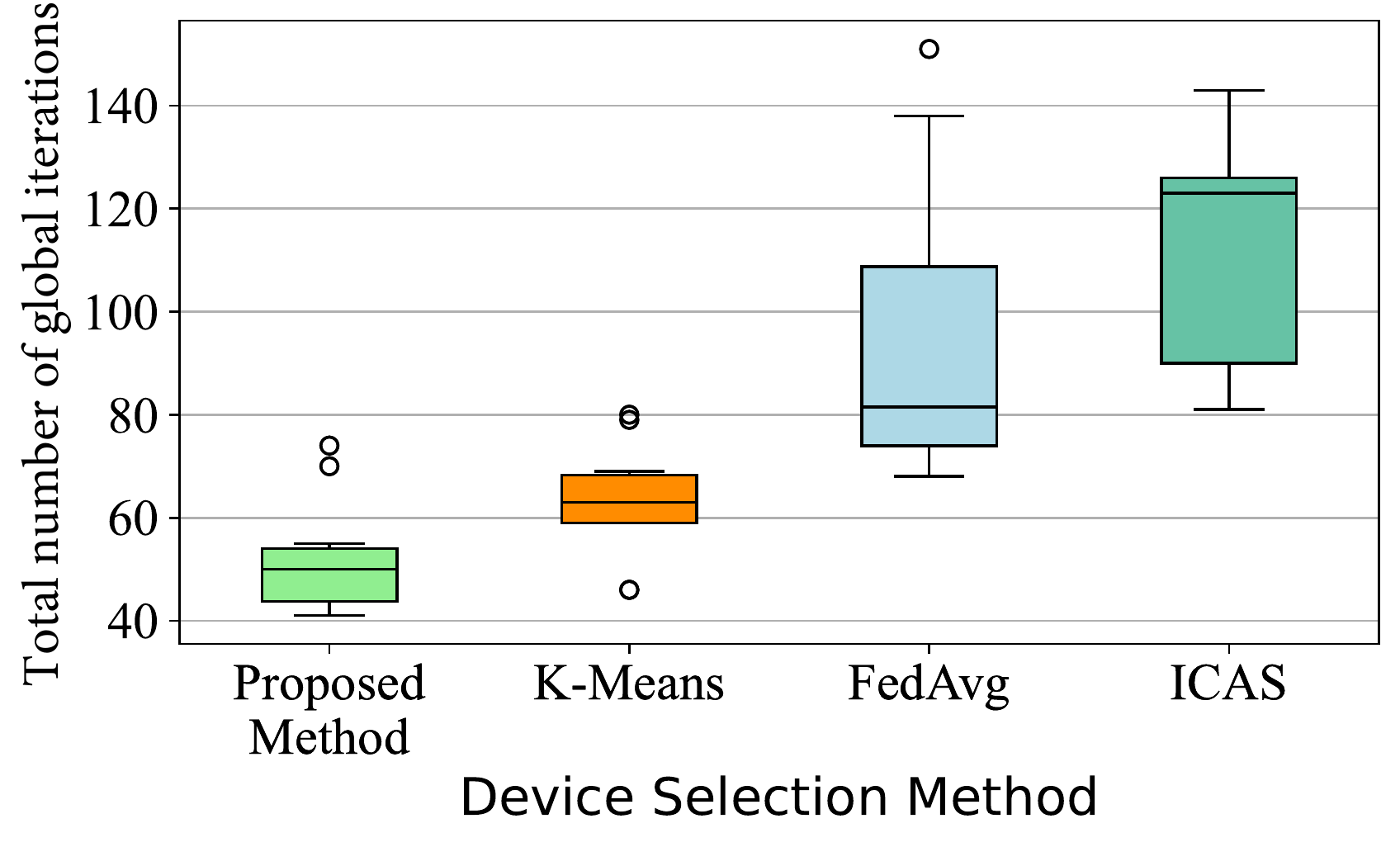}
\label{confusion_5}
%\caption{fig2}
\end{minipage}%
}%
\subfigure[FashionMNIST ($\sigma = H$)]{
\begin{minipage}[t]{0.33\linewidth}
\centering
\includegraphics[width=6cm]{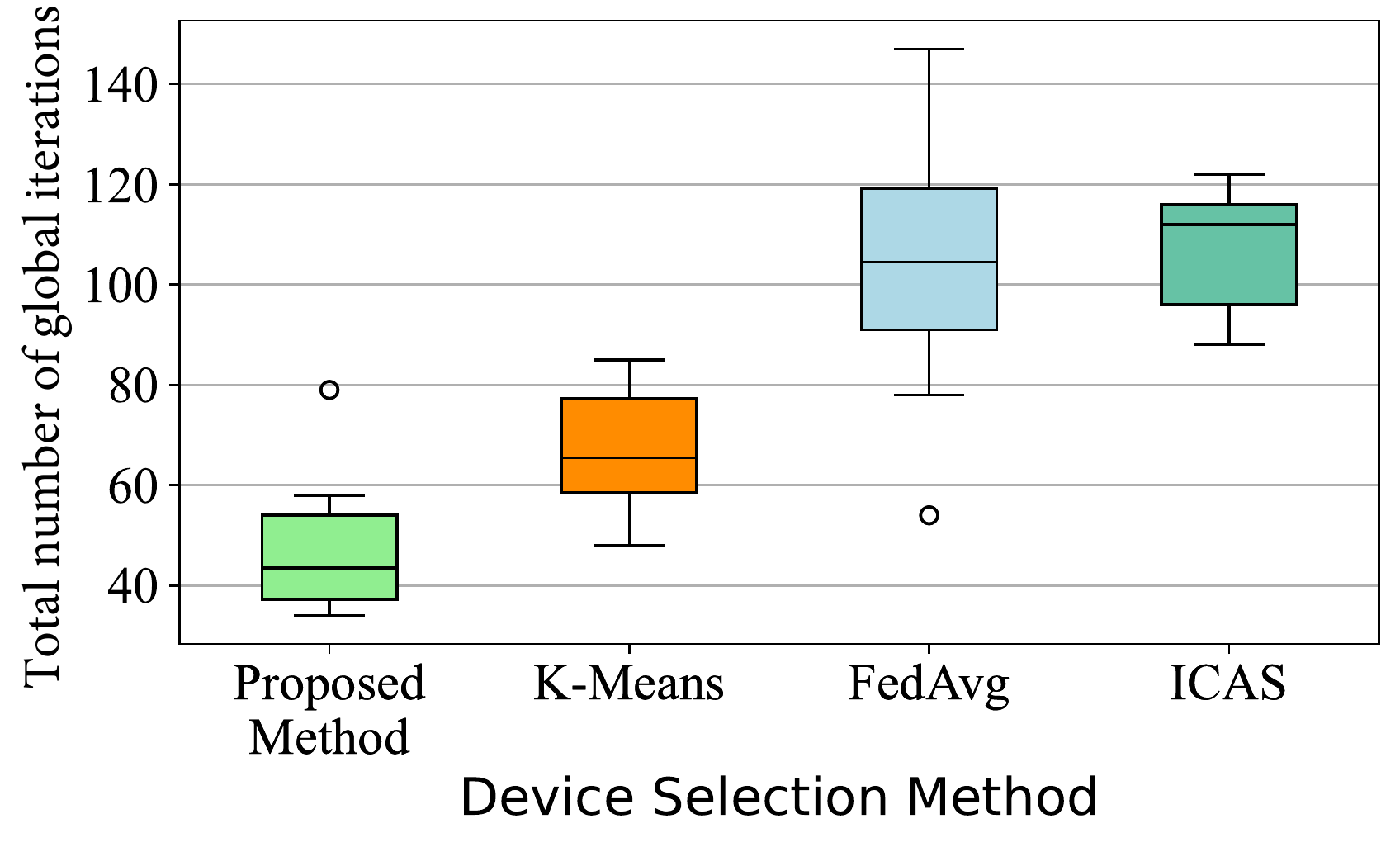}
\label{confusion_6}
%\caption{fig2}
\end{minipage}%
}%

\subfigure[CIFAR-10 ($\sigma = 0.5$)]{
\begin{minipage}[t]{0.33\linewidth}
\centering
\includegraphics[width=6cm]{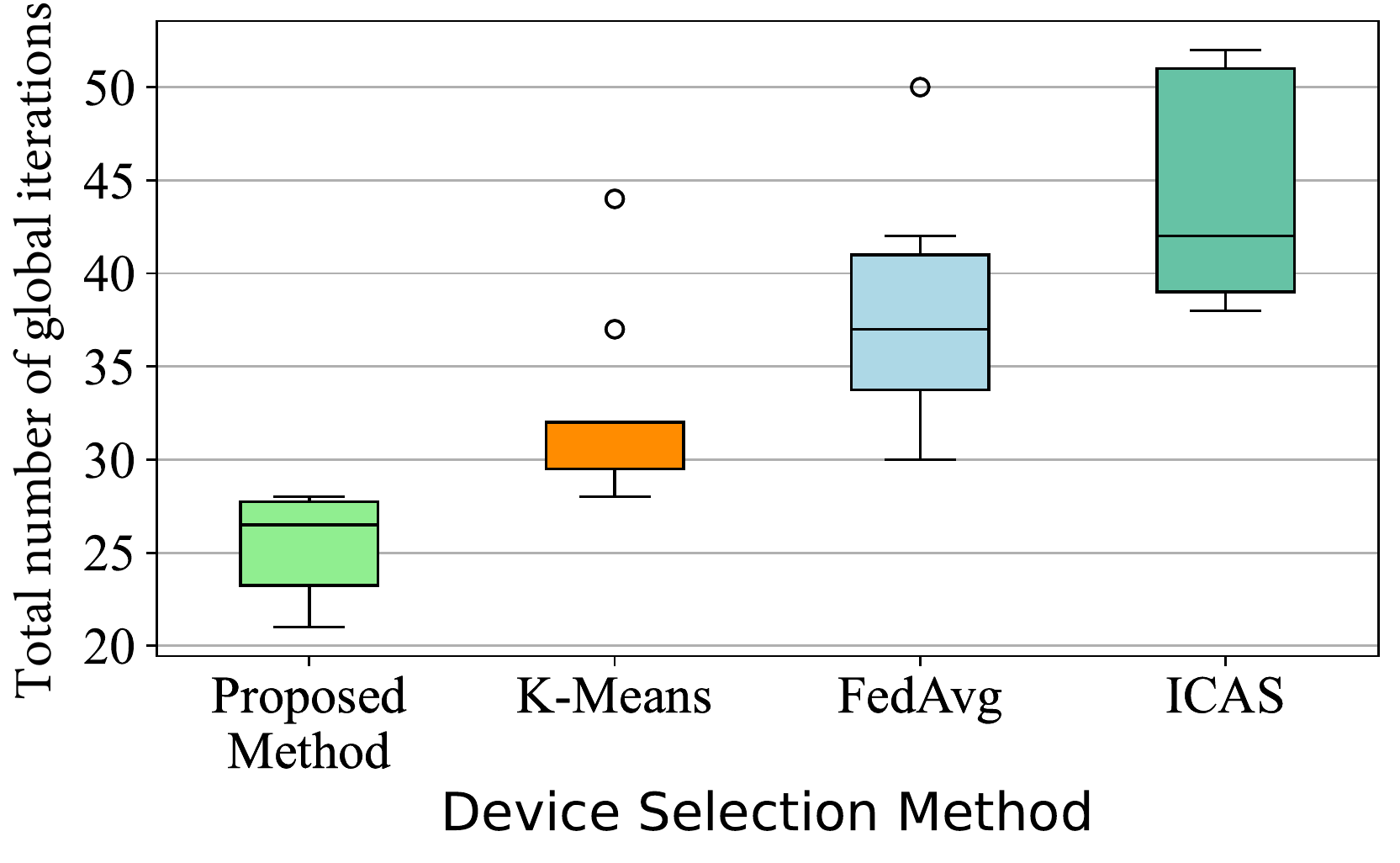}
\label{confusion_7}
%\caption{fig1}
\end{minipage}%
}%
\subfigure[CIFAR-10 ($\sigma = 0.8$). The accuracy of ICAS does not
achieve the target accuracy]{
\begin{minipage}[t]{0.33\linewidth}
\centering
\includegraphics[width=6cm]{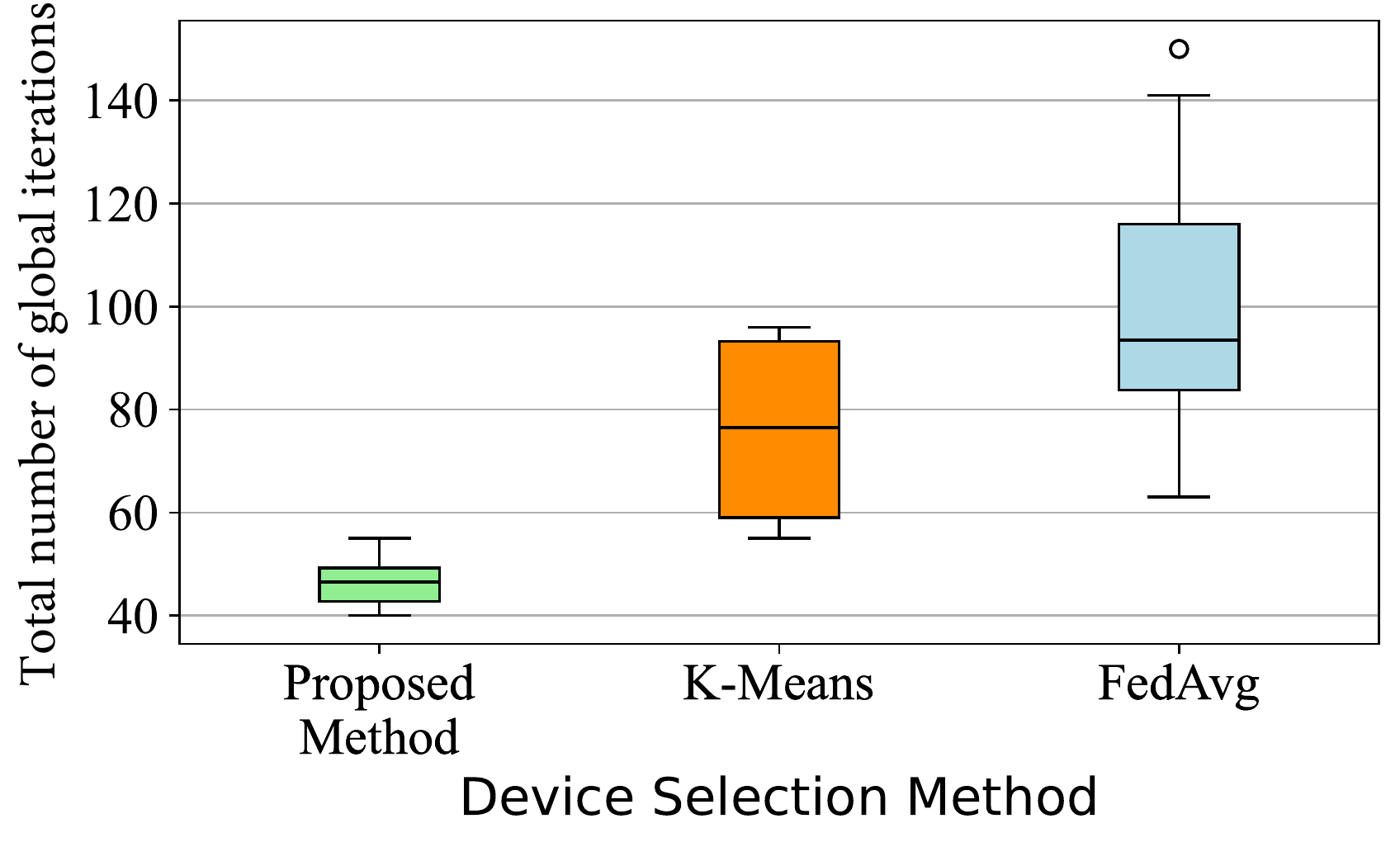}
\label{confusion_8}
%\caption{fig2}
\end{minipage}%
}%
\subfigure[CIFAR-10 ($\sigma = H$)]{
\begin{minipage}[t]{0.33\linewidth}
\centering
\includegraphics[width=6cm]{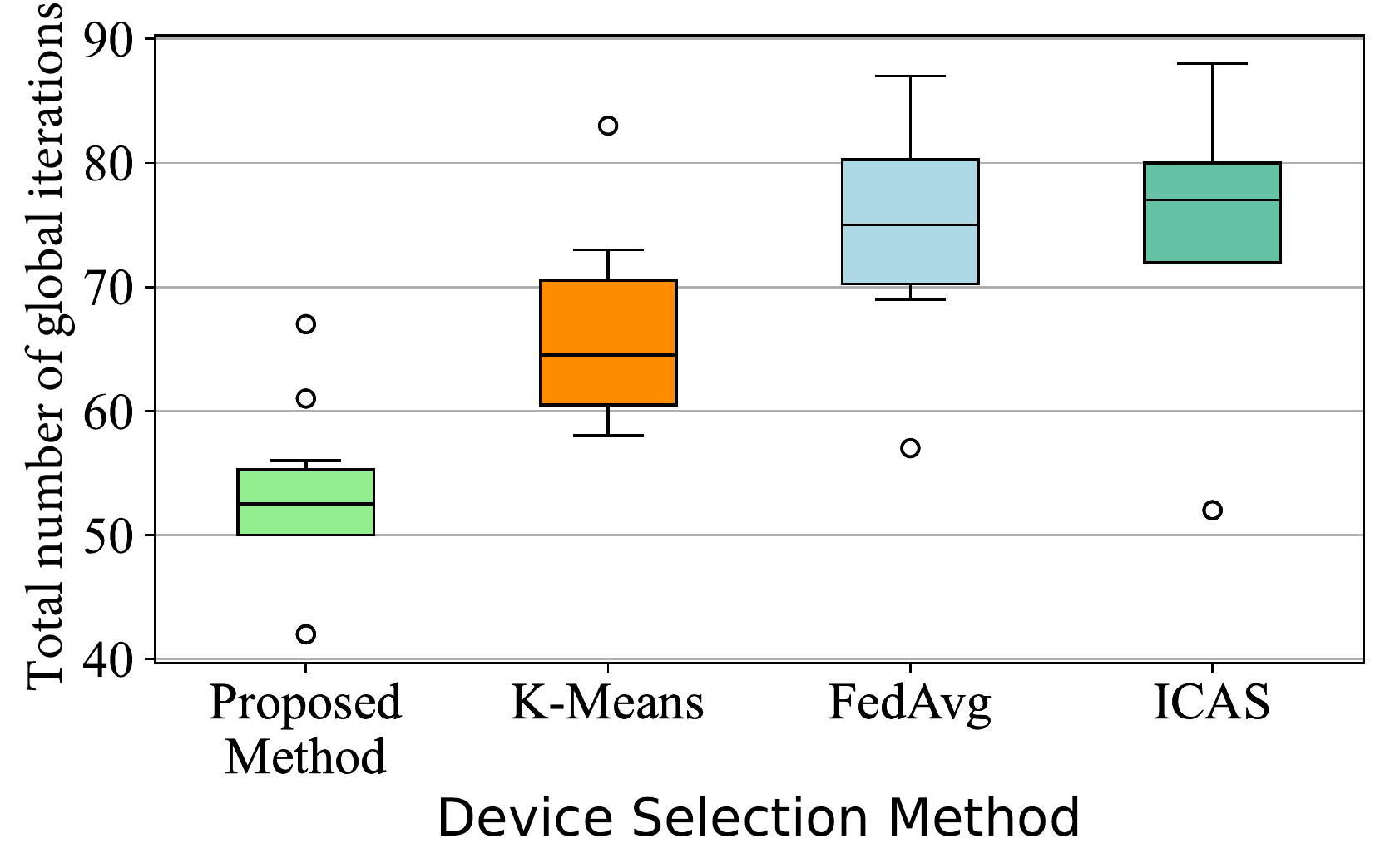}
\label{confusion_9}
%\caption{fig2}
\end{minipage}%
}%
\caption{Total number of global iterations on three datasets under different $\sigma\ (S = 10)$.}
\label{box}
\end{figure*}

\subsection{Evaluation of K-means in device clustering}
The model weights of different layers are used to train the K-means algorithm, respectively. Adjusted Rand index (ARI)~\cite{Lawrence85,Nguyen10} and the computational running time for training the K-means algorithm are two criteria for evaluating the performance of the algorithm. The ARI measures similarity between the predicted cluster labels and the ground truth (i.e., the majority labels of the clusters):
\begin{align}
\begin{split}
ARI(\bm{\mathcal{N}}, \bm{\mathcal{U}}) &= [2(\beta_{00}\beta_{11}-\beta_{01}\beta_{10})]/[(\beta_{00}+\beta_{01})(\beta_{01}+\beta_{11})\\&+(\beta_{00}+\beta_{10})(\beta_{10}+\beta_{11})],
\end{split}%\tag{$22$}
\label{ARI_equa}
\end{align}
where $\bm{\mathcal{N}} = \{\mathcal{N}_1,...,\mathcal{N}_c\}$ is the predicting clustering result, $\bm{\mathcal{U}} = \{\mathcal{U}_1,...,\mathcal{U}_c\}$ is the ground truth, $\beta_{11}$ is the number of pairs that are in the same cluster in both $\bm{\mathcal{N}}$ and $\bm{\mathcal{U}}$, $\beta_{00}$ is the number of pairs that are in different clusters in both $\bm{\mathcal{N}}$ and $\bm{\mathcal{U}}$, $\beta_{01}$ is the number of pairs that are in the same cluster in $\bm{\mathcal{N}}$ but in different clusters in $\bm{\mathcal{U}}$, and $\beta_{10}$ is the number of pairs that are in different clusters in $\bm{\mathcal{N}}$ but in the same cluster in $\bm{\mathcal{U}}$. The ARI will be close to 0 if two data clusters do not have any overlapped pair of samples and exactly 1 if the clusters are the same. 

Fig.~\ref{T_Kmeans} provides the computational running time of training the K-means algorithm with the model weights of different layers. According to Table~\ref{numberweight} and Fig.~\ref{T_Kmeans}, the number of model weights directly affect the training time. The highest time delay is obtained when all the model weights are used for training the K-means algorithm. In contrast, the time delay becomes significantly short when the K-means algorithm is trained by low-dimensional model weights or biases. 

Fig.~\ref{ARI} shows the ARI of the K-means algorithm on three datasets. The K-means algorithm achieves the highest overall ARI under $\sigma=0.8$ while obtaining the smallest overall ARI under $\sigma=H$. This is because a high $\sigma$ indicates that most of the samples in the local dataset belong to the majority class. As a result, the model weights trained on two local datasets with different majority classes are distinguishable, thus resulting in better cluster performance. In terms of $\sigma=H$, the secondary classes cause a lower ARI, since some local devices are grouped into the clusters of which labels are the secondary classes of those local datasets.

From Fig.~\ref{T_Kmeans} and Fig.~\ref{ARI}, it can be seen that the K-means algorithm trained by $\emph{w}^{fc_2}$ achieves both short training time and high ARI no matter how $\sigma$ changes. Therefore, $\emph{w}^{fc_2}$ is adopted as the feature vector for training the K-means algorithm.

\renewcommand{\thefigure}{12}
\begin{figure*}[t]
\centering
\begin{minipage}[t]{0.333\linewidth}
\centering
\includegraphics[width=6cm]{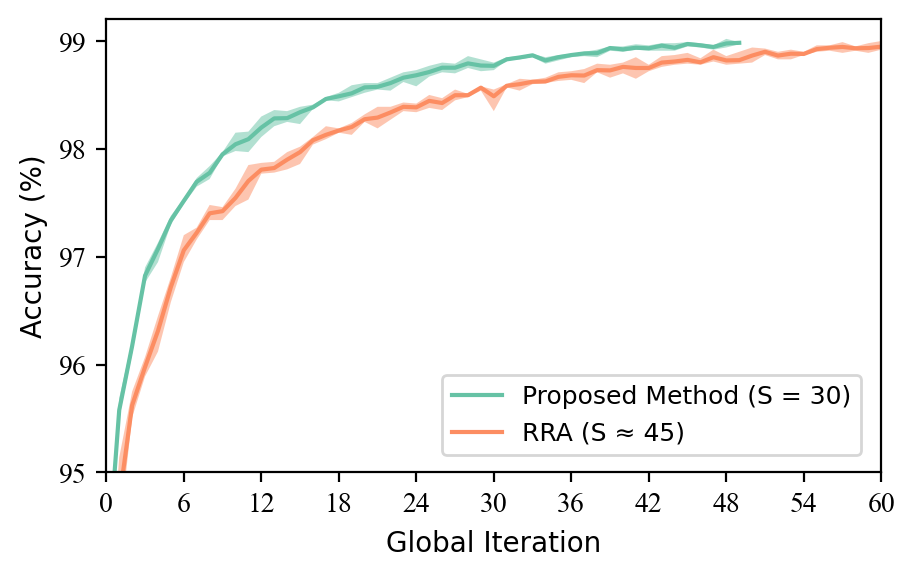}
\end{minipage}%
\begin{minipage}[t]{0.333\linewidth}
\centering
\includegraphics[width=6cm]{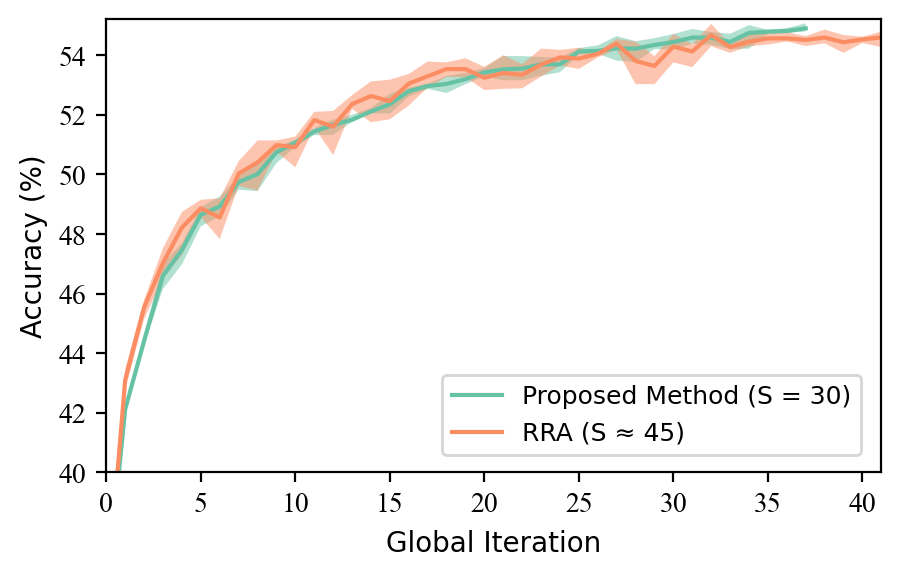}
\end{minipage}%
\begin{minipage}[t]{0.333\linewidth}
\centering
\includegraphics[width=6cm]{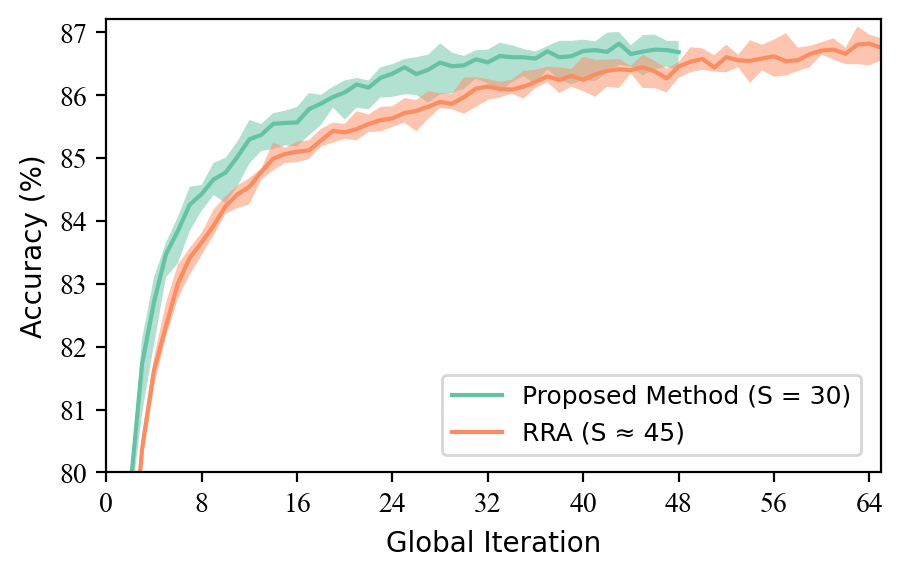}
\end{minipage}%

\begin{minipage}[t]{0.333\linewidth}
\centering
\includegraphics[width=6cm]{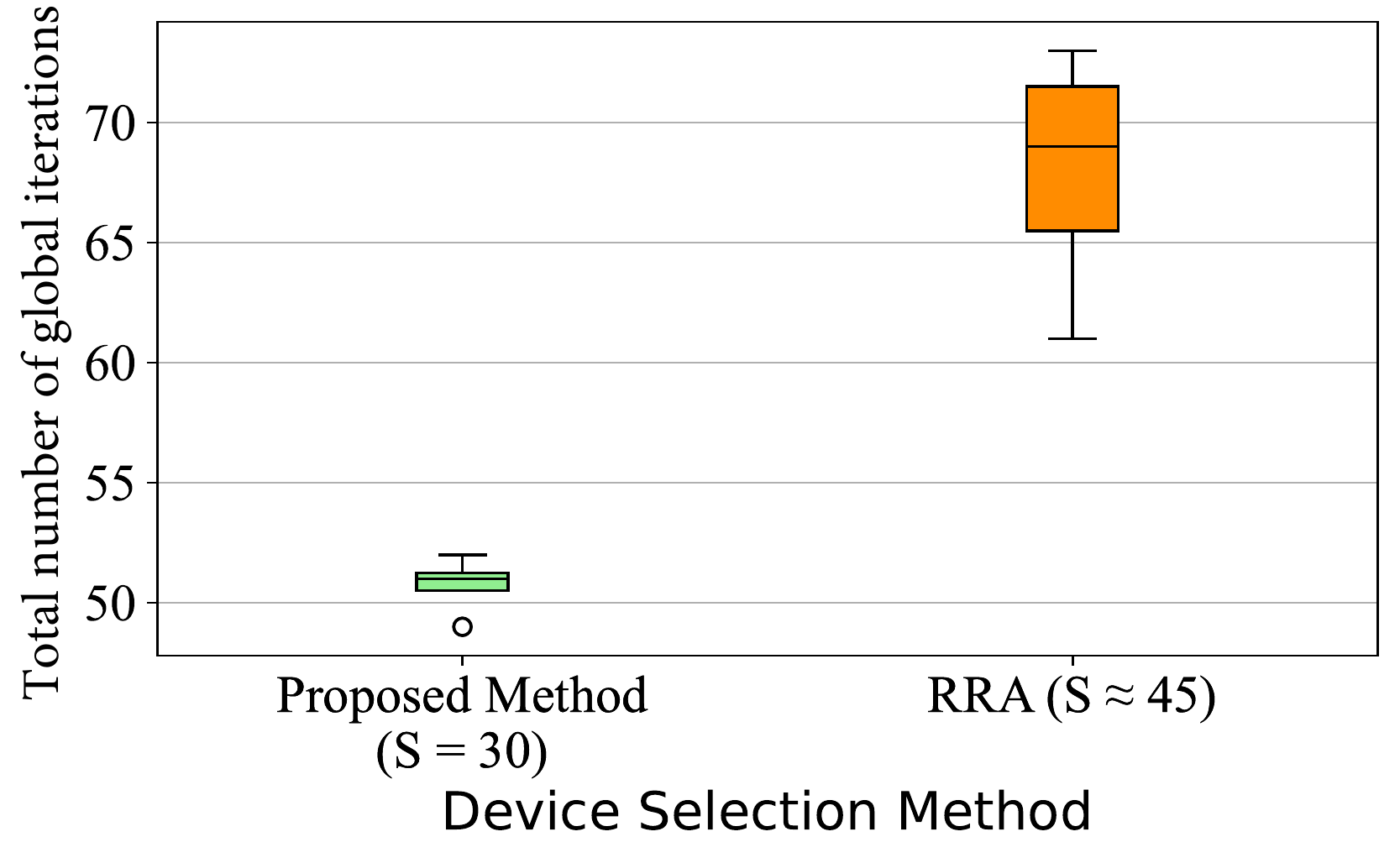}
\label{radio_1}
\caption*{(a) MNIST}
\end{minipage}%
\begin{minipage}[t]{0.333\linewidth}
\centering
\includegraphics[width=6cm]{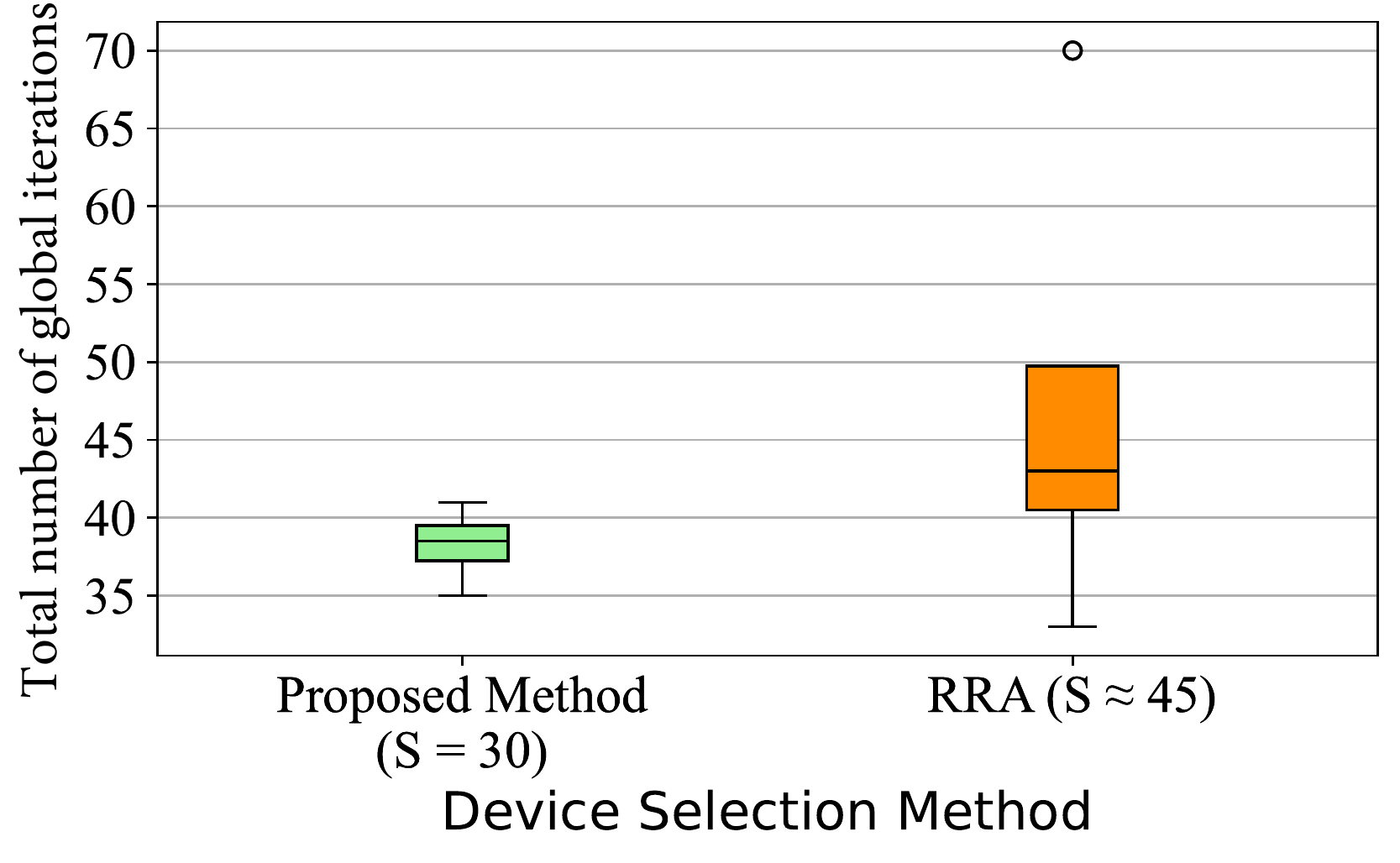}
\label{radio_2}
\caption*{(b) CIFAR-10}
\end{minipage}%
\begin{minipage}[t]{0.333\linewidth}
\centering
\includegraphics[width=6cm]{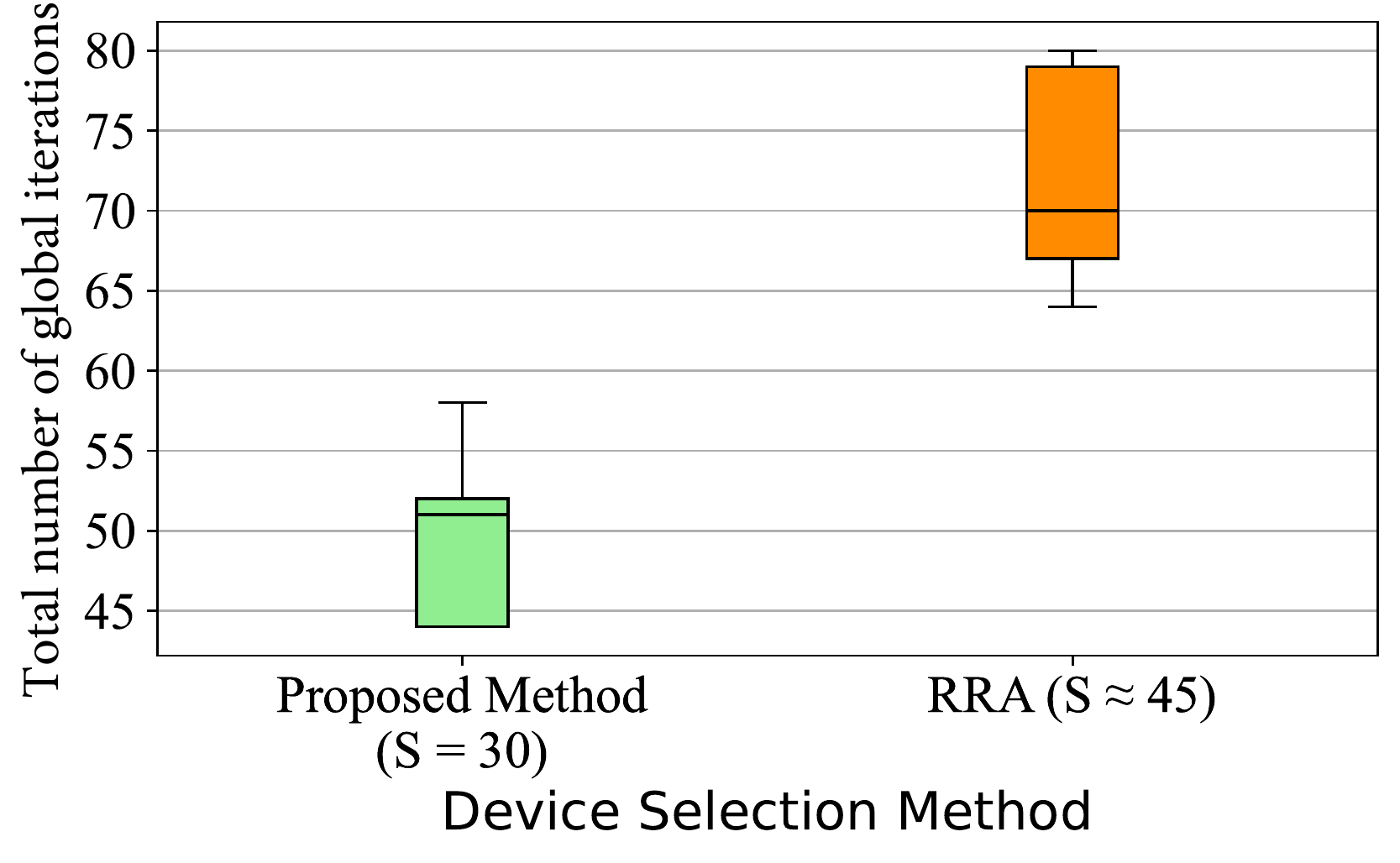}
\label{radio_3}
\caption*{(c) FashionMNIST}
\end{minipage}%
\caption{Accuracy and total number of global iterations  on three datasets under $\sigma = 0.8$.}
\label{radio}
\end{figure*}

\subsection{Evaluation of the device selection method}
At the first global iteration, all the local devices participate in the local update and global aggregation. Then, the weight divergence based device selection method is adopted to choose one local device from each cluster for the following iteration. First, we implement FedAvg~\cite{McMahan17}, the K-means method, and ICAS~\cite{Ren20} to perform device selection for comparison. FedAvg chooses ten devices randomly at each global iteration, while the K-means method randomly chooses one device from each cluster. ICAS conducts device selection based on the importance of the local learning update. We repeat the FL training ten times under $S = 10$ and then evaluate the convergence performance of four device selection methods. 

The accuracy curves on the testing set are depicted in Fig.~\ref{Accuracy}. The solid curves represent the mean, and the shaded regions correspond to the minimum and maximum returns over the ten trials. In most cases, the accuracy curves of the proposed method and the K-means method are above the curve of FedAvg. This is because the proposed method and the K-means method divide the local devices into several clusters based on the majority class. By choosing the local devices from each cluster, the bias incurred by non-iid datasets is significantly alleviated. Besides, we observe that the accuracy curve of the proposed method increases fastest among these device selection methods. This experimental result shows that the local devices chosen by the proposed method enable the FL model to reach convergence swiftly. Furthermore, the shaded regions of our method are the smallest, which demonstrates the stability of our method. Fig.~\ref{box} provides the total number of global iterations of FL under different device selection methods. Note that Fig.~\ref{confusion_8} does not include ICAS. This is because given the target accuracy 55\%, the accuracy of FL is only about 53\% when FL reaches convergence. Thus, we cannot obtain the total number of global iterations of ICAS. First, the confidence intervals of the proposed method and the K-means method are smaller than that of FedAvg. It further demonstrates that device clustering speeds up the FL training process. Secondly, the medians values of the proposed method are smaller than the two baselines. Thirdly, ICAS does not obtain good performance because the mechanism of ICAS is on the basis of the assumption: the loss function is Lipschitz continuous and strongly convex. Nevertheless, our paper adopts a non-convex function, hence degrading the performance of ICAS. The experimental results prove the effectiveness and feasibility of the weight divergence based device selection method. 
\begin{table*}[htbp]
\centering
\caption{Improvement scores over FedAvg}
\renewcommand\arraystretch{1.1}
\setlength{\tabcolsep}{3mm}{
\begin{tabular}{ccccccc}\toprule
\multirow{2}{*}{$\sigma$} & \multicolumn{2}{c}{MNIST}           & \multicolumn{2}{c}{FashionMNIST} & \multicolumn{2}{c}{CIFAR-10}   \\
                        & Favor~\cite{Wang20}            & Proposed Method  & Favor~\cite{Wang20}    & Proposed Method  & Favor~\cite{Wang20}    & Proposed Method  \\\midrule
0.5                     & 0.150         & \textbf{0.228} & 0.181 & \textbf{0.810} & 0.736 & \textbf{1.133} \\
0.8                     & \textbf{0.955} & 0.157         & 0.209 & \textbf{0.232} & 0.426 & \textbf{0.486} \\
H                    & 0                & \textbf{0.388}  & 0.187   & \textbf{1.204} & 0.340     & \textbf{0.641}\\ \bottomrule 
\end{tabular}}
\label{improve}
\end{table*}

\renewcommand{\thefigure}{13}
\begin{figure*}[t]
\centering
\begin{minipage}[t]{0.333\linewidth}
\centering
\includegraphics[width=6cm]{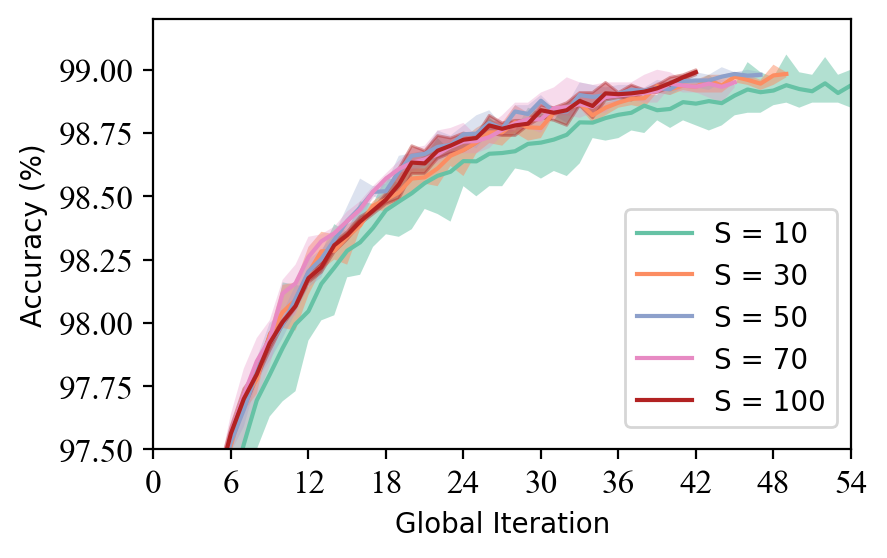}
\end{minipage}%
\begin{minipage}[t]{0.333\linewidth}
\centering
\includegraphics[width=6cm]{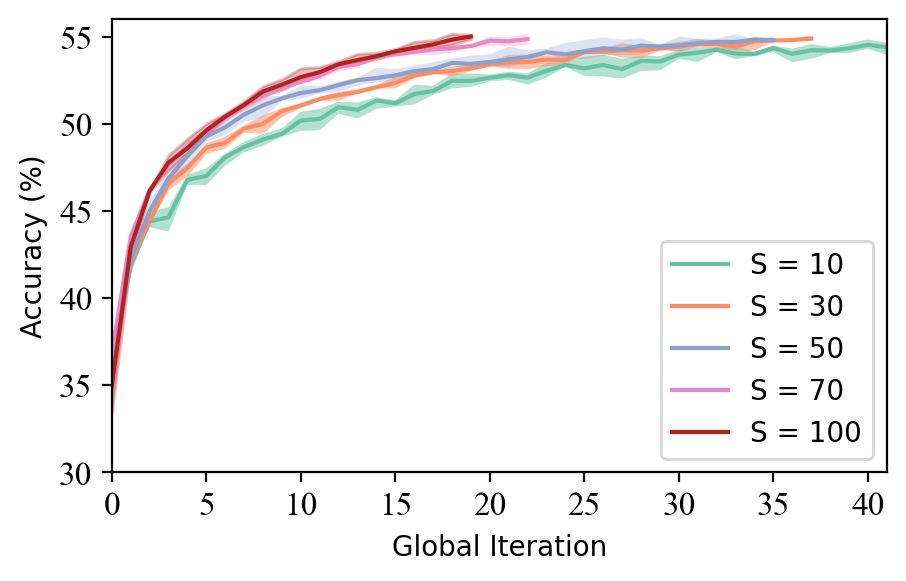}
\end{minipage}%
\begin{minipage}[t]{0.333\linewidth}
\centering
\includegraphics[width=6cm]{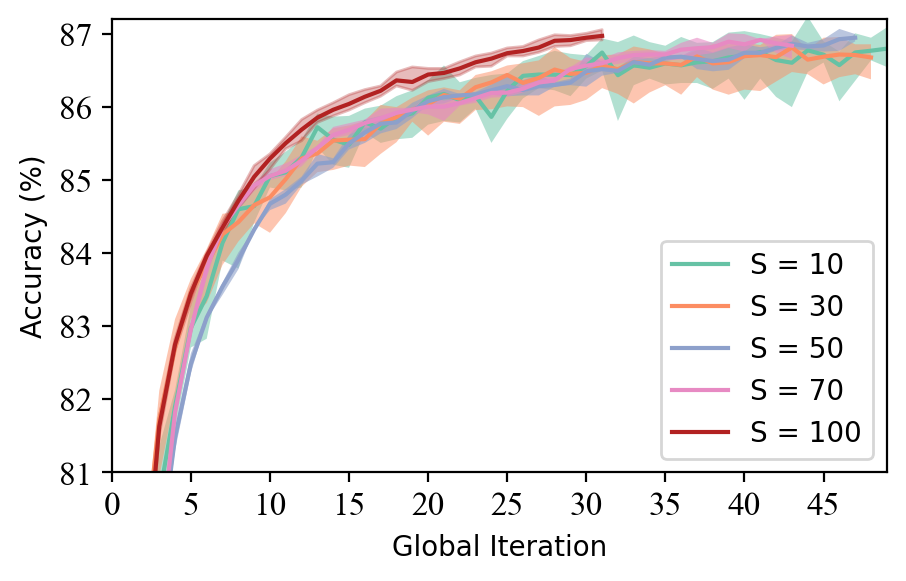}
\end{minipage}%

\begin{minipage}[t]{0.333\linewidth}
\centering
\includegraphics[width=6cm]{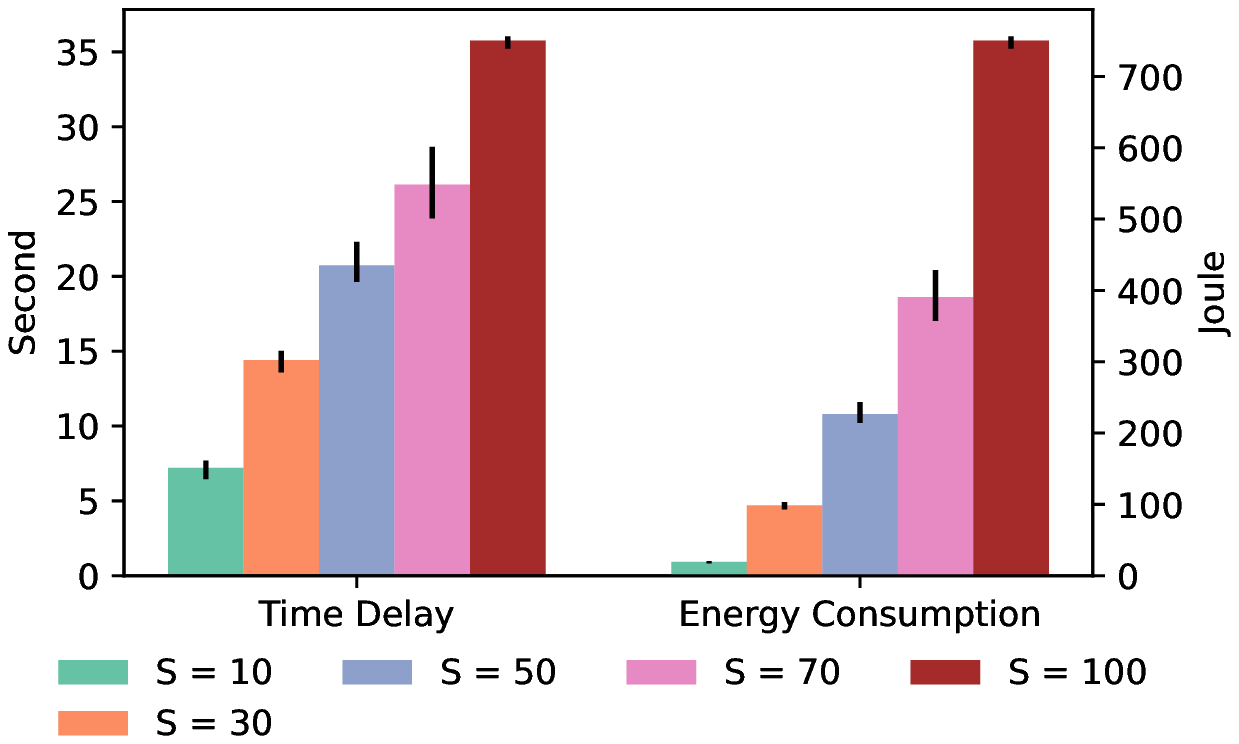}
\label{diffs_mnist}
\caption*{(a) MNIST}
\end{minipage}%
\begin{minipage}[t]{0.333\linewidth}
\centering
\includegraphics[width=6cm]{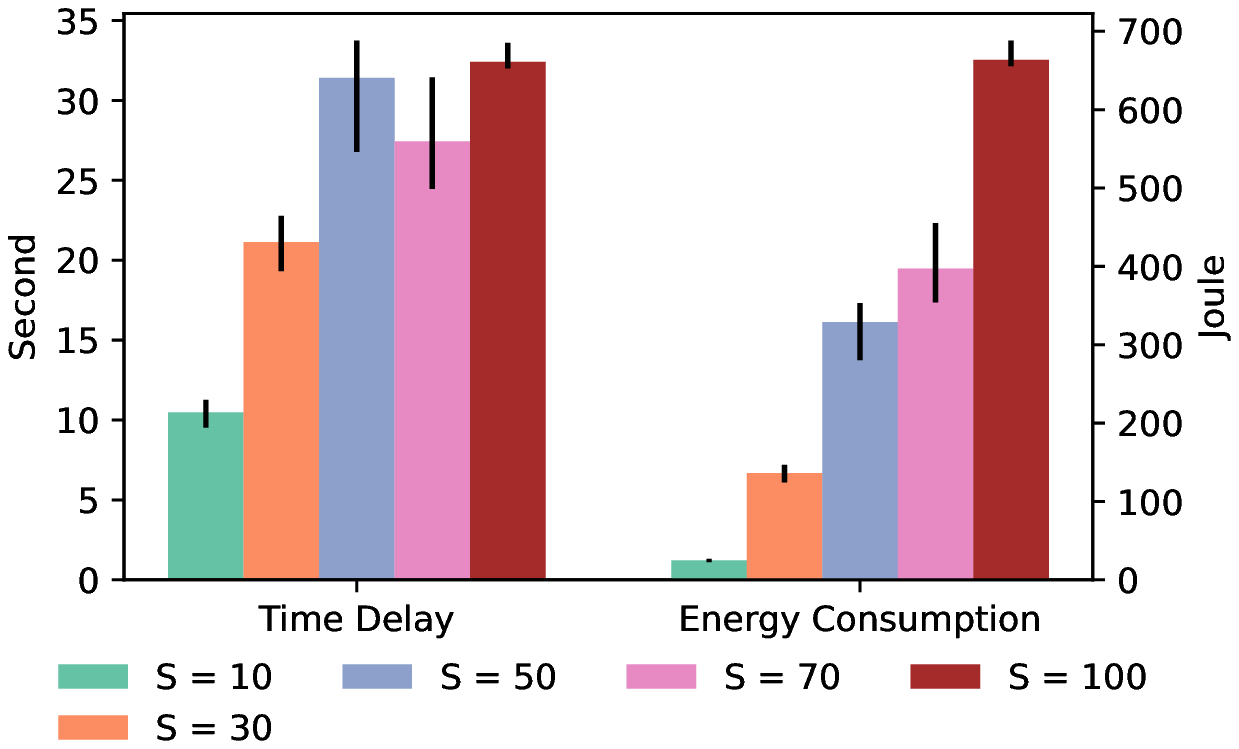}
\label{diffs_cifar}
\caption*{(b) CIFAR-10}
\end{minipage}%
\begin{minipage}[t]{0.333\linewidth}
\centering
\includegraphics[width=6cm]{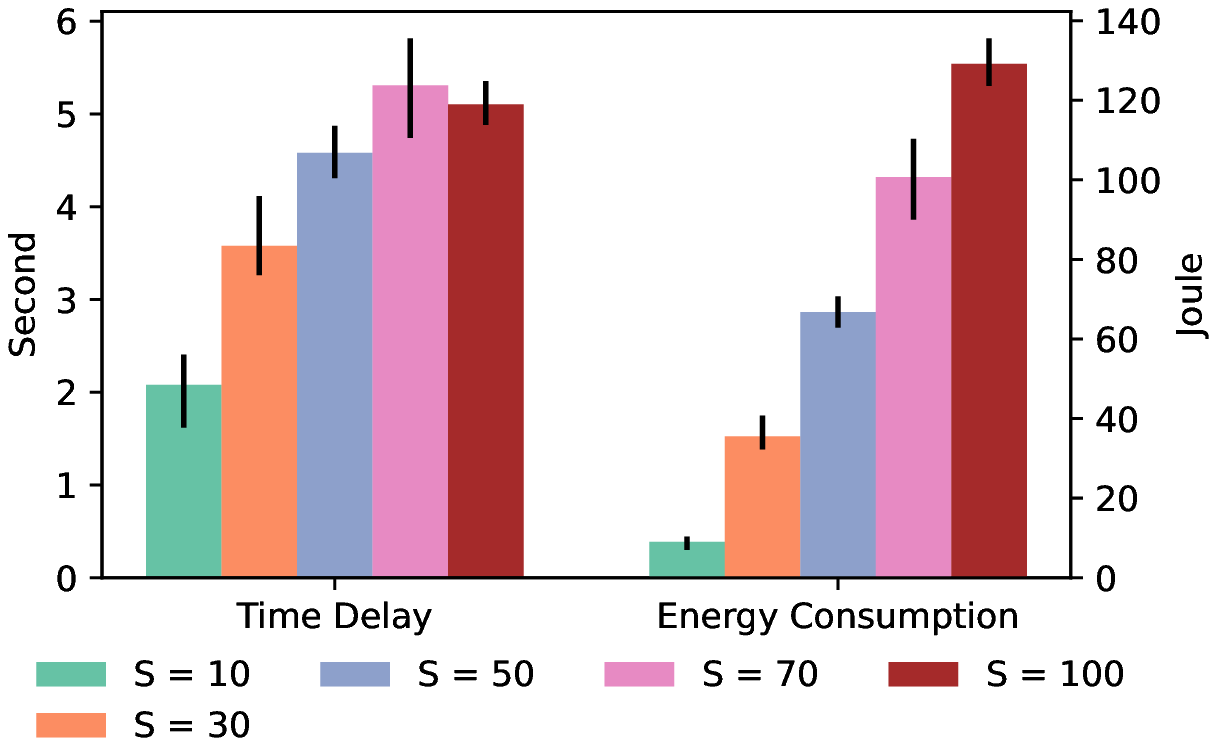}
\label{diffs_fashion}
\caption*{(c) FashionMNIST}
\end{minipage}%
\caption{Accuracy, time delay $T$, and total energy consumption $E$ of FL under different $S$.}
\label{diffs_all}
\end{figure*}

We also compare the proposed method with RRA~\cite{Zeng20}. RRA aims to reduce the total energy consumption while ensuring learning performance. For comparison, FL is trained using the proposed device selection method with $S = 30$ and RRA under $\sigma = 0.8$, respectively. Note that the value of $S_k$ in RRA varies at each iteration, and the average number of the selected devices is 45. Fig.~\ref{radio} depicts the accuracy and total number of global iterations of FL. It can be observed that our method enables FL to reach faster convergence with fewer devices. Besides, our method is more flexible than RRA as $S$ of our method is controllable.

In addition to the above baseline methods, a state-of-art device selection method, Favor~\cite{Wang20}, is used to compare with our method. Favor applies the reinforcement learning algorithm to select the optimal devices for training FL. However, the CNN architectures of the local models, the local datasets, and other parameter settings in Favor are different from ours. Therefore, we cannot directly cite the total number of global iterations of Favor into our work for comparison. As FedAvg was also implemented in Favor as a baseline, the performance of Favor and the proposed method can be compared by evaluating how much these methods improve the FL training over FedAvg. To this end, we introduce an improvement score as follows.
\begin{equation}
    score = \frac{R_\text{eval}}{R_\text{FedAvg}}-1,
\label{score}
\end{equation}
where $R_\text{eval}$ is the total number of global iterations of Favor or the proposed method, and $R_\text{FedAvg}$ is the corresponding total number of global iterations of FedAvg in each paper. In this work, the median values of total number of global iterations in Fig.~\ref{box} are used as $R_\text{eval}$. The improvement scores are shown in Table~\ref{improve}. We observe that the proposed method achieves higher improvement scores except on MNIST under $\sigma = 0.8$. Besides, the well-trained reinforcement learning models in Favor are no longer effective when dealing with new datasets. In contrast, the proposed method is flexible and can be directly employed even though the datasets are changed. In conclusion, the experimental results demonstrate that the weight divergence based device selection method outperforms Favor.

\subsection{Interplay between spectral allocation optimization and device selection}
To analyze the interplay between SAO and weight divergence based device selection, we train the FL models with different $S$ based on Fig.~\ref{WholeFramework}. Given the non-iid datasets with $\sigma = 0.8$, the accuracy curves on the testing sets are shown in Fig.~\ref{diffs_all}. Besides, the time delay $T$ and the total energy consumption $E$ are respectively calculated based on~\eqref{T_equa} and~\eqref{E_equa}, which are also depicted in Fig.~\ref{diffs_all}.

From Fig.~\ref{diffs_all} and the problem~\eqref{X}, we can observe that the number of the selected devices (i.e., $S$) affects $T$ and $E$ from two aspects. On the one hand, increasing $S$ enables FL to learn more knowledge at one global iteration, thus leading to fast convergence (i.e., a smaller $K$). Without device selection (i.e., $S = 100$), FL reaches convergence with the smallest $K$. On the other hand, a larger $S$ results in higher $E_k$ and $T_k$ at each global iteration due to the limited bandwidth resource. According to Fig.~\ref{diffs_all}, it can be noted that a smaller $S$ usually allows FL to achieve lower $E$ and $T$. Besides, the optimal $S$ for solving the problem~\eqref{A} is 10, where FL always attains the lowest $T$.

\section{Conclusions}~\label{section7}
In this paper, we studied the spectrum allocation optimization and device selection problems for enhancing FL over a wireless mobile network. We formulated the computation and communication models to calculate the energy consumption and time delay of FL. We formulated the optimization problem to minimize the time delay of training the entire FL algorithm given the individual energy constraint and the total bandwidth. We proposed an energy-efficient spectrum allocation optimization method and a weight divergence based device selection method to solve this problem. The proposed spectrum allocation optimization method aims to minimize the time delay under one global iteration while warranting the energy constraints. For device selection, we conducted device clustering by training the K-means algorithm with part of model weights in CNNs. We have proposed the weight divergence based device selection method to select the devices from each cluster. According to our experimental results, the proposed spectrum allocation optimization method optimizes the time delay of FL while satisfying the energy constraints; the proposed device clustering method realizes a fast training process and high clustering performance; the proposed device selection method helps FL to achieve convergence swiftly and outperforms other baselines.

\appendices
\section{Proof of LEMMA 1}
\begin{proof}
The objective function of the problem (19) is a convex function. Let $I_n(x) = \frac{1}{x\text{log}_2(1+\frac{J_n}{x})}$, where $x > 0$ and $\forall n \in \mathcal{S}_k$. Note that $x\text{log}_2(1+\frac{y}{x})$ is concave in $(x,y)$ as $\text{log}_2(1+y)$ is concave. Therefore, $x\text{log}_2(1+\frac{J_n}{x})$ is positive, increasing, and concave by fixing $y = J_n$. As the reciprocal of concave functions in the $R^+$ is convex, $\frac{1}{x\text{log}_2(1+\frac{J_n}{x})}$ is convex. As a result, the left-hand side of the constraints (19a)-(19d) are convex functions, and their epigraphs are convex sets. Therefore, (19a)-(19d) are convex constraints. This completes the proof.
\end{proof}

%1.the left hand side of 19 is convex 他的epigraph就是一个convex set, 所以这个不等式才是一个convex

\section{Proof of THEOREM 1}
\begin{proof}
As the optimization problem (19) is a convex problem, it satisfies the Karush-Kuhn-Tucker (KKT) conditions. By introducing Lagrange multipliers $\lambda^\ast_n \geq 0$, $\mu^\ast_n \geq 0$ and $\gamma^\ast \geq 0$, the partial Lagrange function $L$ is defined as
\begin{equation}
 \begin{aligned}
    &L = T_k^\ast + \sum_{n \in \mathcal{S}_k}\lambda^\ast_n (\frac{z_n}{b^\ast_n\text{log}_2(1+\frac{J_{n}}{b^\ast_n})} + \frac{U_n}{f^\ast_n} - T_k^\ast)+ \sum_{n \in \mathcal{S}_k}\mu^\ast_n  \\
      & (G_n (f^\ast_n)^2 + \frac{H_n}{b^\ast_n\text{log}_2(1+\frac{J_{n}}{b^\ast_n})} - e^{\text{cons}}_n)   + \gamma^\ast(\sum_{n \in \mathcal{S}_k}b^\ast_n - B) 
\end{aligned}   %\tag{$23$}
\end{equation}
where $b^\ast_n$, $f^\ast_n$, and $T_k^\ast$ are optimal solutions. The KKT conditions of the problem (19) can be written as follows:
\begin{align}
&\frac{\partial L }{\partial T_k^\ast} = 1 - \sum_{n \in \mathcal{S}_k}\lambda^\ast_n = 0, \ \forall n \in \mathcal{S}_k  \label{T} \\
&\frac{\partial L }{\partial b^\ast_n} = -\frac{(\lambda^\ast_n z_n + \mu^\ast_n H_n)\Theta_n(b^\ast_n)}{\left(b^\ast_n\text{log}_2(1+\frac{J_{n}}{b^\ast_n})\right)^2} + \gamma^\ast = 0,\ \forall n \in \mathcal{S}_k \label{bn}\\
% \qquad\qquad\qquad\qquad\qquad\qquad\qquad\qquad\qquad,
&\frac{\partial L }{\partial f^\ast_n} = -\frac{F_{n}\lambda^\ast_n}{{f^\ast_n}^2}+2\mu^\ast_n G_n f^\ast_n = 0,\ \forall n \in \mathcal{S}_k \label{fn}\\
&\lambda^\ast_n (\frac{z_n}{b^\ast_n\text{log}_2(1+\frac{J_{n}}{b^\ast_n})} + \frac{U_n}{f^\ast_n} - T_k^\ast) = 0, \ \forall n \in \mathcal{S}_k  \label{lambda}\\
&\mu^\ast_n (G_n (f^\ast_n)^2 + \frac{H_n}{b^\ast_n\text{log}_2(1+\frac{J_{n}}{b^\ast_n})} - e^{\text{cons}}_n) = 0, \ \forall n \in \mathcal{S}_k  \label{mu}\\
&\gamma^\ast(\sum_{n \in \mathcal{S}_k}b^\ast_n - B) = 0. \quad\quad\quad\  & \label{gamma}
\end{align}
where $\Theta_n(b^\ast_n) = \text{log}_2(1+\frac{J_n}{b^\ast_n}) + \frac{J_n}{\text{ln}2}\frac{1}{b^\ast_n + J_n}$. To have (20)-(22), we will prove that all Lagrange multipliers $\lambda^\ast_n$, $\mu^\ast_n$, and $\gamma^\ast$ are larger than 0.

Specifically, if there exists $n^\dag \in \mathcal{S}_k$ that satisfies $\lambda^\ast_{n^\dag} = 0$, we will have $\mu^\ast_{n^\dag} = \frac{F_{n^\dag}\cdot 0}{f^\ast_{n^\dag}}\frac{1}{2G_{n^\dag}f^\ast_{n^\dag}} = 0$ based on~\eqref{fn}. As $\mu^\ast_{n^\dag}$ and $\lambda^\ast_{n^\dag}$ are both 0, we have $\gamma^\ast = (0 \cdot z_{n^\dag} + 0 \cdot H_{n^\dag})\frac{\Theta_n(b^\ast_{n^\dag})}{\left(b^\ast_{n^\dag}\text{log}_2\left(1+J_{n^\dag}/b^\ast_{n^\dag}\right)\right)^2} = 0$ according to~\eqref{bn}. If we plug $\gamma^\ast = 0$ in~\eqref{bn}, we have $(\lambda^\ast_n z_n + \mu^\ast_n H_n)\frac{\Theta_n(b^\ast_n)}{\left(b^\ast_n\text{log}_2(1+\frac{J_{n}}{b^\ast_n})\right)^2} = \gamma^\ast = 0$, $\forall n \in \mathcal{S}_k$. As $\frac{\Theta_n(b^\ast_n)}{\left(b^\ast_n\text{log}_2(1+\frac{J_{n}}{b^\ast_n})\right)^2}$ is always larger than 0, the term $\lambda^\ast_n z_n + \mu^\ast_n H_n$ is equal to 0. As a result, for each $n \in \mathcal{S}_k$, $\lambda^\ast_n$ and $\mu^\ast_n$ are equal to 0 due to the fact that $z_n > 0$ and $H_n > 0$. According to~\eqref{T}, however, there must exist at least one positive $\lambda^\ast_{n}$, which contradicts the conclusion "$\lambda^\ast_n$ = 0, $\forall n \in \mathcal{S}_k$". 

Therefore, $\lambda^\ast_{n} > 0$, $\forall n \in \mathcal{S}_k$. According to~\eqref{bn} and~\eqref{fn}, $\mu^\ast_{n}$ and $\gamma^\ast$ are both positive variables. As a result, we can have (20), (21), and (22) according to~\eqref{lambda},~\eqref{mu}, and~\eqref{gamma}.
\end{proof}

% \textcolor{red}{
% \section{Proof of LEMMA 2}}

% \begin{proof}
% \textcolor{red}{
% If there exists $\widehat{T}_k < T^\ast_k$ that enables the optimization problem (19) to have a feasible solution ($\widehat{\bm{b}}, \widehat{\bm{f}}$), $T^\ast_k$ will be no longer the optimal solution as the objective function of the problem (19) is smaller with $\widehat{T}_k$. It contradicts the fact that $T^\ast_k$ is the optimal solution of the problem (19).}
% \textcolor{red}{
% When $T_k > T^\ast_k$, there always exists a feasible solution ($\bm{b}, \bm{f}$) to satisfy the constraints (19a)\textasciitilde (19d).}
% \end{proof}

\section{Proof of LEMMA 2}
\begin{proof}
The first and second order derivatives of $Q_n(x)$ are derived as
\begin{align}
    Q_n'(x) &= \log_2(1+\frac{J_n}{x}) - \frac{J_n}{\ln2 (x+J_n)} = h_n(x) \label{Qd}\\
    Q_n''(x) &= h_n'(x) = \frac{-J_n^2}{x(x+J_n)^2\ln2} < 0 \label{Qdd}
\end{align}
According to~\eqref{Qdd}, $Q_n'(x)$ is a decreasing function. As $\text{lim}_{x \rightarrow +\infty}Q_n'(x) = 0$, we have $Q_n'(x) > 0$. Therefore, $Q_n(x)$ is an increasing function. By leveraging the inequality $\frac{1}{\ln2}(x-1) > \text{log}_2(x)$, we have $x\log_2(1+\frac{J_n}{x}) < x\frac{1}{\ln 2}\frac{J_n}{x} = \frac{J_n}{\ln2}$.
\end{proof}

\section{Proof of LEMMA 3}
\begin{proof}

Let $X_n = \frac{H_n T_k}{z_n G_n}-\frac{e^{\text{cons}}_n}{G_n}$, $Y_n = \frac{H_n U_n}{z_n G_n} > 0$ . Therefore, $M(f_n)$ can be presented as $f_n^3 + X_nf_n - Y_n$. The first order derivative of $M(f_n)$ is:
\begin{align}
     M'(f_n) = 3f_n^2 + X_n. 
\end{align}

If $X_n \geq 0$, $M'(f_n) > 0$ in $(0, +\infty)$, and $M(f_n)$ is a monotonically increasing function. As $M(0) = -Y_n < 0$, $M(f_n)$ will have only one root in $(0, +\infty)$.

If $X_n < 0$, $M'(f_n) < 0$ in $(0, \sqrt{-\frac{X_n}{3}})$ while $M'(f_n) > 0$ in $(\sqrt{-\frac{X_n}{3}}, +\infty)$. Therefore, $M(f_n)$ is a monotonically decreasing function in $(0, \sqrt{-\frac{X_n}{3}})$ and a monotonically increasing function in $(\sqrt{-\frac{X_n}{3}}, +\infty)$. Since $M(0) = -Y_n < 0$, there will exist only one root $f_n^\dag \in (\sqrt{-\frac{X_n}{3}}, +\infty)$. $M(f_n) < 0$ in $(0, f_n^\dag)$ while $M(f_n) > 0$ in $(f_n^\dag, +\infty)$.

\end{proof}

\section{Optimize the transmit power $p$}

The problem (19) considers the transmit power $p_n$ as a constant. To further improve energy efficiency of FL, we propose a preliminary method to optimize the transmit power given an upper bound $p^\text{max}$ and a lower bound $p^\text{min}$. According to Equation (19a) and (19b), a larger $p_n$ reduces the time delay $T_k$ and increases the energy consumption $E_k$. Besides, increasing $p_n$ will reduce $f_n$ due to the energy constraint. It is crucial to balance $f_n$ and $p_n$ for solving the problem (19). To this end, we design a binary search algorithm to obtain the optimal transmit power $p^\ast$ as shown in Algorithm 6. Note that Algorithm 6 assumes all the devices have the same transmit power (i.e., $p_1 = ... = p_n = p$).

\begin{algorithm}[t]
\textsl{}\setstretch{1}
\caption*{\textbf{Algorithm 6} Optimal Transmit Power of FL}
\begin{algorithmic}[1] 
\Require{$p^\text{max}$, $p^\text{min}$}
\Ensure{$p^\ast$}
\State $p^\text{up} = p^\text{max}$, $p^\text{low} = p^\text{min}$
\State $epoch = 0$, $\mathcal{T} \leftarrow \{\}$, $p = p^\text{low}$
\While {1 - $\frac{p^\text{low}}{p^\text{up}} > \varepsilon_3$}
    \State Obtain $T_k$ using Algorithm 5
    \If{$epoch > 0$}
        \If{$T_k \leq \min{\mathcal{T}}$}
            \State $p^\text{low} = p$
        \Else
            \State $p^\text{up} = p$
        \EndIf
    \EndIf
    \State $p = \frac{p^\text{up}+p^\text{low}}{2}$
    \State $\mathcal{T} \leftarrow T_k$
    \State $epoch = epoch + 1$
\EndWhile
\State \Return $p^\ast$
\end{algorithmic} 
\label{optimalP}
\end{algorithm}

\renewcommand{\thefigure}{14}
\begin{figure}[t]
  \centering
  \includegraphics[width=7.5cm]{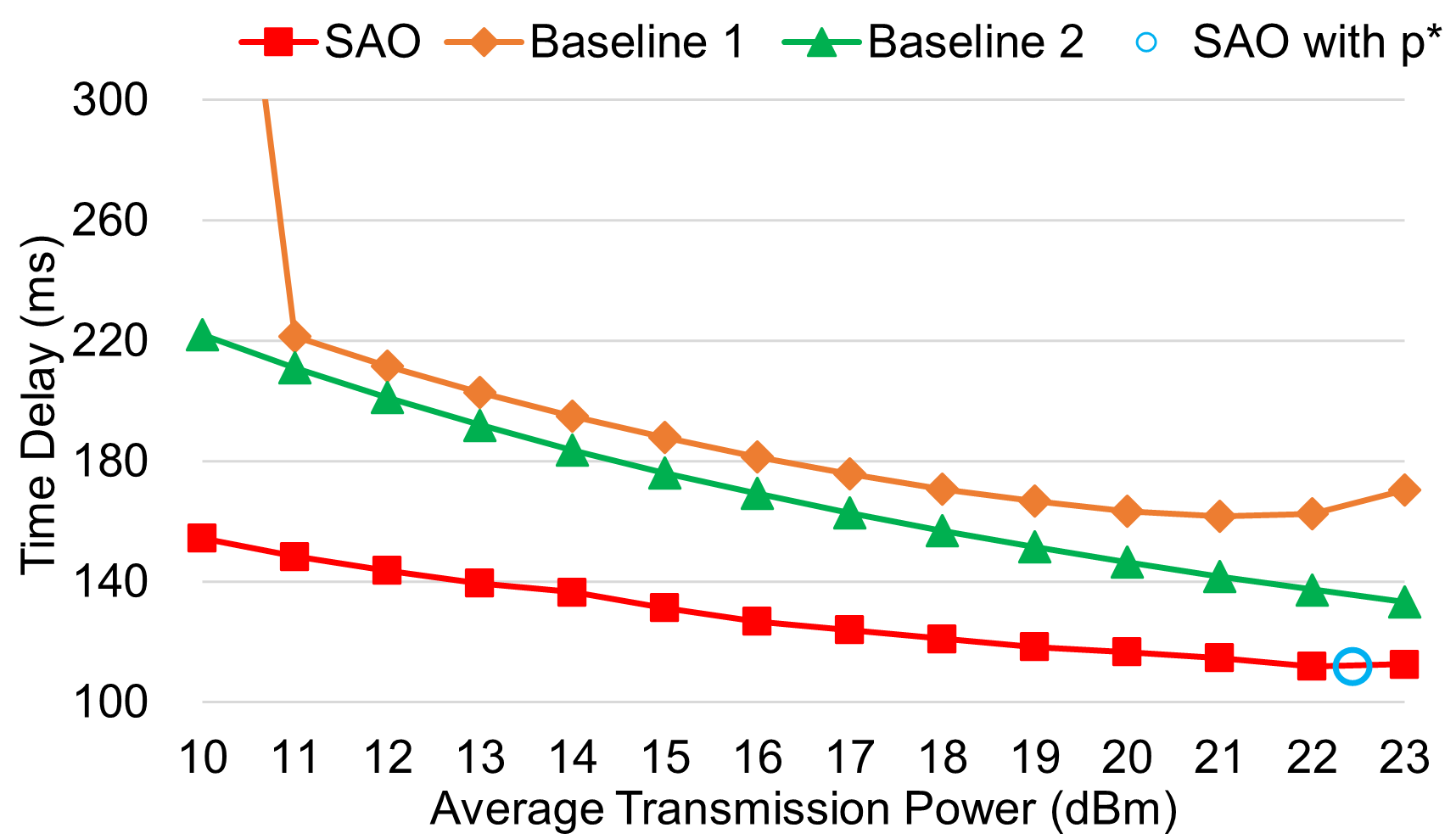}
\caption{Time delay versus average transmission power. $p^\text{min} = 10$ dBm, and $p^\text{max} = 23$ dBm.}
\label{optimal_p}
\end{figure}

As shown in Fig.~\ref{optimal_p}, without Algorithm 6, SAO achieves the lowest time delay (111.84 ms) when $p = 22$ dBm. The optimal transmit power $p^\ast$ derived from Algorithm 6 is 22.45 dBm, where the time delay is 111.49 ms. The result demonstrates that Algorithm 6 enables FL to find the optimal transmit power.

The complexity of SAO with $p^\ast$ is $\mathcal{O}\left( (S_k)^2\log_2{\left( \frac{1}{\varepsilon_0}\right)}\log_2{\left( \frac{1}{\varepsilon_1}\right)}\log_2{\left( \frac{1}{\varepsilon_2}\right)}\log_2{\left( \frac{1}{\varepsilon_3}\right)}  \right)$, where $\varepsilon_3$ is the accuracy of searching for $p^\ast$.

\bibliographystyle{IEEEtran}
\bibliography{sample-base}

\begin{IEEEbiography}[{\includegraphics[width=1in,clip,keepaspectratio]{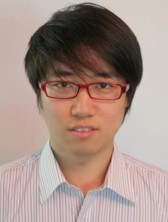}}]{Tinghao Zhang} is currently a Ph.D student in School of Computer Science and Engineering at Nanyang Technological University, Singapore. His main research interests include federated learning, edge computing, and wireless communication. He obtained his B.S. degree in the School of Electrical Engineering and Automation and M.S. degree in the School of Instrumentation Science and Engineering from Harbin Institute of Technology in 2014 and 2018, respectively.
\end{IEEEbiography}
\begin{IEEEbiography}[{\includegraphics[width=1in,clip,keepaspectratio]{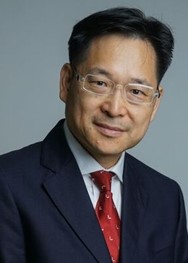}}]{Kwok-Yan Lam}(Senior Member, IEEE) received his B.Sc. degree ($1^\text{st}$ Class Hons.) from University of London, in 1987, and Ph.D. degree from University of Cambridge, in 1990. He is the Associate Vice President (Strategy and Partnerships) and Professor in the School of Computer Science and Engineering at the Nanyang Technological University, Singapore. He is currently also the Executive Director of the National Centre for Research in Digital Trust, and Director of the Strategic Centre for Research in Privacy-Preserving Technologies and Systems (SCRiPTS). From August 2020, he is on part-time secondment to the INTERPOL as a Consultant at Cyber and New Technology Innovation. Prior to joining NTU, he has been a Professor of the Tsinghua University, PR China (2002–2010) and a faculty member of the National University of Singapore and the University of London since 1990. He was a Visiting Scientist at the Isaac Newton Institute, Cambridge University, and a Visiting Professor at the European Institute for Systems Security. In 1998, he received the Singapore Foundation Award from the Japanese Chamber of Commerce and Industry in recognition of his research and development achievement in information security in Singapore. He is the recipient of the Singapore Cybersecurity Hall of Fame Award in 2022. His research interests include Distributed Systems, Intelligent Systems, IoT Security, Distributed Protocols for Blockchain, Homeland Security and Cybersecurity.
\end{IEEEbiography}
\begin{IEEEbiography}[{\includegraphics[width=1in,clip,keepaspectratio]{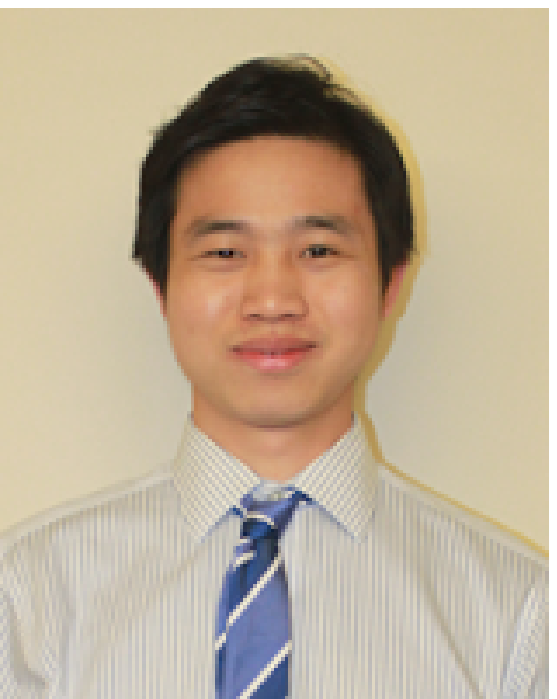}}]{Jun Zhao}(Senior Member, IEEE) (S'10-M'15) is currently an Assistant Professor in the School of Computer Science and Engineering (SCSE) at Nanyang Technological University (NTU) in Singapore. He received a PhD degree in May 2015 in Electrical and Computer Engineering from Carnegie Mellon University (CMU) in the USA (advisors: Virgil Gligor, Osman Yagan; collaborator: Adrian Perrig), affiliating with CMU's renowned CyLab Security \& Privacy Institute, and a bachelor's degree in July 2010 from Shanghai Jiao Tong University in China. Before joining NTU first as a postdoc with Xiaokui Xiao and then as a faculty member, he was a postdoc at Arizona State University as an Arizona Computing PostDoc Best Practices Fellow (advisors: Junshan Zhang, Vincent Poor).
\end{IEEEbiography}
\begin{IEEEbiography}[{\includegraphics[width=1in,clip,keepaspectratio]{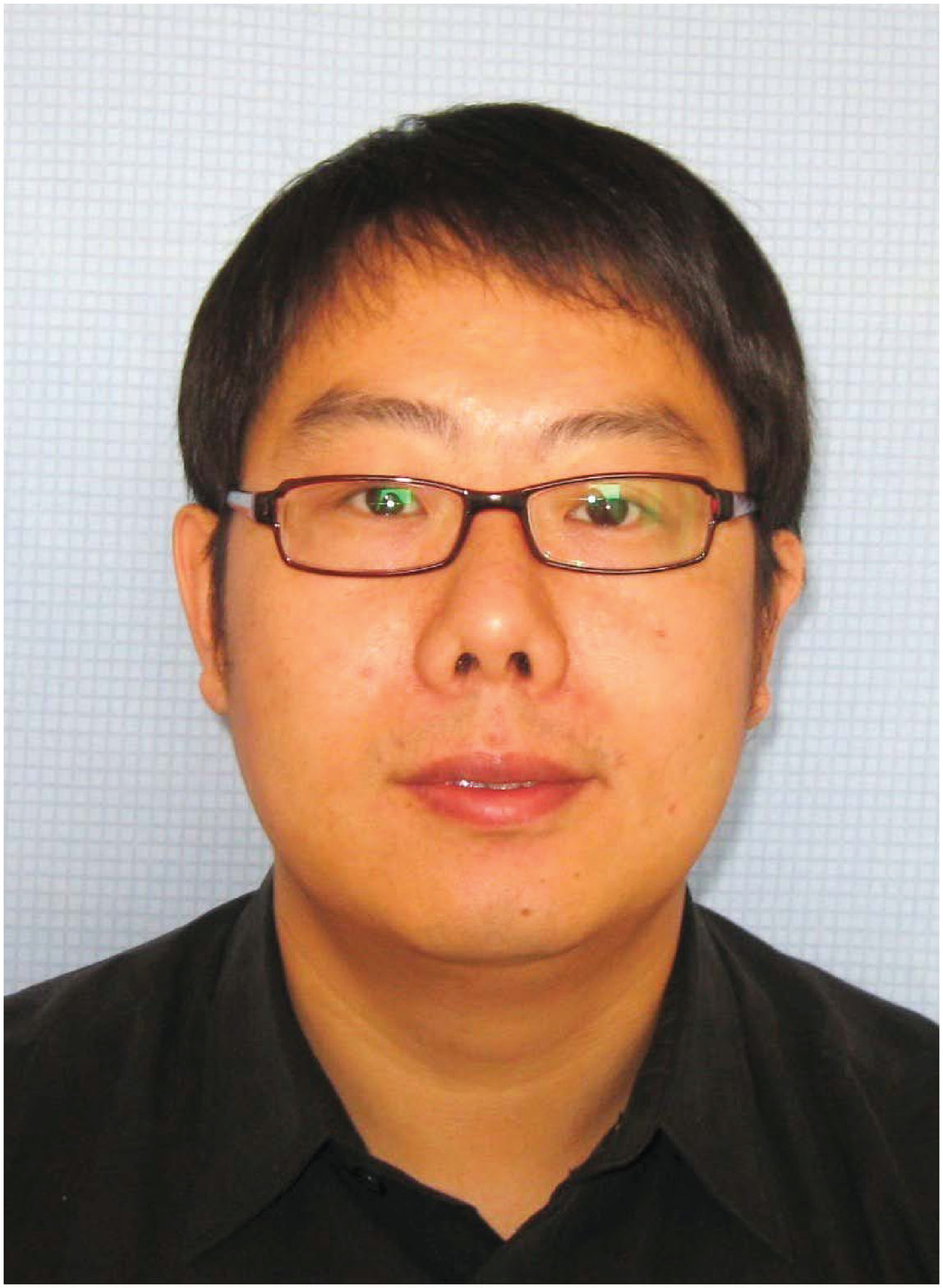}}]{Feng Li}
received his Ph.D degree from the Harbin Institute of Technology, Harbin, China in 2013. He is a full Professor at School of Information and Electronic Engineering, Zhejiang Gongshang University. F. Li is also at School of Computer Science and Engineering, Nanyang Technological University. His research interests include cognitive radio networks, sensor networks and satellite systems.
\end{IEEEbiography}
\begin{IEEEbiography}[{\includegraphics[width=1in,clip,keepaspectratio]{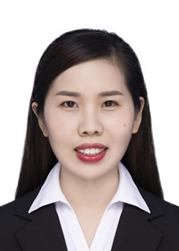}}]{HuiMei Han} obtained her Ph.D. degree in telecommunication engineering from Xidian University, Xi'an, P. R. China in 2019. From 2017 to 2018, she was with Florida Atlantic University, USA, as an exchange Ph.D. student.  She has been with College of Information Engineering, Zhejiang University of Technology since 2019, and with School of Computer Science and Engineering, Nanyang Technological University, Singapore, as a Research Fellow since 2021. Her current research interests include random access schemes for massive MIMO systems, machine-to-machine communications, machine learning, intelligent reflecting surface, and federated learning.
\end{IEEEbiography}
\begin{IEEEbiography}[{\includegraphics[width=1in,clip,keepaspectratio]{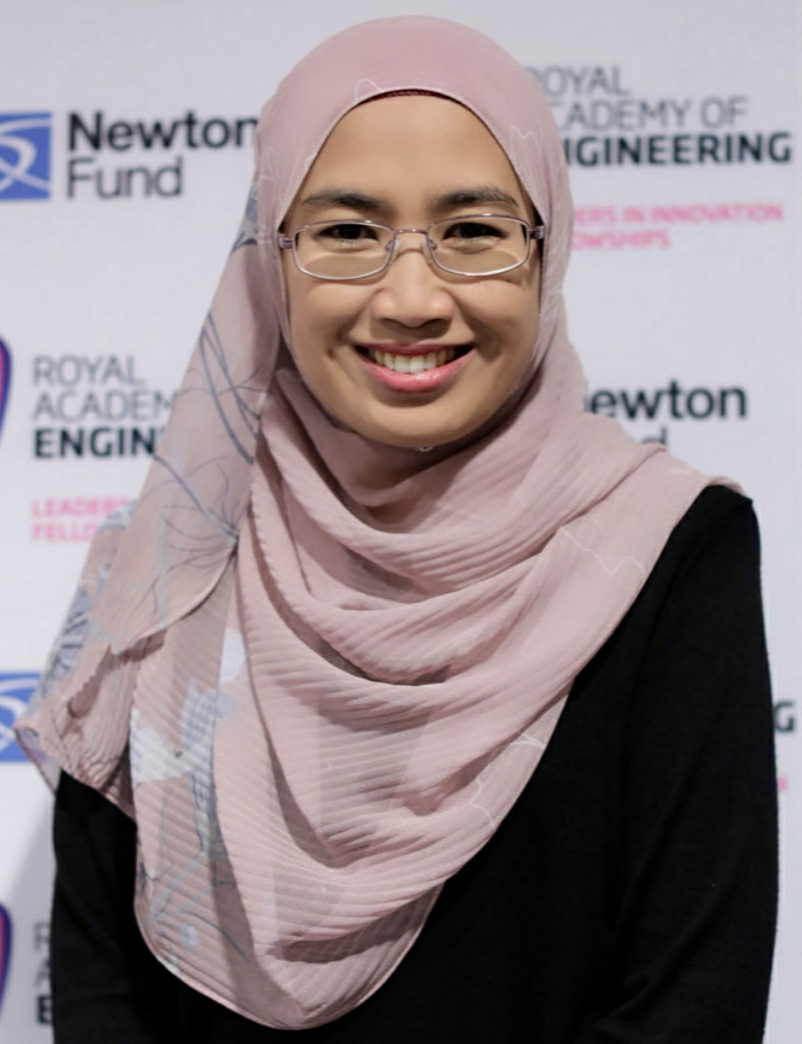}}]{Norziana Jamil}
received her PhD in Security in Computing in 2013. She is now an Associate Professor at the University Tenaga Nasional, Malaysia. Her area of research specialization and interest includes Cryptography, security for Cyber-Physical Systems, security analytics and intelligent system. She is an alumni of Leadership in Innovation Fellowship by UK Royal Academy of Engineering, a Project Leader and consultant of various cryptography and cyber security related research and consultancy projects, has been actively involving in advisory for cryptography and cyber security projects, and works with several international prominent researchers and professors.
\end{IEEEbiography}

% % if you will not have a photo at all:
% \begin{IEEEbiographynophoto}{John Doe}
% Biography text here.
% \end{IEEEbiographynophoto}

% % insert where needed to balance the two columns on the last page with
% % biographies
% %\newpage

% \begin{IEEEbiographynophoto}{Jane Doe}
% Biography text here.
% \end{IEEEbiographynophoto}

\end{document}